\newtheorem{theorem}{Theorem}[section]
\newtheorem{lemma}[theorem]{Lemma}
\newtheorem{construction}[theorem]{Construction}
\newtheorem{fact}[theorem]{Fact}
\newtheorem{observation}[theorem]{Observation}
\newtheorem{definition}[theorem]{Definition}
\newtheorem{proposition}[theorem]{Proposition}
\theoremstyle{remark}
\newcommand\bigO{\mathcal{O}}
\newcommand{\SA}{\text{\rm SA}}
\newcommand{\LCP}{\text{\rm LCP}}
\newcommand{\BWT}{\text{\rm BWT}}
\newcommand{\lcp}{\text{\rm lcp}}
\newcommand{\per}{\text{\rm per}}
\newcommand{\type}{\text{\rm type}}
\newcommand{\pos}{\text{\rm pos}}
\newcommand{\Lroot}{\text{\rm L-root}}
\newcommand{\Lexp}{\text{\rm L-exp}}
\newcommand{\revbits}[1]{\overline{#1}}
\newcommand{\id}{\mathsf{id}}
\newcommand{\dotdot}{\mathinner{\ldotp\ldotp}}
\newcommand{\dd}{\dotdot}
\newcommand{\LCE}{\operatorname{LCE}}
\newcommand{\PP}{\mathcal{P}}
\renewcommand{\P}{\mathsf{P}}
\newcommand{\A}{\mathsf{A}}
\renewcommand{\Pr}{\mathbb{P}}
\newcommand{\F}{\mathcal{F}}
\newcommand{\D}{\mathcal{D}}
\newcommand{\BB}{\mathcal{B}}
\newcommand{\R}{\mathsf{R}}
\newcommand{\Q}{\mathsf{Q}}
\renewcommand{\S}{\mathsf{S}}
\newcommand{\B}{\mathsf{B}}
\newcommand{\ceil}[1]{\left\lceil #1 \right\rceil}
\newcommand{\floor}[1]{\left\lfloor #1 \right\rfloor}
\newcommand{\sub}{\subseteq}
\newcommand{\sm}{\setminus}
\newcommand{\eps}{\varepsilon}
\newcommand{\integ}{\mathbb{Z}}
\newcommand{\suc}{\mathrm{succ}}
\newcommand{\rank}{\mathrm{rank}}
\newcommand{\Exp}{\mathbb{E}}
\newcommand{\Oh}{\mathcal{O}}
\newcommand{\fullonly}{}
\crefname{construction}{Construction}{Constructions}
\crefname{fact}{Fact}{Facts}
\begin{document}

\setenumerate{itemsep=0ex, parsep=1pt, topsep=1pt}

\renewcommand\Affilfont{\normalsize}

\title{String Synchronizing Sets: Sublinear-Time BWT
  Construction\\ and Optimal LCE Data Structure}
\author[,1]{Dominik Kempa\thanks{Supported by the Centre for
Discrete Mathematics and its Applications (DIMAP) and by EPSRC award EP/N011163/1.}}
\author[,2,3]{Tomasz Kociumaka\thanks{Supported by ISF grants no. 824/17 and 1278/16 and by an ERC grant MPM under the EU's Horizon 2020 Research and Innovation Programme (grant no. 683064).}}

\affil[1]{Department of Computer Science, University of Warwick, United Kingdom}
\affil[ ]{\href{mailto:dominik.kempa@warwick.ac.uk}{\nolinkurl{dominik.kempa@warwick.ac.uk}}}
\affil[2]{Department of Computer Science, Bar-Ilan University, Ramat Gan, Israel}
\affil[3]{Institute of Informatics, University of Warsaw, Poland}
\affil[ ]{\href{mailto:kociumaka@mimuw.edu.pl}{\nolinkurl{kociumaka@mimuw.edu.pl}}}

\date{\vspace{-1.5cm}}
\maketitle

\begin{abstract}
  Burrows--Wheeler transform (BWT) is an invertible text transforma\-tion that, given a text $T$ of length $n$, permutes its symbols \mbox{according} to the lexicographic order of suffixes of $T$. BWT is one of~the~most heavily studied algorithms in data compression with numerous applications in indexing, sequence analysis, and bioinformatics.~Its construction is a bottleneck in many scenarios, and settling the complexity of this task is one of the most important unsolved \mbox{problems} in sequence analysis that has remained open for 25 years. Given a binary string of length $n$, occupying $\Oh(n/\log n)$ machine words, the BWT construction algorithm due to Hon et al. (SIAM~J.~Comput., 2009) runs in $\Oh(n)$ time and $\Oh(n/\log n)$ space. Recent \mbox{advancements} (Belazzougui, STOC 2014, and Munro et al., SODA~2017) focus on removing the alphabet-size dependency in the time complexity, but they still require $\Omega(n)$ time.  Despite the clearly suboptimal running time, the existing techniques appear to have reached their limits.
    
  In this paper, we propose the first algorithm that breaks the $\Oh(n)$-time barrier for BWT construction. Given a binary string of length $n$, our procedure builds the Burrows--Wheeler transform in $\Oh(n/\sqrt{\log n})$ time and $\Oh(n/\log n)$ space. We complement this result with a conditional lower bound proving that any further progress in the time complexity of BWT construction would yield faster algorithms for the very well studied problem of counting inversions: it would improve the state-of-the-art $\Oh(m\sqrt{\log m})$-time solution by Chan and P\v{a}tra\c{s}cu (SODA 2010). Our algorithm is based on a novel concept of string synchronizing sets, which is of independent interest. As one of the applications, we show that this technique lets us design a data structure of the optimal size $\Oh(n/\log n)$ that answers Longest Common Extension queries (LCE queries) in $\Oh(1)$ time and, furthermore, can be deterministically constructed in the optimal $\Oh(n/\log n)$ time.
  \end{abstract}

\section{Introduction}
\label{sec:intro}

The problem of text indexing is to preprocess an input text $T$ so
that given any query pattern $P$, we can quickly find the occurrences
of $P$ in $T$ (typically in $\bigO(|P|+{\rm occ})$ time, where $|P|$
is the length of $P$ and ${\rm occ}$ is the number of reported
occurrences).  Two classical data structures for this task are the
suffix tree~\cite{DBLP:conf/focs/Weiner73} and the suffix
array~\cite{DBLP:journals/siamcomp/ManberM93}.  The suffix tree is a
trie containing all suffixes of $T$ with each unary path compressed
into a single edge labeled by a text substring. The suffix array is a
list of suffixes of $T$ in the lexicographic order, with each suffix
encoded using its starting position.  Both data structures take
$\Theta(n)$ words of space, where $n$ is the length of $T$.  In addition to indexing, they underpin
dozens of applications in bioinformatics, data compression, and
information retrieval~\cite{DBLP:books/cu/Gusfield1997, myriad}. While the
suffix tree is slightly faster for some operations, the suffix array
is often preferred due to its simplicity and lower space usage.

Nowadays, however, indexing datasets of size close to the capacity of
available RAM is often required.  Even the suffix arrays are then
prohibitively large, particularly in applications where the text
consists of symbols from some alphabet $\Sigma$ of small size
$\sigma=|\Sigma|$ (e.g., $\Sigma=\{{\tt A}, {\tt C},
{\tt G}, {\tt T}\}$ and so $\sigma=4$ in bioinformatics).  For such collections, the
classical indexes are $\Theta(\log_{\sigma}n)$ times larger than the
text, which takes only $\Theta(n \log \sigma)$ bits, i.e.,
$\Theta(n/\log_{\sigma}n)$ machine words, and thus they prevent many
sequence analysis tasks to be performed without a significant penalty
in space consumption.

This situation changed dramatically in early 2000's, when
Ferragina and Manzini~\cite{fm2005}, as well as Grossi and
Vitter~\cite{GrossiV05}, independently proposed indexes with the
capabilities of the suffix array (incurring only a
$\bigO(\log^{\eps}n)$ slowdown in the query time) that take a space
asymptotically equal to that of the text (and with very small constant
factors). These indexes are known as the FM-index and the compressed
suffix array (CSA). The central component and the time and space
bottleneck in the construction of both the FM-index and
CSA\footnote{Although originally formulated in terms of the so-called
  ``$\Psi$ function''~\cite{GrossiV05}, it is now established
  (see, e.g.,~\cite{SODA2017,DBLP:journals/siamcomp/HonSS09}) that the CSA is essentially
  dual to the FM-index.} is the \emph{Burrows--Wheeler transform}
(BWT)~\cite{BWT}. BWT is an invertible permutation of the text that consists of
symbols preceding suffixes of text in the lexicographic order. Almost
immediately after their discovery, the BWT-based indexes replaced suffix
arrays and suffix trees and the BWT itself has become the basis of
almost all space-efficient algorithms for sequence analysis.  Modern
textbooks spend dozens of pages describing its
applications~\cite{ohl2013,MBCT2015,navarrobook}, and BWT-indexes are
widely used in practice; in bioinformatics, they
are the central component of many read-aligners~\cite{bowtie,BWA}.

\paragraph*{BWT Construction}
Given the practical importance of BWT,
its efficient construction emerged as one of the most important open
problems in the field of indexing and sequence analysis. The first breakthrough was the
algorithm of Hon et al.~\cite{DBLP:journals/siamcomp/HonSS09}, who reduced the time
complexity of BWT construction for binary strings from $\bigO(n \log
n)$ to $\bigO(n)$ time using working space of $\bigO(n)$ bits. This
bound has been recently generalized to any alphabet size
$\sigma$. More precisely, Belazzougui~\cite{STOC2014} described a
(randomized) $\bigO(n)$-time construction working in optimal space of
$\bigO(n/\log_{\sigma}n)$ words. Munro et al.~\cite{SODA2017} then proposed an alternative (and deterministic)
construction.  These
algorithms achieve the optimal construction space, but their running
time is still $\Omega(n)$, which is up to $\Theta(\log n)$ times more
than the lower bound of $\Omega(n/\log_{\sigma}n)$ time (required to
read the input and write the output).
Up until now, all $o(n)$-time algorithms required additional assumptions, such as
that the BWT is highly compressible using run-length encoding~\cite{DBLP:conf/soda/Kempa19}.

In this paper, we propose the first algorithm that always breaks the
$\bigO(n)$-time barrier for BWT construction. Given a binary string of
length $n$, our algorithm builds the Burrows--Wheeler transform in
$\bigO(n/\sqrt{\log n})$ time and $\bigO(n/\log n)$ space.  We
complement this result with a conditional lower bound proving that any
further progress in the time complexity of BWT construction would
imply faster algorithms for the very well studied problem of counting
inversions: it would improve the state-of-the-art 
$\bigO(m\sqrt{\log m})$-time solution by Chan and P\v{a}tra\c{s}cu~\cite{ChanP10}.  We
also generalize our construction to larger alphabets whose size $\sigma$ satisfies $\log \sigma \le \sqrt{\log n}$.
In this case, the running time is $\Oh(n \log \sigma / \sqrt{\log n})$
and the space complexity is $\Oh(n \log \sigma / \log n)$, proportional to the input and output size.

\paragraph{LCE Queries}
The \emph{Longest Common Extension} queries $\LCE(i,j)$ (also known as the Longest Common Prefix queries), given two 
positions in a text $T$, return the length of the longest common prefix
of the suffixes $T[i\dd n]$ and $T[j\dd n]$ starting at positions $i$ and $j$, respectively.
These queries were introduced by Landau and Vishkin~\cite{DBLP:journals/jcss/LandauV88}
in the context of approximate pattern matching.
Since then, they became one of the most commonly used tools in text processing.
Standard data structures answer LCE queries in constant time
and take linear space. The original construction algorithm~\cite{DBLP:journals/jcss/LandauV88,DBLP:conf/focs/Weiner73,DBLP:journals/siamcomp/HarelT84}
works in linear time for constant alphabets only,
but it has been subsequently generalized to larger integer alphabets~\cite{DBLP:journals/jacm/Farach-ColtonFM00}
and simplified substantially~\cite{DBLP:journals/jacm/KarkkainenSB06,DBLP:journals/jal/BenderFPSS05}.
Thus, LCE queries are completely resolved in the classic setting 
where the text $T$ is stored in $\Oh(n)$ space.

However, if $T$ is over a small alphabet of size $\sigma$,
then it can be stored in $\Oh(n\log \sigma)$ bits.
Yet, until very recently, even for the binary alphabet there was no
data structure of $o(n \log n)$ bits supporting LCE queries in constant time.
The first such solutions are by Tanimura et al.~\cite{DBLP:conf/mfcs/TanimuraNBIT17}
and Munro et al.~\cite{DBLP:journals/corr/abs-1712-07431}, who showed that
constant-time queries can be implemented using data structures of size
$\Oh(n\log \sigma / \sqrt{\log n} )$ and $\Oh(n\sqrt{\log  \sigma}/\sqrt{\log n})$, respectively. 
The latter result admits an $\Oh(n /\sqrt{\log_\sigma n})$-time construction from the packed
representation of~$T$.  In yet another study, Birenzwige et
al.~\cite{ElyLCE} considered LCE queries in a model
where $T$ is available for read-only random access, but not counted towards the data structure size.
Constant-time LCE queries in the optimal space of $\Oh(n \log \sigma)$ bits can be deduced as a
corollary of their results, but the construction algorithm is randomized and takes $\Oh(n)$ time.

Our contribution in the area of LCE queries is a data structure of the optimal
size $\bigO(n/\log_\sigma n)$ that answers LCE in
$\bigO(1)$ time and, furthermore, can be deterministically constructed
in the optimal $\bigO(n/\log_\sigma n)$ time. This significantly improves the
state of the art and essentially closes the LCE problem also in the packed setting.

\paragraph{Our Techniques}
Our main innovation and the key tool behind both our results is a novel notion of \emph{string synchronizing sets},
which relies on \emph{local consistency}---the idea to make symmetry-breaking decisions involving a position $i$ of the text $T$ based on the characters at the nearby positions. This way, we can guarantee that equal fragments of the text are handled in the same way.
The classic implementations of local consistency involve parsing the text; see e.g.~\cite{FirstConsistentParsing,DBLP:journals/jacm/Jez16}.
Unfortunately, the context size at a given level of the parsing is expressed in terms of the number of phrases,
whose lengths may vary significantly between regions of the text.
To overcome these limitations, Kociumaka et al.~\cite{DBLP:conf/soda/KociumakaRRW15} introduced \emph{samples assignments}
with fixed context size. Birenzwige et al.~\cite{ElyLCE} then applied the underlying techniques to define \emph{partitioning sets},
which they used for answering LCE queries. Moreover, they obtained an alternative construction of partitioning sets (with slightly inferior properties) by carefully modifying the parsing scheme of \cite{FirstConsistentParsing}. 
In his PhD thesis~\cite{PhD}, the second author introduced \emph{synchronizing functions}, an improved version of  \emph{samples assignments}
with stronger properties and efficient deterministic construction procedures.
He also used synchronizing functions to develop the optimal LCE data structure in a packed text. 
In this work, we reproduce the latter result using \emph{synchronizing sets},
which are closely related to synchronizing functions, but enjoy a much simpler and cleaner interface.

\paragraph{Organization of the Paper}
After introducing the basic notation and tools
in~\cref{sec:prelim}, we start by defining the main concept of the
paper---the string synchronizing set---and proving some of its
properties (\cref{sec:def-s}).  Next, we show how to sort suffixes in
such a set (\cref{sec:sort-s}) and extend these ideas
into an optimal LCE data structure
(\cref{sec:lce}).  We then describe how to build the BWT given a small
string synchronizing set (\cref{sec:bwt}) and prove the conditional
optimality of our construction
(\cref{sec:conditional-optimality}). We conclude by showing efficient
algorithms for the construction of string synchronizing set
(\cref{sec:construction-of-set-S}).
\begin{camera}
Due to space limitations, proofs of the claims marked with \fullonly\ are presented only in the full version of the paper~\cite{full}. 
\end{camera}

\section{Preliminaries}
\label{sec:prelim}

Let $T\in\Sigma^{*}$ be a string over alphabet $\Sigma=[0\dotdot \sigma-1]$.
Unless ex\-pli\-ci\-tly stated otherwise, we assume $\smash{\sigma=n^{\bigO(1)}}$,
where $n=|T|$.~For  $1 \leq i \leq j \leq n$, we write $T[i\dotdot j]$ to denote the
\emph{substring} $T[i]T[i+1]\cdots T[j]$. Throughout, we
use $[i\dotdot j)$ as a shorthand for $[i\dotdot j-1]$.
The length of the longest common prefix of $X,Y\in \Sigma^{*}$ is denoted
$\lcp(X,Y)$.

An integer $p\in [1\dd |X|]$ is a \emph{period} of $X$ if $X[i]=X[i+p]$ for $i\in [1\dd |X|-p]$. The shortest
period of $X$ is denoted as $\per(X)$.

\begin{lemma}[Periodicity Lemma~\cite{fine1965uniqueness}]\label{lem:per}
  If a string $X$ has periods $p,q$ such that $p+q-\gcd(p,q)\le
  |X|$, then $\gcd(p,q)$ is also its period.
\end{lemma}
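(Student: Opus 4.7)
The plan is a strong induction on $p+q$ that mirrors the Euclidean algorithm. The base case is $p=q$, in which $\gcd(p,q)=p$ is already a period. For the inductive step, assume without loss of generality that $p>q$, and aim to show that $p-q$ is also a period of $X$. The inductive hypothesis then applies to the pair $(p-q,q)$, whose $\gcd$ is still $d:=\gcd(p,q)$ and which satisfies $(p-q)+q-d=p-d\le|X|$; since $(p-q)+q<p+q$, the induction terminates with $d$ emerging as a period.

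The core of the argument is showing that $p-q$ is a period: for each $i\in[1,|X|-(p-q)]$, I want $X[i]=X[i+(p-q)]$ via a two-step chain through the equalities granted by the periods $p$ and $q$. Two natural routes present themselves. Either $X[i]=X[i+p]=X[(i+p)-q]$, valid when $i+p\le|X|$; or $X[i]=X[i-q]=X[(i-q)+p]$, valid when $i>q$ and $i-q+p\le|X|$, the latter inequality following from $i\le|X|-(p-q)$. Together these two scenarios handle every index $i$ except possibly those in the boundary range $(|X|-p,\,q]$, which is nonempty only when $|X|<p+q$ and then contains at most $p+q-|X|\le d$ indices.

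The main obstacle is dispatching this short boundary range of at most $d$ positions, since the two-step chain is too short when both defining shifts run off the ends of $X$. A direct fix is to extend the chain, composing more applications of the shifts by $p$ and $q$ along a longer walk that stays within $[1,|X|]$. A cleaner alternative is to recast the whole argument as a connectivity statement: build a graph on $\{1,\ldots,|X|\}$ with edges joining positions at distance $p$ or $q$, note that adjacent vertices must carry equal symbols, and observe that each connected component sits inside a single residue class modulo $d$. The bound $|X|\ge p+q-d$ is exactly tight to force every one of the $d$ residue classes to form a single component; once this combinatorial fact is verified (by a short B\'ezout-style walk within the relevant interval), every residue class is monochromatic, which is precisely the statement that $d$ is a period of $X$.
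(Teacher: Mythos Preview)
The paper does not prove this lemma; it is stated as the classical Fine--Wilf theorem with a citation. Your plan is sound, and both routes you sketch---the Euclidean induction and the graph-connectivity argument---are standard proofs.

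One structural remark: once you invoke the connectivity argument to handle the boundary range, you have already shown directly that every residue class modulo $d$ is monochromatic, i.e., that $d$ is a period of $X$, so the outer induction on $p+q$ becomes redundant. If you want to keep the Euclidean-style induction clean and self-contained, the usual trick is not to prove that $p-q$ is a period of all of $X$ (which, as you correctly observed, hits the boundary obstruction) but rather of the prefix $X'=X[1\dd |X|-q]$: for every $i\le |X'|-(p-q)=|X|-p$ your Route~1 applies with no exceptions, so $X'$ has periods $q$ and $p-q$ with $(p-q)+q-d=p-d\le |X|-q=|X'|$. The inductive hypothesis then gives period $d$ on $X'$; since $|X'|\ge p-d\ge q$ (using $d\mid p-q$ and $p>q$) and $d\mid q$, the period $q$ of $X$ lets you reduce any index into $[1\dd q]\subseteq X'$ and conclude that $d$ is a period of the whole string. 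This avoids the longer-chain or B\'ezout-walk machinery entirely.
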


\subsection{Suffix Array and BWT}
\label{sec:sa}

The \emph{suffix array}~\cite{DBLP:journals/siamcomp/ManberM93}  $\SA[1\dotdot n]$ of a
text $T$ is a permutation
defining the lexicographic order on suffixes: $T[\SA[i]\dotdot n]\allowbreak
\prec T[\SA[j]\dotdot n]$ if $i < j$. It
takes $\Oh(n)$ space and can be constructed in $\Oh(n)$ time
\cite{DBLP:journals/jacm/KarkkainenSB06}.

Given positions $i,j$ in $T$, the \emph{Longest Common Extension} query $\LCE(i,j)$ asks for
$\lcp(T[i\dotdot n],T[j\dotdot n])$.  The standard solution
consists of the suffix array $\SA$, the inverse permutation $\smash{\SA^{-1}}$ (defined so
that $\SA[\SA^{-1}[i]]=i$), the LCP table $\LCP[2\dd n]$ (whose entries are
$\LCP[i]=\LCE(\SA[{i-1}],\SA[i])$), and a data
structure for range minimum queries built on top of the $\LCP$ table;
see~\cite{DBLP:journals/jacm/KarkkainenSB06,DBLP:journals/jacm/Farach-ColtonFM00,DBLP:journals/jal/BenderFPSS05,DBLP:journals/siamcomp/HarelT84}.

\begin{proposition}\label{prop:lce}
  LCE queries in a text $T\in [0\dd \sigma)^n$ with $\sigma = n^{\Oh(1)}$ 
  can be answered in  $\Oh(1)$  time after $\Oh(n)$-time preprocessing.
\end{proposition}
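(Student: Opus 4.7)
The plan is to assemble the standard LCE data structure listed immediately above the proposition and argue that each component admits linear-time construction and supports the claimed query. Concretely, I would first invoke the result of K\"arkk\"ainen--Sanders--Burkhardt to build $\SA[1\dd n]$ in $\Oh(n)$ time; the hypothesis $\sigma = n^{\Oh(1)}$ is exactly the regime in which this algorithm runs in linear time (via radix-sorting of triples over a polynomially bounded alphabet). The inverse permutation $\SA^{-1}$ is then obtained in $\Oh(n)$ time by a single pass setting $\SA^{-1}[\SA[i]] := i$.

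Next I would construct the $\LCP$ array in $\Oh(n)$ time using Kasai et al.'s algorithm, which scans the suffixes of $T$ in text order and exploits the fact that $\LCP$ values decrease by at most one between $\SA^{-1}[i]$ and $\SA^{-1}[i+1]$. Finally, I would build a Bender--Farach-Colton style constant-time range minimum query structure on top of $\LCP[2\dd n]$; this takes $\Oh(n)$ preprocessing time and space and supports $\Oh(1)$-time queries.

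For a query $\LCE(i,j)$ with $i \neq j$, I would set $p := \SA^{-1}[i]$ and $q := \SA^{-1}[j]$, assume without loss of generality that $p < q$, and return the minimum of $\LCP[p+1\dd q]$ via a single RMQ call. Correctness follows from the standard observation that the lcp of two suffixes equals the minimum of the lcp values of consecutive entries between them in the suffix array; the edge case $i=j$ returns $n-i+1$ directly.

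There is no real obstacle here: every ingredient is an off-the-shelf result cited in the preceding paragraph. The only mild point to keep in mind is that all four components need to be compatible with the polynomially bounded integer alphabet assumption, which is satisfied by the chosen citations, and that the RMQ preprocessing uses the standard $\Oh(n)$-time, $\Oh(n)$-space variant (rather than any succinct version), matching the $\Oh(n)$ preprocessing bound claimed in the proposition.
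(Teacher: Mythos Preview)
Your proposal is correct and follows exactly the approach the paper itself sketches in the paragraph preceding the proposition: build $\SA$ (linear time over polynomially bounded alphabets via \cite{DBLP:journals/jacm/KarkkainenSB06}), derive $\SA^{-1}$ and the $\LCP$ table, and top it with an $\Oh(1)$-time RMQ structure. The paper treats this as a known result and does not spell out a proof beyond listing these components and the relevant citations, so your write-up is essentially an expansion of that same outline.
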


The \emph{Burrows--Wheeler transform} (BWT)~\cite{BWT} of $T[1\dotdot n]$ is
defined as $\BWT[i] = T[\SA[i] - 1]$ if $\SA[i] > 1$ and $\BWT[i] = T[n]$ otherwise.
To ensure the correct handling of boundary cases, it is often assumed that
$\BWT[\SA^{-1}[1]]$ contains a sentinel $\$\notin \Sigma$.
In this paper, we avoid this to make sure that $\BWT[1\dd n]\in [0\dd \sigma)^n$.
Our construction also returns $\SA^{-1}[1]$, though, so that the corresponding value can be set as needed.

\subsection{Word RAM Model}
\label{sec:ram}

Throughout the paper, we use the standard word RAM model of
computation~\cite{DBLP:conf/stacs/Hagerup98} with $w$-bit \emph{machine words},
  where $w \ge \log n$.

In the word RAM model, strings are typically represented as arrays, with
each character occupying a single memory cell.  Nevertheless,
a single character can be represented using $\ceil{\log \sigma}$ bits, which might be much less
than~$w$.  Consequently, one may store a text $T\in [0\dd \sigma)^n$ in
$\Oh\big(\big\lceil{\frac{n\log \sigma}{w}}\big\rceil\big)$ consecutive
memory cells.  In the \emph{packed representation} of $T$, we assume
that the first character corresponds to the $\ceil{\log\sigma}$ least
significant bits of the first cell.

\begin{camera}
\begin{proposition}[\fullonly]\label{prop:packed}
  Suppose that $T\in [0\dd \sigma)^n$ is stored in the
  packed representation. The packed representation of any
  length-$\ell$ substring can be retrieved in
  $\Oh\big(\big\lceil{\frac{\ell\log\sigma}{w}}\big\rceil\big)$ time.  The
  longest common prefix of two length-$\ell$ fragments
  can be identified in the same time.
\end{proposition}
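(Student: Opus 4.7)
The plan is to reduce both tasks to a small number of word-level primitives (shifts, masks, \textsf{OR}, \textsf{XOR}, lowest-set-bit) that each take $\Oh(1)$ time on the word RAM. For substring extraction, observe that $T$ occupies the bit range $[0\dd n\ceil{\log \sigma})$ of the memory array, so the length-$\ell$ substring starting at position $i$ occupies bits $[(i-1)\ceil{\log \sigma}\dd (i-1+\ell)\ceil{\log \sigma})$, which span $\Oh(\ceil{\ell \log \sigma / w})$ consecutive machine words. To produce a fresh packed representation aligned to a word boundary, I would iterate over the output words $j=0,1,\ldots$ and assemble each by reading the (at most two) input words that it overlaps, shifting them by the constant offset $s:=((i-1)\ceil{\log \sigma}) \bmod w$, and combining them with bitwise \textsf{OR}. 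The final output word is then masked so that any bits beyond position $\ell\ceil{\log \sigma}$ are cleared. Each output word costs $\Oh(1)$ work, giving the claimed bound.

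For the LCP of two length-$\ell$ fragments $T[i\dd i+\ell)$ and $T[i'\dd i'+\ell)$, I would compare them word by word without materialising either as a fresh string. Reusing the helper above, I would fetch, for each $j=0,1,\ldots$, the $j$-th word of each fragment (each obtained from up to two memory words via the same shift-and-\textsf{OR} trick) and then compute the \textsf{XOR} of the two results. As long as the \textsf{XOR} is zero, the two fragments agree on the corresponding block of $\floor{w/\ceil{\log \sigma}}$ characters, and I advance $j$. The first $j$ for which the \textsf{XOR} is nonzero identifies the machine word containing the first mismatch; a $\Oh(1)$-time lowest-set-bit primitive (standard on the word RAM, e.g., via a precomputed de Bruijn table built once in $o(n)$ time) and integer division by $\ceil{\log \sigma}$ pinpoint the first mismatching character within that word. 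The number of words read is at most $\ceil{\ell\ceil{\log \sigma}/w}+1 = \Oh(\ceil{\ell \log \sigma / w})$.

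The point of care — and the only obstacle beyond routine bit twiddling — is the bookkeeping at the fragment boundaries: the last assembled word (in both algorithms) may contain stray bits coming from characters outside the fragment, which would produce a spurious mismatch in the LCP test or a corrupted output in the extraction. These must be masked out using an explicit end-mask that depends only on $\ell\ceil{\log \sigma}\bmod w$. Once this is handled, both procedures consist of a single loop over $\Oh(\ceil{\ell \log \sigma / w})$ words with $\Oh(1)$ work per iteration, matching the claimed time bound.
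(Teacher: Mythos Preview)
Your proposal is correct and follows essentially the same approach as the paper: align the relevant bit range with shifts and \textsf{OR}, mask out boundary bits, and for the LCP use \textsf{XOR} followed by a lowest-set-bit computation and integer division by $\lceil\log\sigma\rceil$. The paper's write-up is terser (it materialises both packed fragments and then \textsf{XOR}s them wholesale rather than word-by-word), but the ideas and bounds are identical.
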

\end{camera}
\begin{full}
  \begin{proposition}\label{prop:packed}
    Suppose that $T\in [0\dd \sigma)^n$ is stored in the
    packed representation. The packed representation of any
    length-$\ell$ substring can be retrieved in
    $\Oh\big(\big\lceil{\frac{\ell\log\sigma}{w}}\big\rceil\big)$ time.  The
    longest common prefix of two length-$\ell$ fragments
    can be identified in the same time.
  \end{proposition}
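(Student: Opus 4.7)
The plan is to reduce both operations to word-level bit manipulation and to control the error introduced by the fact that a character of $\lceil\log\sigma\rceil$ bits need not align to a machine-word boundary. Let $b=\lceil\log\sigma\rceil$, so that $T$ occupies $\lceil nb/w\rceil$ consecutive words and a length-$\ell$ fragment occupies $\lceil \ell b/w\rceil +\Oh(1)$ words in the underlying storage; our target running time is $\Oh(\lceil \ell b/w\rceil)$.

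For substring extraction, I would locate the word and bit offset of the starting character by a constant-time division and modulo with $w$. I would then sweep through $\lceil \ell b/w\rceil+1$ consecutive input words, and for each pair of consecutive words I would produce one output word by a shift-right of the lower word combined (via bitwise OR) with a shift-left of the higher word, so that the first character of the fragment ends up in the low-order $b$ bits of the first output word. The final output word is masked so that any bits beyond position $\ell b$ are zeroed. Each step is $\Oh(1)$ word-RAM operations, and there are $\Oh(\lceil \ell b/w\rceil)$ of them, giving the required bound.

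For the longest common prefix of two length-$\ell$ fragments, I would first run the extraction procedure on one fragment so that it is aligned to word boundaries, and then slide a ``virtual'' window across the second fragment using the same shift-and-OR trick to produce aligned $w$-bit blocks on the fly, without materializing the whole second fragment. In each round I compare the two current aligned words by XOR: while the XOR is $0$, both words agree and I advance $\lfloor w/b\rfloor$ (or, more precisely, a fixed number of complete characters that fit in a word) characters. As soon as the XOR is nonzero, the position of the lowest set bit---found in $\Oh(1)$ time on the word RAM by the standard most-significant-bit / de Bruijn trick---identifies the first differing bit, from which dividing by $b$ yields the index of the first differing character within the current block. Adding this to the number of characters consumed in previous rounds gives the LCP. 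The total number of rounds is $\Oh(\lceil \ell b/w\rceil)$, so the time bound matches.

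The only mildly delicate point is that characters straddle word boundaries when $b\nmid w$; the main obstacle is bookkeeping rather than a deep idea. I would handle it uniformly by always comparing only complete characters from the current aligned block (at most $\lfloor w/b\rfloor$ per round) and deferring the last partial character to the next round, so that a mismatch is never reported inside a character that is split across two output words. With this convention the per-round work stays $\Oh(1)$ and the correctness is immediate from the fact that both extraction streams are produced by the same aligned sliding procedure.
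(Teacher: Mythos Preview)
Your proposal is correct and follows essentially the same approach as the paper: shift-and-mask to extract an aligned packed representation, and XOR plus lowest-set-bit to locate the first mismatch. The paper's LCP argument is marginally simpler in that it extracts \emph{both} fragments into aligned packed form, XORs them as a single $\ell b$-bit string, and returns $\lfloor (p-1)/b\rfloor$ where $p$ is the global position of the least significant set bit; because $p$ is a global bit index, dividing by $b$ automatically lands in the right character regardless of word boundaries, so your per-round ``complete characters only'' bookkeeping is unnecessary (though harmless).
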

\begin{proof}
  The bit sequence corresponding to any fragment of length $\ell$ is
  contained in the concatenation of at most
  $1+\big\lceil{\frac{\ell\ceil{\log \sigma}}{w}}\big\rceil$ memory
  cells of the packed representation of $T$. Its location can be
  determined in $\Oh(1)$ time, and the resulting sequence can be
  aligned using $\Oh(\lceil{\frac{\ell\log\sigma}{w}}\rceil)$ bit-wise
  shift operations, as well as $\Oh(1)$ bit-wise and operations to
  mask out the adjacent characters.  This results in a packed
  representation of the length-$\ell$ fragment of $T$.  In order to
  compute the length of the longest common prefix of two such
  fragments, we xor the packed representations and find the position
  $p$ of the least significant bit in the resulting sequence.  The resulting length is
  $\big\lfloor{\frac{p-1}{\ceil{\log\sigma}}}\big\rfloor$ assuming
  $1$-based indexing of positions.
\end{proof}
\end{full}

A particularly important case is that of $\sigma=2$.
In many applications, these \emph{bitvectors} are equipped with a data structure
answering \emph{rank} queries: for $B[1\dotdot n]$,
  $\rank_{B}(i)= |\{j\in [1\dd i] : B[j]=1\}|$.
Jacobson~\cite{DBLP:conf/focs/Jacobson89}
proved that $\rank_B$ queries can be answered in $\Oh(1)$ time
using an additional component of $o(n)$ extra bits.
However, an efficient construction of such a component is much more
recent.

\begin{proposition}[\cite{WaveletSuffixTree,DBLP:journals/tcs/MunroNV16}]\label{prp:rksel}
  A packed bitvector $B[1\dotdot n]$ can be extended in
  $\Oh\big(\frac{n}{\log n}\big)$ time with a data structure of size $o\big(\frac{n}{\log n}\big)$
  which answers $\rank_B$ queries in $\Oh(1)$ time.
\end{proposition}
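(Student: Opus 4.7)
The plan is to follow the classical two-level rank construction due to Jacobson, but to exploit the packed representation of $B$ so that the preprocessing cost drops from $\Oh(n)$ to $\Oh(n/\log n)$.

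First, I would partition $B$ into \emph{superblocks} of $\ceil{\log^{2} n}$ bits and, inside each superblock, into \emph{blocks} of $w=\Theta(\log n)$ bits, so that each block fits in one machine word. I would then maintain two tables. The \emph{superblock table} stores, for every superblock, the number of $1$s in $B$ strictly before its left endpoint; since there are $\Oh(n/\log^{2} n)$ entries of $\Oh(\log n)$ bits each, its total size is $\Oh(n/\log n)$ bits. The \emph{block table} stores, for every block, the number of $1$s between the left endpoint of its containing superblock and its own left endpoint; these counts are bounded by $\log^{2} n$, so each entry fits in $\Oh(\log\log n)$ bits, yielding $\Oh(n\log\log n/\log n)$ bits overall. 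Both tables occupy $o(n/\log n)$ machine words.

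To answer $\rank_B(i)$ in $\Oh(1)$ time, I add the relevant superblock and block-table values to the number of $1$s in the prefix of the containing block up to position $i$. That last count is obtained by reading the word storing the block in $\Oh(1)$ time via \cref{prop:packed}, masking out the bits past position $i$ with a constant number of bit operations, and applying a constant-time popcount. Popcount on a $w$-bit word is available through standard SWAR tricks; if the model forbids these, a precomputed popcount table indexed by $\lceil w/2\rceil$-bit half-words takes $\Oh(\sqrt{n}) = o(n/\log n)$ words and can be built in $\Oh(\sqrt{n})$ time, after which two table lookups suffice per word.

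For the construction itself, I would sweep the packed representation of $B$ one machine word at a time. For each word I compute its popcount in $\Oh(1)$, update a running counter of $1$s seen so far, and write the appropriate entries of the two tables whenever a new block or superblock begins. Since there are $\Oh(n/\log n)$ words and each contributes $\Oh(1)$ work, the whole preprocessing runs in $\Oh(n/\log n)$ time. The only non-routine ingredient is the constant-time popcount on a machine word; once that is granted, the remaining bookkeeping over the packed representation is entirely immediate, and no step dominates the word-scan asymptotically.
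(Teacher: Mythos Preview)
Your construction is correct and is precisely the standard two-level rank structure with word-at-a-time preprocessing. The paper does not give its own proof of this proposition; it simply cites it from~\cite{WaveletSuffixTree,DBLP:journals/tcs/MunroNV16}, where essentially the construction you describe appears.
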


\subsection{Wavelet Trees}
\label{sec:wavelet-trees}

Wavelet trees, invented by Grossi, Gupta, and Vitter
\cite{DBLP:conf/soda/GrossiGV03} for space-efficient text indexing,
are important data structures with a vast number of applications far
beyond text processing (see \cite{DBLP:journals/jda/Navarro14}).

The wavelet tree of a string $W\in[0\dd 2^b)^n$ is recursively defined as follows. First, we create
the root node $v_{\eps}$. This completes
the construction for $b=0$. If $b>0$, we attach to $v_{\eps}$ a bitvector $B_\eps[1\dd n]$ in which
$B_\eps[i]$ is the most significant bit of $W[i]$ (interpreted as a $b$-bit
number). Next, we partition $W$ into subsequences $W_{0}$ and $W_{1}$ by scanning $W$ and appending
$W[i]$, with the most significant bit $d$ removed, to the subsequence $W_d$. Finally, we attach the recursively
created wavelet trees of $W_0$ and $W_1$ (over alphabet $[0\dd 2^{b-1})$) to
$v_{\eps}$. The result is a perfect binary tree with $2^b$ leaves.

Assuming that we label edges $0$ (resp. $1$) if they go to the left (resp. right) child, we define the
\emph{label} of a node to be the concatenation of the labels on the root-to-node path.
If $B_X$ denotes the bitvector of a node
$v_X$ labeled $X\in\{0,1\}^{<b}$, then $B_X$ contains one bit (following $X$ as a prefix)
from each $W[i]$ whose binary encoding has prefix $X$.
Importantly, the bits in the bitvector $B_X$ occur in the same order as the corresponding elements $W[i]$ occur in $W$.

It is easy to see that the space occupied by the bitvectors is $\Oh(n b)$ bits, i.e., $\Oh\big(\frac{nb}{\log n}\big)$ words.
We need one extra machine word per node for pointers and due to word alignment,
which sums up to $\Oh(2^b)$.  Thus, the total size of a wavelet tree
is $\Oh\big(2^b + \frac{n b}{\log n}\big)$ machine words, which
is $\Oh\big(\frac{n b}{\log n}\big)$ if $b \le \log n$.
As shown recently, a wavelet tree
can be constructed efficiently from the packed representation of $W$.
 
\begin{theorem}[\cite{WaveletSuffixTree,DBLP:journals/tcs/MunroNV16}]\label{thm:wavelet_tree}
  Given the packed representation of a string $W$ of length $n$ over
  $[0\dd 2^b)$ for $b \le \log n$, we can construct its
  wavelet tree in $\Oh(n b / \sqrt{\log n})$ time using $\Oh(n b / \log n)$ space.
\end{theorem}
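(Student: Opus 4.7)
The plan is to construct the wavelet tree using a recursive scheme over the alphabet bit-length, with a base case for small alphabets handled by precomputed lookup tables. First I note that the wavelet tree has $b$ levels with bitvectors summing to $n$ bits per level, so the total output size is $nb$ bits, i.e., $O(nb/\log n)$ machine words, which already matches the claimed space bound. The time bound will follow by processing each ``super-level'' of $\tau = \Theta(\sqrt{\log n})$ consecutive levels in $O(n)$ amortized time.

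For the base case, I would build the wavelet tree over an alphabet $[0, 2^\tau)$ with $\tau = \lfloor \tfrac{1}{2}\sqrt{\log n}\rfloor$ in $O(n)$ time. The input is partitioned into chunks of $\Theta(\sqrt{\log n})$ characters, each occupying at most $\tfrac{1}{2}\log n$ bits, so chunk values index into lookup tables of size $O(\sqrt n)$. These tables encode, for each chunk value, the chunk's bit contributions to all $\tau$ wavelet tree bitvectors, pre-grouped by destination node. A counting pass over the chunks computes per-node tail offsets in the global bitvectors via prefix sums, and a scatter pass then writes each chunk's packed contributions at the computed offsets using word-aligned bit operations.

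For the recursive step, I would split each character's $b$ bits into the top $\tau$ and the bottom $b-\tau$ bits. Applying the base case to the top-bit projection $W' \in [0,2^\tau)^n$ yields the top $\tau$ levels of the wavelet tree in $O(n)$ time; as a byproduct, I obtain the partition of $[1,n]$ into runs (one per top-$\tau$-bit value) together with the packed representation of the bottom $(b-\tau)$ bits per run. Recursing on each run produces the remaining levels. The recurrence $T(n,b) \le O(n) + \sum_c T(n_c, b-\tau)$ with $\sum_c n_c = n$ solves to $T(n,b) = O(nb/\tau) = O(nb/\sqrt{\log n})$, matching the claimed time bound. Space usage stays within $O(nb/\log n)$ because intermediate packed representations can be released once their super-level is processed.

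The hardest part will be realizing the base-case scatter step within the $O(n)$ budget: a single chunk can contribute to up to $\Theta(\sqrt{\log n})$ distinct destination nodes at each of the $\tau$ levels, so naive per-chunk-per-node accounting is too expensive. My plan is to have the precomputed tables pre-pack contributions by destination so that per-node writes form contiguous bit sequences, and to use per-node tail buffers (which fit in $O(n^{o(1)})$ machine words since the sub-alphabet has only $2^\tau = n^{o(1)}$ values) together with word-level bit-aligned packed writes. The careful interplay of table lookups, prefix-sum based offsets, and bit-aligned writes is the principal technical challenge, and it is what ultimately enables the base case and hence the overall $O(nb/\sqrt{\log n})$ bound.
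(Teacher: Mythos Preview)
The paper does not prove this theorem: it is stated in Section~2.3 as a known result cited from \cite{WaveletSuffixTree,DBLP:journals/tcs/MunroNV16} and used as a black box throughout, so there is no ``paper's own proof'' to compare your proposal against.

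That said, your high-level strategy---processing $\Theta(\sqrt{\log n})$ consecutive levels of the wavelet tree at once via precomputed lookup tables indexed by short packed blocks, then recursing on the induced $2^{\tau}$ sub-sequences---is indeed the approach taken in the cited works, and your recurrence $T(n,b)=\Oh(n)+\sum_c T(n_c,b-\tau)$ with $\sum_c n_c=n$ correctly yields $\Oh(nb/\sqrt{\log n})$. Your own caveat about the scatter step is on point: the delicate part of the base case is that a single chunk contributes bits to up to $2^{\tau}$ distinct nodes per level, so you need either per-node buffering (as you suggest) or, as the cited constructions do, a stable in-place word-RAM radix partitioning of the packed sequence by its top-$\tau$ bits that simultaneously emits the $\tau$ level bitvectors. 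Either route works, but you should be aware that the published proofs lean on the latter (essentially a packed counting sort) rather than on tables that pre-group contributions by destination; your variant is plausible but would need a careful accounting to show that the total number of buffer flushes across all $2^{\tau}$ nodes and all $n/\Theta(\sqrt{\log n})$ chunks stays $\Oh(n)$.
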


\section{String Synchronizing Sets}
\label{sec:def-s}

In this section, we introduce string synchronizing sets, the central novel concept
 underlying both main results of this paper.
\begin{definition}
\label{def:sss}
  Let $T$ be a string of length $n$ and let $\tau\le \frac12n$ be a positive integer.
  We say that a set $\S \subseteq [1\dd n-2\tau+1]$ is a
  \emph{$\tau$-synchronizing set} of $T$ if it satisfies the following conditions:

  \begin{enumerate}
  \item\label{it:cons} if $T[i\dotdot i+2\tau)=T[j\dotdot j+2\tau)$, then $i \in \S$ holds if
    and only if $j \in \S$ (for $i,j\in[1\dd n-2\tau+1]$), and
    \item\label{it:dom} $\S\cap[i\dd i+\tau)=\emptyset$ if and only if
    $i\in \R$ (for $i\in[1\dd n-3\tau+2]$), where
    \[
  \R = \{i\in[1 \dd n-3\tau+2] : \per(T[i\dotdot i+3\tau-2]) \le \tfrac13\tau\}.
\]
  \end{enumerate}
\end{definition}

Intuitively, the above definition requires that the decision on
whether $i\in \S$ depends entirely on $T[i\dotdot i+2\tau)$, i.e., it is made
consistently across the whole text (the first \emph{consistency
  condition}) and that $\S$ contains densely
  distributed positions within (and only within) non-periodic regions of
  $T$ (the second \emph{density condition}).

The properties of a $\tau$-synchronizing set $\S$ allow for
symmetry-breaking decisions that let us individually process only positions
$i\in \S$, compared to the classic $\Oh(n)$-time algorithms handling
all positions one by one.  Thus, we are interested in minimizing the
size of $\S$.  Since $\R=\emptyset$ is possible in general, the smallest
$\tau$-synchronizing set we can hope for is of size $\Omega(\frac{n}{\tau})$ in the worst case.
Our deterministic construction in \cref{sec:construction-of-set-S} matches this
lower bound.

Note that the notion of a $\tau$-synchronizing set is valid for every positive integer
$\tau \le \frac12 n$. Some applications make use of~many
\mbox{synchronizing} sets
with parameters $\tau$ spread across the whole domain; see~\cite{CPM2019,PhD}. 
However, in this paper we only rely on $\tau$-synchronizing sets for
$\tau=\eps\log_{\sigma}n$ (where $\eps$ is a sufficiently small positive
constant), because this value turns out to be the suitable for processing the
packed representation of a text $T\in[0 \dd \sigma)^n$ stored in $\Theta(n/\log_{\sigma} n)$ machine words. 
This is because our generic construction algorithm (\cref{lem:sync}) runs in $\Oh(n)$ time,
whereas a version optimized for packed strings (\cref{lem:sync_fast}) takes $\Oh(\frac{n}{\tau})$ time
only for $\tau \le \eps \log_\sigma n$ with $\eps < \frac15$. 
(Note that an $\Oh(\frac{n}{\tau})$-time construction is feasible for $\tau = \Oh(\log_\sigma n)$ only,
  because we need to spend $\Omega(n/\log_{\sigma}n)$ time already to read the whole packed text.)
Moreover, the running time of our BWT construction procedure involves a term $\sigma^{\Oh(\tau)}$,
which would dominate if we set $\tau$ too large.

We conclude this section with two properties of $\tau$-synchronizing sets
useful across all our applications.
To formulate them, we define the \emph{successor} in $\S$ for each $i\in [1\dd n-2\tau+1]$:
\[ \suc_\S(i) := \min\{j \in \S \cup \{n-2\tau+2\} : j \ge i\}.\]
The sentinel $n-2\tau+2$ guarantees that the set on the right-hand side is non-empty.
Our first result applies the density condition to relate $\suc_\S(i)$ for $i\in \R$ with maximal periodic regions of $T$.
\begin{fact}\label{fct:den}
  Let $T$ be a text and let $\S$ be its $\tau$-synchronizing set for a positive integer $\tau\le \frac12$. 
  If $i\in \R$ and $p = \per(T[i\dd i+3\tau-2])$, then $T[i\dd \suc_\S(i)+2\tau-2]$
  is the longest prefix of $T[i\dd n]$ with period $p$.
\end{fact}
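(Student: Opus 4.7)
Write $j := \suc_\S(i)$. Since $i \in \R$, the density condition forces $\S \cap [i \dd i+\tau) = \emptyset$, so $j \ge i+\tau$, and by definition $j \le n-2\tau+2$. I plan to prove two claims separately: (a) $p$ is a period of $T[i \dd j+2\tau-2]$, and (b) no longer prefix of $T[i \dd n]$ has period $p$.

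For (a), the plan is to induct on $k \in [i \dd j-\tau]$ to show that $\per(T[k \dd k+3\tau-2]) = p$; the windows $T[k \dd k+3\tau-2]$ will then collectively cover $T[i \dd j+2\tau-2]$ and their common period $p$ will pass to the union. The base case $k = i$ is given. In the inductive step, applying the density condition to $[k \dd k+\tau) \subseteq [i \dd j)$ shows $k \in \R$, so $q := \per(T[k \dd k+3\tau-2]) \le \tfrac13\tau$. The overlap of the windows at $k-1$ and $k$ has length $3\tau-2$ and carries both periods $p$ (inductive hypothesis) and $q$; since $p + q \le \tfrac23\tau \le 3\tau - 2$, \cref{lem:per} yields $d := \gcd(p,q)$ as a period of the overlap. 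I then lift $d$ to each full window: using $d \mid p$, the identity $T[k-1] = T[k-1+p]$ can be transported by iterated $d$-shifts inside the overlap down to $T[k-1+d]$, upgrading $d$ to a period of $T[k-1 \dd k-1+3\tau-2]$ and, by minimality of $p$, forcing $d = p$ and hence $p \mid q$; a symmetric right-extension using $d \mid q$ and $T[k+3\tau-2] = T[k+3\tau-2-q]$ gives $q \mid p$, so $p = q$.

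For (b), if $j = n-2\tau+2$ there is nothing to prove, so assume $j \in \S$ and $j \le n-2\tau+1$, and suppose towards contradiction that $T[i \dd j+2\tau-1]$ has period $p$. Since $p \le \tfrac13\tau < \tau \le j-i$, we have $i < j-p < j$, and the whole range $[j-p \dd j+2\tau-1]$ inherits period $p$, so $T[j-p \dd (j-p)+2\tau-1] = T[j \dd j+2\tau-1]$. The consistency condition then forces $j-p \in \S$, contradicting the minimality of $j = \suc_\S(i)$.

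The main technical difficulty will be the inductive step of (a): the $\gcd$-period extracted from the overlap has to be separately lifted to each of the two adjacent length-$(3\tau-1)$ windows, and each lift relies on a different known period ($p$ on the left side, $q$ on the right side) to bridge the single extra character at the window boundary, so the equality $p = q$ can only be concluded once both lifts are in hand.
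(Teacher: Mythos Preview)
Your proof is correct, but both halves take different routes than the paper.

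For part (a), the paper inducts on $j\in[i\dd s-\tau]$ to show directly that $p=\per(T[i\dd j+3\tau-2])$, i.e., it extends the growing prefix rather than the moving window. Its inductive step avoids the Periodicity Lemma entirely: with $p'=\per(T[j\dd j+3\tau-2])$ and the inductive hypothesis $p=\per(T[i\dd j+3\tau-3])$, the single chain
\[
T[j+3\tau-2]=T[j+3\tau-2-p']=T[j+3\tau-2-p'-p]=T[j+3\tau-2-p]
\]
already pushes $p$ to the next character (using only $p+p'<\tau$ so that the middle index stays in range). Your approach---showing each window has shortest period exactly $p$ via a $\gcd$ argument and two lifting steps---is correct but noticeably heavier; the paper's route sidesteps the need to pin down $q$ at all.

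For part (b), the paper again uses the density condition: from $s\in\S$ it concludes $s-\tau+1\notin\R$, hence $\per(T[s-\tau+1\dd s+2\tau-1])>\tfrac13\tau\ge p$, which is stronger than what is needed. You instead use the consistency condition, arguing that an extended period would force $j-p\in\S$ and contradict $j=\suc_\S(i)$. Your argument is equally valid and arguably more direct for the precise claim being proved, though it does not yield the paper's stronger conclusion about the period jumping above $\tfrac13\tau$.
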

\begin{proof}
  Let us define $s = \suc_\S(i)$ and observe that $[i\dd s)\cap \S = \emptyset$.
  Consequently, $[j\dd j+\tau)\cap \S = \emptyset$ holds for every $j\in [i\dd s-\tau]$.
  By the density condition, this implies $[i\dd s-\tau]\sub \R$,
  i.e., that $\per(T[j\dd j+3\tau-2])\le \tfrac13\tau$ for $j\in [i\dd s-\tau]$.
  We shall prove by induction on $j$ that  $p=\per(T[i\dd j+3\tau-2])$ holds for $j\in [i\dd s-\tau]$. 
  Tha base case of $j=i$ follows from the definition of $p$. For $j > i$, on the other hand,
  let us denote $p' = \per(T[j\dd j+3\tau-2])$ and assume $p = \per(T[i\dd j+3\tau-3])$ by the inductive hypothesis.
  We observe that $j+3\tau-2-p'-p > j+2\tau-2 \ge j$,
  so $T[j+3\tau-2] = T[j+3\tau-2-p'] = T[j+3\tau-2-p'-p] = T[j+3\tau-2-p]$.
  This shows that $p=\per(T[i\dd j+3\tau-2])$ and completes the inductive step.
  We conclude that $p$ is the shortest period of $T[i\dd s+2\tau-2]$.
  We now need to prove that this is the longest prefix of $T[i\dd n]$ with period $p$.
  The claim is trivially true if $s = n-2\tau+2$.
  Otherwise, $s\in \S$, so $[s-\tau+1\dd s]\cap \S\ne \emptyset$.
  By the density condition, this means that $s-\tau+1\notin \R$,
  i.e., $\per(T[s-\tau+1\dd s+2\tau-1])>\frac13\tau$.
  As a result, $\per(T[i\dd s+2\tau-1])>\frac13\tau \ge p$. This completes the proof.
\end{proof}

The second result applies the consistency condition to relate $\suc_\S(i)$ with $\suc_\S(j)$
for two common starting positions $i,j\in [1\dd n-2\tau+1]$ of a sufficiently long substring. 
\begin{fact}\label{fct:cons}
  Let $T$ be a text and let $\S$ be its $\tau$-synchronizing set for a positive integer $\tau\le \frac12$. 
  If a substring $X$ of length $|X|\ge 2\tau$ occurs in $T$ at positions $i$ and $j$,
  then either
  \begin{enumerate}[label={(\roman*)}]
    \item\label{alt:one} $\suc_\S(i)-i = \suc_\S(j)-j \le |X|-2\tau$, or
    \item\label{alt:two} $\suc_\S(i)-i > |X|-2\tau$ and $\suc_\S(j)-j > |X|-2\tau$.
  \end{enumerate}
  Moreover, \ref{alt:one} holds if $|X|\ge 3\tau-1$ and $\per(X)> \frac13\tau$.
\end{fact}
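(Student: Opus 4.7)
The plan is to derive the dichotomy from the consistency condition in \cref{def:sss}, and then obtain the ``moreover'' clause by invoking the density condition through \cref{fct:den}. Both steps are short: the bulk of the work is keeping track of indices.

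First, I would fix an arbitrary offset $k\in[0\dd |X|-2\tau]$ and note that the windows $T[i+k\dd i+k+2\tau)$ and $T[j+k\dd j+k+2\tau)$ are the two length-$2\tau$ substrings at (0-indexed) offset $k$ within the two occurrences of $X$; since both occurrences spell the same word, these windows are identical. The consistency condition then yields $i+k\in\S \iff j+k\in\S$ for every such $k$. With this coupled membership in hand, the dichotomy is immediate: either some $k\in[0\dd|X|-2\tau]$ satisfies $i+k\in\S$ (equivalently $j+k\in\S$), in which case the minimum such $k$ is simultaneously $\suc_\S(i)-i$ and $\suc_\S(j)-j$, giving case \ref{alt:one}; or no such $k$ exists, giving case \ref{alt:two}.

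For the ``moreover'' clause, I would assume, towards a contradiction, that $|X|\ge 3\tau-1$, $\per(X)>\tfrac13\tau$, and yet \ref{alt:two} holds. From $\suc_\S(i)-i > |X|-2\tau \ge \tau-1$ it follows that $\S\cap[i\dd i+\tau)=\emptyset$, so the density condition forces $i\in\R$ and $p:=\per(T[i\dd i+3\tau-2])\le\tfrac13\tau$. Applying \cref{fct:den} shows that $T[i\dd \suc_\S(i)+2\tau-2]$ has period $p$; and since $\suc_\S(i)+2\tau-2\ge i+|X|-1$, this fragment contains $X$, so $X$ itself has period $p\le\tfrac13\tau$, contradicting $\per(X)>\tfrac13\tau$.

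The main subtlety to guard against is index arithmetic at the boundaries: one must check that $i+k+2\tau-1$ stays within the occurrence of $X$ when invoking consistency (which is exactly why the range $[0\dd |X|-2\tau]$ is used), and that $i$ lies in the domain $[1\dd n-3\tau+2]$ where the density condition and \cref{fct:den} apply (which follows from $|X|\ge 3\tau-1$ and $i+|X|-1\le n$). Neither causes any real difficulty, but they are the only places where the hypothesis $|X|\ge 3\tau-1$ in the ``moreover'' part is used.
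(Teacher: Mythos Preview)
Your proof is correct and follows essentially the same approach as the paper. The first part is framed slightly more symmetrically (you establish the biconditional $i+k\in\S\iff j+k\in\S$ for all offsets $k$, whereas the paper picks the smaller successor and applies consistency once), and your ``moreover'' argument is the contrapositive of the paper's direct proof via \cref{fct:den}; in both cases the substance is identical.
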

\begin{proof}
First, we shall prove that \ref{alt:one} holds if \ref{alt:two} does not.
Without loss of generality, we assume that $\suc_\S(i)-i \le \suc_\S(j)-j$,
which yields $\suc_\S(i)-i \le |X|-2\tau$. In particular, $T[i\dd \suc_\S(i)+2\tau)=T[j\dd j-i+\suc_\S(i)+2\tau)$ is a prefix of $X$.
Moreover, $\suc_\S(i) \le n-2\tau+1$, so $\suc_\S(i)\in \S$.
The consistency condition therefore implies $j-i+\suc_\S(i)\in \S$.
Hence, $\suc_\S(j)\le j-i+\suc_\S(i)$, and $\suc_\S(i)-i = \suc_\S(j)-j$ thus holds as claimed.

Next, we shall prove that $\suc_\S(i)-i \le |X|-2\tau$ if $|X|\ge 3\tau-1$ and $\per(X)> \frac13\tau$.
From this, we shall conclude that \ref{alt:two} does not hold (whereas \ref{alt:one} holds) in that case.
If $i\notin \R$, then $[i\dd i+\tau)\cap \S \ne \emptyset$ by the density condition, so $\suc_\S(i)-i \le \tau-1 \le |X|-2\tau$.
Otherwise, let us define $p=\per(T[i\dd i+3\tau-2])$ and note that $T[i\dd \suc_\S(i)+2\tau-2]$ has period $p$ by \cref{fct:den}.
Since $p\le \frac13 n$, this means that $|X| > |T[i\dd \suc_\S(i)+2\tau-2]|$,
which is equivalent to the desired inequality $\suc_\S(i)-i \le |X|-2\tau$.
\end{proof}

\section{Sorting Suffixes Starting in Synchronizing Sets}
\label{sec:sort-s}

Let $T\in [0\dd \sigma)^n$ be a text stored in the packed representation
and let $\S$ be its $\tau$-synchronizing set of size $\Oh(\frac{n}{\tau})$
for $\tau = \Oh(\log_{\sigma} n)$. 
In this section, we show that given the above as input,
the suffixes of $T$ starting at positions in $\S$
can be sorted lexicographically in the optimal $\Oh(\frac{n}{\tau})$ time.
We assume that the elements of $\S$ are stored in an array in the left-to-right
order so that we can access the $i$th smallest element, denoted $s_i$,
in constant time for $i\in[1\dd |\S|]$.  The presented algorithm is the
first step in our BWT construction. It also reveals the key ideas behind our LCE data structure.

\subsection{The Nonperiodic Case}
\label{sec:sort-s-simple}

Consider first a case when $\R=\emptyset$.  The density condition
then simplifies to the following statement:
\begin{camera}
\begin{enumerate}[start=2,label=$({\arabic*}')$]
  \item $\S \cap [i\dd i+\tau) \neq \emptyset$ for every
    $i\in[1\dd n-3\tau+2]$.
\end{enumerate}
\end{camera}
\begin{full}
  \begin{enumerate}[start=2,label=${\arabic*}'.$]
    \item $\S \cap [i\dd i+\tau) \neq \emptyset$ for every
      $i\in[1\dd n-3\tau+2]$.
  \end{enumerate}
\end{full}
We introduce a string $T'$ of length $n':=|\S|$ defining it so that $T'[i] = T[s_i\dd \min(n, s_i+3\tau-1)]$.
All characters of $T'$ are strings over $[0\dd \sigma)$ of length up to $3\tau$. Hence, they can be encoded using $\Oh(\tau\log \sigma)=\Oh(\log n)$-bit
integers so that the lexicographic order is preserved.\footnote{For example, we may append $6\tau-2|T'[i]|$ zeroes
and $|T'[i]|$ ones to $T'[i]$. The result can then be interpreted as the base-$\sigma$ representation of an integer in $[0\dd \sigma^{6\tau})$.}
Furthermore, the lexicographic order of the suffixes of $T'$ coincides with that
of the corresponding suffixes of~$T$.

\begin{lemma}\label{lm:sort-s-simple}
  Assume $\R=\emptyset$ holds for a text $T$.
  If positions $i,j$ of $T'$ satisfy $T'[i\dd n'] \prec T'[j\dd n']$, then $T[s_i \dd n] \prec T[s_j \dd n]$.
\end{lemma}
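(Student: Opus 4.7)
The plan is to reduce the comparison of $T[s_i \dd n]$ and $T[s_j \dd n]$ to the comparison of $T'$-suffixes by showing that, as long as consecutive $T'$-characters agree (i.e.\ $T'[i+m] = T'[j+m]$), the corresponding elements of $\S$ advance in lockstep: $s_{i+m} - s_i = s_{j+m} - s_j$. The first mismatch $T'[i+k] \neq T'[j+k]$ then aligns with the first mismatch of the two $T$-suffixes, and the fact that the integer encoding preserves the lex order of $T'$-characters finishes the argument.

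A preliminary observation: with $\R = \emptyset$, the simplified density condition places an element of $\S$ in every window $[i \dd i+\tau)$ for $i \in [1\dd n-3\tau+2]$, so $s_{k+1}-s_k \le \tau-1$ whenever $s_k \le n-3\tau+1$, while the bound $\S \sub [1\dd n-2\tau+1]$ handles larger $s_k$. Applying density at $i = n-3\tau+2$ also gives $s_{n'} \ge n-3\tau+2$, so $|T'[n']| < 3\tau$. More generally, $|T'[k]| \in [2\tau, 3\tau]$, and $|T'[k]| < 3\tau$ forces $s_k = n - |T'[k]| + 1$, so a truncated length pins down the position uniquely.

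The main step is an induction on $m$: assuming $T'[i+m'] = T'[j+m']$ for all $m' < m$, I would establish that $s_{i+m'} - s_i = s_{j+m'} - s_j$ for $m' \le m$ and that $|T'[i+m-1]| = 3\tau$. The latter follows from the preliminary observation: a truncated match would give $s_{i+m-1} = s_{j+m-1}$, and by the inductive hypothesis this forces $s_i = s_j$, i.e.\ $i = j$ (a trivially handled case). For the former, set $t := s_{i+m} - s_{i+m-1} \le \tau-1$; then $t+2\tau \le 3\tau-1 < |T'[i+m-1]|$, so the $2\tau$-windows $T[s_{i+m-1}+t \dd s_{i+m-1}+t+2\tau)$ and $T[s_{j+m-1}+t \dd s_{j+m-1}+t+2\tau)$ coincide, and the consistency condition yields $s_{j+m-1}+t \in \S$. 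A symmetric minimality argument rules out any smaller $\S$-element beyond $s_{j+m-1}$, giving $s_{j+m} - s_{j+m-1} = t$.

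With the induction in hand, let $k$ be the least index with $T'[i+k] \neq T'[j+k]$; such $k$ exists because the alternative that $T'[i \dd n']$ is a proper prefix of $T'[j \dd n']$ would force $T'[j + n' - i] = T'[n']$, which by the truncated-length observation collapses to $i = j$. The induction yields $T[s_i \dd s_{i+k}-1] = T[s_j \dd s_{j+k}-1]$ with equal lengths, so the comparison of the two $T$-suffixes reduces to that of $T[s_{i+k}\dd n]$ and $T[s_{j+k}\dd n]$. These begin with the distinct strings $T'[i+k]$ and $T'[j+k]$, whose lex order (preserved by the encoding) matches $T'[i\dd n']\prec T'[j\dd n']$; whether they differ at a common position or one is a proper prefix of the other, this directly yields $T[s_i \dd n] \prec T[s_j \dd n]$. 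The main obstacle I anticipate is the careful handling of truncated $T'$-characters near the end of $T$; the identity $s_k = n - |T'[k]|+1$ for truncated $T'[k]$ is the key trick that keeps the induction uniform.
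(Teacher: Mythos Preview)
Your proposal is correct and follows essentially the same approach as the paper: induction on the length of the common prefix of $T'[i\dd n']$ and $T'[j\dd n']$, using the consistency condition to show that consecutive synchronizing positions advance in lockstep ($s_{i+1}-s_i=s_{j+1}-s_j$) whenever $T'[i]=T'[j]$; the paper packages this step as \cref{fct:cons} and cites it, whereas you inline the argument directly. One harmless off-by-one: density only yields $s_{k+1}-s_k\le\tau$, not $\tau-1$, but $t\le\tau$ is precisely what you need for the $2\tau$-window $[s_{i+m-1}+t,\,s_{i+m-1}+t+2\tau)$ to sit inside the length-$3\tau$ block, so the argument is unaffected.
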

\begin{proof}
  We proceed by induction on $\lcp(T'[i\dotdot n'],T'[j\dotdot n'])$.  
  The base case is that $T'[i]\prec T'[j]$. If $T'[i]$ is a proper prefix of $T'[j]$, then $T[s_{i}\dd n]=T'[i]\prec T'[j] \preceq T[s_j\dd n]$.
  Otherwise, $T'[i]\cdot W \prec T'[j]$ holds for any string $W$,
  so $T[s_i\dd n] \prec T'[j] \preceq T[s_j \dd n]$. 

  Henceforth, we assume that $T'[i]=T'[j]$. Since $i\ne j$, this implies $|T'[i]|=|T'[j]|=3\tau$
  and $s_i, s_j \le  n-3\tau+1$.   The density condition
  yields $s_{n'}\ge n-3\tau+2$  (due to $n-3\tau+2\notin \R$), so we further have $i,j\in [1\dd n')$ and $T'[i+1\dd n'] \prec T'[j+1\dd n']$.
  Moreover, 
   $X:= T[s_i+1\dd s_i+3\tau)=T[s_j+1\dd s_j+3\tau)$ occurs in $T$ at positions $s_i+1$ and $s_j+1$.
  As $\per(X)>\frac13\tau$ (due to $s_{i}+1 \notin \R$), \cref{fct:cons} implies
  \[\suc_\S(s_i+1)-(s_i+1) = \suc_\S(s_j+1)-(s_j+1) \le \tau-1,\]
  i.e., $s_{i+1}-s_i = s_{j+1}-s_j \le \tau$. Furthermore, $T[s_i\dd s_{i+1})=T[s_j \dd s_{j+1})$
  because  $T[s_i\dd {s_i+3\tau})=T'[i]=T'[j]=T[s_j\dd {s_j+3\tau})$.
  Due to $T[s_{i+1}\dd n] \prec T[s_{j+1}\dd n]$ (which we derive from the inductive hypothesis),
  this implies $T[s_i \dd n] \prec T[s_j\dd n]$ and completes the proof of the inductive step.
\end{proof}

By \cref{lm:sort-s-simple}, the suffix array of $T'$ can be used
to retrieve the lexicographic order of the suffixes $T[s_i\dotdot n]$ for $s_i\in \S$.
Recall that each symbol of $T'$ takes $\Oh(\tau\log \sigma)=\Oh(\log n)$
bits, so the suffix array of $T'$ can be computed in $\bigO(|T'|)=\bigO(\frac{n}{\tau})$
time~\cite{DBLP:journals/jacm/KarkkainenSB06}.

\subsection{The General Case}
\label{sec:sort-s-general}

We now show how to adapt the approach from the previous section so that it also 
works if $\R\ne\emptyset$. As before, we
construct a string $T'$ of length $n'=|\S|$ over a polynomially bounded integer alphabet, and we sort its suffixes.
However, the definition of $T'$ becomes more involved.
To streamline the formulae, we set $s_{n'+1}=n-2\tau+2$.
For each $i\in [1\dd n']$, we define $T'[i]=(T[s_i\dotdot \min(n,s_i+3\tau-1)], d_i)$, where $d_i$ is an integer specified as follows:

\begin{enumerate}[label=(\alph*)]
\item If $s_{i+1}-s_{i}\leq\tau$ (in particular, if $s_i> n-3\tau+1$), then $d_i = 0$.
\item Otherwise, we set $p_i=\per(T[s_i+1\dotdot s_i+3\tau))$
and
  \[ d_i=
    \begin{cases}
      n-s_{i+1}+s_i &\!\text{if } T[s_{i+1}+2\tau-1]\succ T[s_{i+1}+2\tau-1-p_i],\\
      s_{i+1}-s_i-n    &\!\text{otherwise (if $s_{i+1}=n-2\tau+2$ in particular).}
    \end{cases}
  \]
\end{enumerate}
Note that each $T'[i]$ can be encoded in $\Oh(\tau\log\sigma + \log n) = \bigO(\log n)$ bits so that the comparison of the resulting integers is
equivalent to the lexicographic comparison of the corresponding
symbols. 

\begin{lemma}\label{lm:sort-s-general}
  If positions $i,j$ of $T'$ satisfy $T'[i\dd n'] \prec T'[j\dd n']$, then $T[s_i \dd n] \prec T[s_j \dd n]$.
\end{lemma}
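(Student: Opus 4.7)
The plan is to mirror the inductive strategy of \cref{lm:sort-s-simple}, proceeding by induction on $\lcp(T'[i\dd n'],T'[j\dd n'])$. The base case $T'[i]\prec T'[j]$ splits into two sub-cases according to whether the string components of $T'[i]$ and $T'[j]$ differ, or they coincide and the comparison is decided by $d_i<d_j$. The inductive step $T'[i]=T'[j]$ reduces to comparing $T[s_{i+1}\dd n]$ with $T[s_{j+1}\dd n]$ after showing that the prefixes $T[s_i\dd s_{i+1})$ and $T[s_j\dd s_{j+1})$ coincide.

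When the string components of $T'[i]$ and $T'[j]$ differ, the argument from \cref{lm:sort-s-simple} transfers verbatim, since the extra $d$-coordinate plays no role. The interesting sub-case is when $T[s_i\dd \min(n,s_i+3\tau-1)]=T[s_j\dd \min(n,s_j+3\tau-1)]$ and $d_i<d_j$. Both string components must then have length $3\tau$, so $s_i,s_j\le n-3\tau+1$. I would first observe that $s_i+1\in \R$ if and only if $s_j+1\in \R$ (both statuses are determined by $\per(T[s_i+1\dd s_i+3\tau-1])$), and that if neither lies in $\R$ then $s_{i+1}-s_i\le \tau$ and $s_{j+1}-s_j\le \tau$ by the density condition, forcing $d_i=d_j=0$ and contradicting $d_i<d_j$. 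Hence both are in $\R$, and \cref{fct:den} identifies $T[s_i\dd s_{i+1}+2\tau-2]$ and $T[s_j\dd s_{j+1}+2\tau-2]$ as the longest prefixes of the respective suffixes with period $p:=p_i=p_j$. Because these two prefixes agree on the first $3\tau$ symbols and both have period $p\le \tfrac13\tau$, they are literally the same string up to the shorter of $\ell_i:=s_{i+1}-s_i+2\tau-1$ and $\ell_j:=s_{j+1}-s_j+2\tau-1$. A case split on the signs of $d_i,d_j$ (which, by design of the encoding, controls both the ordering of $\ell_i$ versus $\ell_j$ and the direction of the period-breaking comparison $T[s_{i+1}+2\tau-1]$ vs.\ $T[s_{i+1}+2\tau-1-p]$, and analogously for $j$) then settles $T[s_i\dd n]\prec T[s_j\dd n]$ in every configuration.

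For the inductive step $T'[i]=T'[j]$, the goal is to establish $s_{i+1}-s_i=s_{j+1}-s_j$ together with $T[s_i\dd s_{i+1})=T[s_j\dd s_{j+1})$, after which the inductive hypothesis applied to $T'[i+1\dd n']\prec T'[j+1\dd n']$ finishes the proof. If $d_i=d_j=0$, neither of $s_i+1,s_j+1$ lies in $\R$, so \cref{fct:cons} applied with $X=T[s_i+1\dd s_i+3\tau)$ (which has length $|X|=3\tau-1\ge 2\tau$ and $\per(X)>\tfrac13\tau$) yields $\suc_\S(s_i+1)-(s_i+1)=\suc_\S(s_j+1)-(s_j+1)$, that is, $s_{i+1}-s_i=s_{j+1}-s_j$; equality of the prefixes then follows from the stored length-$3\tau$ string component. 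If $d_i=d_j\ne 0$, the explicit formula for $d$ gives $s_{i+1}-s_i=s_{j+1}-s_j$ directly, and \cref{fct:den} combined with the shared $3\tau$-prefix and $p\le \tfrac13\tau$ shows that the periodic prefixes of both suffixes coincide, so again $T[s_i\dd s_{i+1})=T[s_j\dd s_{j+1})$.

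The main difficulty I foresee is the base case with equal string components, which requires juggling the sign configurations of $(d_i,d_j)$ against the three possible orderings of $\ell_i$ versus $\ell_j$, and verifying in each sub-configuration that the character breaking the period in the shorter extension has the correct relation to the character continuing the period in the longer one. The definition of $d$ has been engineered precisely so that this analysis collapses, but the bookkeeping is the natural place for slips.
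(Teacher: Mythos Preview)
Your proposal is correct and follows essentially the same approach as the paper's own proof. The only difference is organizational: you split first into the base case $T'[i]\prec T'[j]$ versus the inductive step $T'[i]=T'[j]$, and then handle the periodic/nonperiodic sub-cases, whereas the paper first branches on whether $\lcp(T[s_i\dd n],T[s_j\dd n])<3\tau$, then on whether $\min(s_{i+1}-s_i,s_{j+1}-s_j)\le\tau$, and finally on the relation between $|d_i|$ and $|d_j|$; the resulting sub-cases and the arguments disposing of each coincide.
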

\begin{proof}
  Induction on   $\lcp(T'[i\dotdot n'],  T'[j\dotdot n'])$. 
  If $\lcp(T[s_i\dd n],\allowbreak T[s_j\dd n]) < 3\tau$,
  then we proceed as in the proof of \cref{lm:sort-s-simple}.

  Otherwise, the string $X=T[s_i+1\dd s_i+3\tau)=T[s_j+1\dd s_j+3\tau)$ occurs in $T$ at positions $s_i+1$ and $s_j+1$.
  If $\min(s_{i+1}-s_i,s_{j+1}-s_j) \le \tau$, 
  then \cref{fct:cons} yields $s_{i+1}-s_i = s_{j+1}-s_j \le  \tau$,
  so $d_i = d_j = 0$ and $T'[i]=T'[j]$.
  Moreover, $i,j \in [1\dd n')$ due to $s_{i},s_j \le n-3\tau+1$.
  Consequently, the claim follows from $T[s_i\dd s_{i+1})=T[s_j\dd s_{j+1})$ because the inductive hypothesis
  yields $T[s_{i+1}\dd n]\prec T[s_{j+1}\dd n]$.

  On the other hand, $\min(s_{i+1}-s_i,s_{j+1}-s_j) > \tau$
  yields $d_i, d_j \ne 0$. Moreover, the density condition implies $s_i+1, s_j+1\in \R$
  with $p_i = p_j = \per(X)\le \frac13\tau$.
    By \cref{fct:den}, the longest prefix of $T[s_i+1\dd n]$ with period $p_i$ is
    $P_i := T[s_i+1\dd s_{i+1}+2\tau-2]$ and the longest prefix of $T[s_j+1\dd n]$ with period $p_j$
    is $P_j := T[s_j+1\dd s_{j+1}+2\tau-2]$. Both $P_i$ and $P_j$ start with $X$, so
    one of them is a prefix of the other.
    We consider three cases based on how their lengths, $|P_i|=n+2\tau-2-|d_i|$ and $|P_j|=n+2\tau-2-|d_j|$, 
    compare to each other.
    \begin{itemize}
      \item If $|d_i|>|d_j|$, then $P_i$ is a proper prefix of $P_j$.
      If $i=n'$, then $T[s_i\dd n] = P_i \prec P_j \preceq T[s_j\dd n]$.
      Otherwise, we note that $d_i < 0$ due to $d_i < d_j$, so $T[s_i+1+|P_i|] \prec \allowbreak P_i[|P_i|-p_i+1]=P_j[|P_i|-p_j+1]=
      T[s_j+1+|P_i|]$, which yields the claim.
        \item If $|d_i|=|d_j|$, then $P_i = P_j$.
        If $i=n'$, then $T[s_i\dd n] = P_i = P_j \prec T[s_j\dd n]$. 
        Otherwise, we consider two subcases:
        \begin{itemize}
        \item
         If $d_i=-d_j$, then $d_i  < 0 < d_j$, so $T[s_{i}+1+|P_i|] \prec P_i[|P_i|-p_i+1]=P_j[|P_j|-p_j+1] \prec 
        T[s_{j}+1+|P_j|]$, which also yields the claim.
        \item 
        Finally, if $d_i = d_j$, then $T'[i]=T'[j]$ and $i,j\in [1\dd n')$,
        so the inductive hypothesis gives $T[s_{i+1}\dd n]\prec T[s_{j+1}\dd n]$. 
        The claim follows due to $T[s_i\dd s_{i+1})=T[s_j\dd s_{j+1})$.
        \end{itemize}
      \item If $|d_i|<|d_j|$, then $P_j$ is a proper prefix of $P_i$.
      Moreover, $d_j > 0$ due to $d_i < d_j$, so $T[s_i+1+|P_j|] = P_i[|P_j|-p_i+1] = P_j[|P_j|-p_j+1]\prec T[s_{j}+1+|P_j|]$
      and the claim holds.\qedhere
    \end{itemize}
\end{proof}

We now prove that efficient construction of $T'$ is indeed
possible. The only difficulty is computing the values $p_i$ in case $s_{i+1}-s_i > \tau$.
To achieve this in constant time, we observe that $p_i \le \frac13\tau$ holds by the density condition
due to $s_i < n-3\tau+2$ and $s_{i}+1\notin \R$.
Consequently, $p_i$ is also the shortest period of every prefix of $T[s_i+1\dd s_{i}+3\tau)$
of length $2p_i$ or more.
By the synchronizing property of primitive strings~\cite[Lemma 1.11]{AlgorithmsOnStrings},
this means that the leftmost occurrence of $T[s_i+1 \dd s_i+\tau]$ in $T[s_i+2 \dd s_i+2\tau]$
starts at position $p_i$.
We can find it in $\Oh(1)$ time (after $\Oh(n^\eps)$-time preprocessing)
using the packed string matching algorithm~\cite{DBLP:journals/tcs/Ben-KikiBBGGW14}.

\begin{theorem}
  \label{thm:sort-s}
  Given the packed representation of a text $T\in[0\dd \sigma)^n$ and its $\tau$-synchronizing set $\S$ of
  size $\bigO(\frac{n}{\tau})$ for $\tau =\Oh(\log_{\sigma} n$), we can compute in $\bigO(\frac{n}{\tau})$ time the
  lexicographic order of all suffixes of $T$ starting at positions
  in $\S$.
\end{theorem}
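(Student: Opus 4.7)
The plan is to construct the integer string $T'[1\dd n']$ defined in \cref{sec:sort-s-general}, run a linear-time integer suffix array algorithm on it, and transfer the order to the suffixes of $T$ via \cref{lm:sort-s-general}.

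First, I would check that each character $T'[i]=(T[s_i\dd\min(n,s_i+3\tau-1)],d_i)$ fits in $\Oh(\log n)$ bits: the string component has length at most $3\tau=\Oh(\log_\sigma n)$ and thus takes $\Oh(\tau\log\sigma)=\Oh(\log n)$ bits in packed form, while $|d_i|\le n$ contributes another $\Oh(\log n)$ bits. Hence $T'$ is a string over an alphabet of size $n^{\Oh(1)}$, each symbol occupies $\Oh(1)$ machine words, and two symbols can be compared lexicographically in constant time.

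Building $T'$ is the technical step. Given $i$, the string component of $T'[i]$ is extracted in $\Oh(1)$ time from the packed representation of $T$ via \cref{prop:packed}. If $s_{i+1}-s_i\le \tau$, then $d_i=0$ by definition. Otherwise, I need $p_i=\per(T[s_i+1\dd s_i+3\tau))$ in $\Oh(1)$ time. The gap $s_{i+1}-s_i>\tau$, combined with the fact that $s_{i+1}$ is the successor of $s_i$ in $\S$, yields $[s_i+1\dd s_i+\tau+1)\cap \S=\emptyset$; the density condition then forces $s_i+1\in \R$, so $p_i\le \tfrac13\tau$. By the synchronization property of primitive strings, $p_i$ is then the offset of the leftmost occurrence of $T[s_i+1\dd s_i+\tau]$ inside $T[s_i+2\dd s_i+2\tau]$, which the packed string matching algorithm of Ben-Kiki et al.\ locates in $\Oh(1)$ time after a one-off $\Oh(n^{\eps})$-time preprocessing. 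The comparison between $T[s_{i+1}+2\tau-1]$ and $T[s_{i+1}+2\tau-1-p_i]$ and the arithmetic defining $d_i$ are also $\Oh(1)$. Summing over $i$, the string $T'$ is assembled in $\Oh(n^{\eps}+n')=\Oh(n/\tau)$ time.

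Finally, the suffixes of $T'$ are pairwise distinct (they have distinct lengths), so lexicographic order on them is total; by \cref{lm:sort-s-general}, this order coincides with the lexicographic order on $\{T[s_i\dd n]:i\in [1\dd n']\}$. I then apply the linear-time integer suffix array algorithm cited in \cref{sec:sa} to $T'$ in $\Oh(n')=\Oh(n/\tau)$ time and output the permutation $s_{\SA_{T'}[1]},\ldots,s_{\SA_{T'}[n']}$. The main obstacle is the constant-time extraction of each $p_i$; it crucially combines the density bound $p_i\le \tfrac13\tau$ with packed string matching. Every other step is either a direct application of \cref{prop:packed}, an off-the-shelf suffix sort, or the already-proved \cref{lm:sort-s-general}.
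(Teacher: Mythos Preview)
Your proposal is correct and follows essentially the same route as the paper: build the integer string $T'$ of \cref{sec:sort-s-general}, compute each $p_i$ in $\Oh(1)$ time by combining the density bound $p_i\le\tfrac13\tau$ with the synchronizing property of primitive strings and packed string matching, then run a linear-time suffix sort on $T'$ and invoke \cref{lm:sort-s-general}. Your write-up is in fact slightly more explicit than the paper's (you spell out the bit budget for $T'[i]$ and the $\Oh(n^{\eps}+n')$ accounting), but no step differs in substance.
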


\section{Data Structure for LCE Queries}
\label{sec:lce}

In \cref{sec:sort-s}, for a text $T\in [0\dd \sigma)^n$ and its $\tau$-synchronizing set $\S$
with $\tau=\Oh(\log_\sigma n)$, we constructed a string $T'$ such the lexicographic order 
of the suffixes of $T'$ coincides with the order of suffixes of $T$ starting at positions in $\S$.
In this section, we show how to reduce LCE queries in $T$ to LCE queries in $T'$.
Our approach results in a data structure with $\Oh(1)$-time LCE queries
and $\Oh(\frac{n}{\tau})$-time construction provided that $|\S|=\Oh(\frac{n}{\tau})$.
Recall that $n'=|\S|=|T'|$, $s_i$ is the $i$th smallest element of $\S$,
and $s_{n'+1}=n-2\tau+2$.

\subsection{The Nonperiodic Case}
\label{sec:lce-simple}

Analogously to \cref{sec:sort-s-simple}, we start with the case of $\R=\emptyset$,
which makes the definition of $T'$ simpler: $T'[i]=T[s_i\dd \min(n,s_i+3\tau-1)]$.

Consider an LCE query in the text $T$. 
If $\LCE(i,j) < 3\tau$, we can retrieve it in $\Oh(1)$
time from the packed representation of $T$.
Otherwise, \cref{fct:cons} yields $\suc_\S(i)-i = \suc_\S(j)-j < \tau$.
Hence, $\LCE(i,j)=s_{i'}-i+\LCE(s_{i'}, s_{j'})$,  where $s_{i'} = \suc_\S(i)$ and $s_{j'} = \suc_\S(j)$.
A similar reasoning can be repeated to determine $\LCE(s_{i'}, s_{j'})$,
which must be smaller than $3\tau$ or equal to $s_{i'+1}-s_{i'}+\LCE(s_{i'+1}, s_{j'+1})$.
The former condition can be verified by checking whether $T'[i']=T'[j']$. 
A formal recursive application of this argument results in the following characterization:

\begin{fact}\label{fct:lce_simple}
  Consider a string $T\in [0\dd \sigma)^n$ which satisfies $\R=\emptyset$.
  For positions $i,j$ in $T$ such that $\LCE(i,j)\ge 3\tau-1$,
  let us define $s_{i'}=\suc_\S(i)$ as well as $s_{j'}=\suc_\S(j)$. 
  If $\ell = \LCE_{T'}(i',j')$, then
   \[\LCE(i,j)=  s_{i'+\ell}-i+\LCE(s_{i'+\ell},s_{j'+\ell})< s_{i'+\ell}-i+3\tau.\]
\end{fact}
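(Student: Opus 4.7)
The plan is to decompose the proof into two stages: first reduce the query on arbitrary positions $i,j$ to one between their immediately following synchronizing positions, and then telescope $\ell$ steps through $\S$ via $T'$.

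Stage~1. Apply \cref{fct:cons} to the substring $X=T[i\dd i+3\tau-2]$, which occurs at both $i$ and $j$ because $\LCE(i,j)\ge 3\tau-1\ge 2\tau$. Since $\R=\emptyset$, position $i\notin\R$, so $\per(X)>\frac{1}{3}\tau$, and \cref{fct:cons} forces case~\ref{alt:one}: $s_{i'}-i=s_{j'}-j$ (a common value in $[0\dd\tau-1]$). Consequently, $T[i\dd s_{i'})=T[j\dd s_{j'})$ is a common prefix of both suffixes, and so
\[\LCE(i,j) \;=\; (s_{i'}-i)+\LCE(s_{i'},s_{j'}).\]

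Stage~2. I claim by induction on $\ell$ that $\LCE(s_{i'},s_{j'}) = (s_{i'+\ell}-s_{i'}) + \LCE(s_{i'+\ell},s_{j'+\ell})$, with the last term strictly below $3\tau$. For the base case $\ell=0$, we have $T'[i']\ne T'[j']$ while $i'\ne j'$; this already gives $\LCE(s_{i'},s_{j'})<3\tau$, since if both $T'$-characters have length $3\tau$ they disagree within the first $3\tau$ characters of the suffixes, and if either is shorter the corresponding suffix itself has length below $3\tau$ (two equal \emph{short} characters would coincide as suffixes of $T$, forcing $s_{i'}=s_{j'}$). For the inductive step $\ell\ge 1$, $T'[i']=T'[j']$ combined with $i'\ne j'$ forces both characters to be of full length $3\tau$ and to agree as length-$3\tau$ substrings of $T$. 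A second application of \cref{fct:cons} to $X'=T[s_{i'}+1\dd s_{i'}+3\tau-1]$, using $\per(X')>\frac{1}{3}\tau$ from $\R=\emptyset$, gives $s_{i'+1}-s_{i'} = s_{j'+1}-s_{j'}$. Together with $T[s_{i'}\dd s_{i'+1})=T[s_{j'}\dd s_{j'+1})$ (which is a prefix of the matched $T'$-character) and $\LCE_{T'}(i'+1,j'+1)=\ell-1$, the inductive hypothesis closes the telescoping sum.

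Chaining both stages yields the asserted identity and the strict bound $\LCE(s_{i'+\ell},s_{j'+\ell})<3\tau$. The main subtlety is the boundary treatment of short $T'$-characters near the end of $T$; the observation that equality of two such characters forces equality of the corresponding $\S$-positions is what keeps the induction from stalling when either endpoint of the $T'$-LCE approaches the last index of $T'$.
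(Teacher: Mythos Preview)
Your proof is correct and follows essentially the same approach as the paper's: both establish $\LCE(i,j)=s_{i'}-i+\LCE(s_{i'},s_{j'})$ via \cref{fct:cons} (using $\R=\emptyset$) and then telescope through the matched prefix of $T'$ by induction on~$\ell$. The only organizational difference is that the paper folds your Stage~1 into the induction itself, applying the full inductive hypothesis to the shifted pair $(s_{i'}+1,s_{j'}+1)$ (for which $\suc_\S(s_{i'}+1)=s_{i'+1}$ and $\LCE_{T'}(i'+1,j'+1)=\ell-1$) rather than isolating a separate Stage-2 claim about positions already in~$\S$.
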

\begin{proof}
  The proof is by induction on $\ell$.
  Due to $i,j\notin \R$, \cref{fct:cons} yields $s_{i'}-i=s_{j'}-j < \tau$,
  and therefore $T[i\dd s_{i'})=T[j \dd s_{j'})$. 
  Hence, $\LCE(i,j)=  s_{i'}-i+\LCE(s_{i'},s_{j'})$.
  If $\ell = 0$, it just remains to prove that $\LCE(s_{i'},s_{j'})<3\tau$,
  which follows from $T'[i']\ne T'[j']$.

  For $\ell > 0$, we note that $T'[i']=T'[j']$, so $\LCE(s_{i'},s_{j'})\ge 3\tau$
  and $\LCE(s_{i'}+1, s_{j'}+1)\ge 3\tau-1$.
  The inductive hypothesis now yields
  $\LCE(s_{i'}+1, s_{j'}+1)=s_{i'+\ell}-s_{i'}-1 + \LCE(s_{i'+\ell},s_{j'+\ell})$
  and $\LCE(s_{i'+\ell},s_{j'+\ell}) < 3\tau$.
  Since $\LCE(i,j)=  s_{i'}-i + \LCE(s_{i'},s_{j'})=s_{i'}+1-i+\LCE(s_{i'}+1,s_{j'}+1)$, this completes the proof.
\end{proof}

\cref{fct:lce_simple} leads to a data structure for LCE queries that consists of
the packed representation of $T$ (\cref{prop:packed}), a
$\tau$-synchronizing set $\S$ of size $\Oh(\frac{n}{\tau})$, a component for LCE queries in $T'$
(\cref{prop:lce}; the alphabet size is
$\sigma^{3\tau}=n^{\bigO(1)}$), and a bitvector $B[1\dotdot n]$, with
$B[i]=1$ if and only if $i\in \S$, equipped with a component for
$\Oh(1)$-time rank queries (\cref{prp:rksel}). 

To compute $\LCE(i,j)$, we first use the packed
representation to retrieve the answer in $\bigO(1)$ time provided that
$\LCE(i,j)\,{<}\,3\tau$. Otherwise,
we obtain $i'$ and $j'$ such that
$s_{i'}=\suc_{\S}(i)$ and $s_{j'}=\suc_{\S}(j)$ using $\rank_B$ queries, and we compute
$\ell=\LCE_{T'}(i',j')$. By \cref{fct:lce_simple},
$\LCE(i,j)=s_{i'+\ell}-i+ \LCE(s_{i'+\ell},s_{j'+\ell})$.  Since
$\LCE(s_{i'+\ell},s_{j'+\ell})<3\tau$,
we finalize the algorithm using the packed representation again.

\subsection{The General Case}
\label{sec:lce-general}

In this section, we generalize the results of \cref{sec:lce-simple} so that the case of $\R\ne \emptyset$ is also handled.
Our data structure consists of the same components; the only difference is that the string $T'$
is now defined as in \cref{sec:sort-s-general} rather than as in \cref{sec:sort-s-simple}.

The query algorithm needs more changes but shares the original outline.
If $\LCE(i,j)<3\tau$, then we determine the answer using \cref{prop:packed}.
Otherwise, we apply the following lemma as a reduction to computing $\LCE(\suc_\S(i),\suc_\S(j))$.
\begin{camera}
\begin{lemma}[\fullonly]\label{lem:lce}
  For positions $i,j$ in $T$ such that $\LCE(i,j)\ge 3\tau-1$,
  let us define $s_{i'}=\suc_\S(i)$ and $s_{j'}=\suc_\S(j)$.
  Then
    \[\LCE(i,j) = \begin{cases}
      \min(s_{i'}-i, s_{j'}-j)+2\tau-1 &\text{ if }s_{i'}-i \ne s_{j'}-j,\\
      s_{i'}-i + \LCE(s_{i'}, s_{j'}) & \text{ if }s_{i'}-i = s_{j'}-j.
    \end{cases}
      \]
  \end{lemma}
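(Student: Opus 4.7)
The plan is to apply \cref{fct:cons} to the common prefix $X := T[i\dd i+L)$ of length $L := \LCE(i,j) \ge 3\tau-1$, and then use \cref{fct:den} to control the periodic regime. Writing $d_i := s_{i'} - i$ and $d_j := s_{j'} - j$, \cref{fct:cons} delivers one of two alternatives: (i) $d_i = d_j \le L-2\tau$, or (ii) $d_i, d_j > L-2\tau$. In particular, the distinct-offsets branch of the lemma is incompatible with (i), so it forces regime (ii); and once $d_i \le L$ is known in the equal-offsets branch, the identity $T[i\dd s_{i'}) = T[j\dd s_{j'})$ is just a prefix of the length-$L$ match, giving $\LCE(i,j) = d_i + \LCE(s_{i'}, s_{j'})$ for free.

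First I would handle the equal-offsets case $d_i = d_j$, where the task reduces to verifying $d_i \le L$. Under alternative (i) this is immediate from $d_i \le L-2\tau < L$. Under alternative (ii), $d_i \ge L-2\tau+1 \ge \tau$, so the density condition places $i, j \in \R$; since $L \ge 3\tau-1$, the prefixes $T[i\dd i+3\tau-2]$ and $T[j\dd j+3\tau-2]$ coincide and share the same shortest period $p \le \tfrac{1}{3}\tau$. Then \cref{fct:den} gives longest $p$-periodic prefixes of $T[i\dd n]$ and $T[j\dd n]$ of equal length $d_i + 2\tau - 1$, with matching initial $p$ characters (since $p < L$). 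A standard periodicity argument identifies these two periodic prefixes, forcing $L \ge d_i + 2\tau - 1 > d_i$, as required.

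Finally, I would tackle the distinct-offsets case $d_i \ne d_j$, without loss of generality with $d_i < d_j$. Regime (ii) immediately gives $L \le d_i + 2\tau - 1 = \min(d_i, d_j) + 2\tau - 1$, and the same periodicity setup as above (both $i, j \in \R$, common period $p$, periodic prefix lengths $d_i + 2\tau - 1 < d_j + 2\tau - 1$) also yields $L \ge d_i + 2\tau - 1$, since the first $d_i + 2\tau - 1$ characters of the two periodic prefixes still coincide. What remains—and what I expect to be the main obstacle—is ruling out a still longer match, i.e., proving the strict inequality $T[i + d_i + 2\tau - 1] \ne T[j + d_i + 2\tau - 1]$. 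Position $i + d_i + 2\tau - 1$ lies just past the periodic prefix at $i$, so the break of period gives $T[i + d_i + 2\tau - 1] \ne T[i + d_i + 2\tau - 1 - p]$; the same relative offset at $j$ is still inside the longer run, giving $T[j + d_i + 2\tau - 1] = T[j + d_i + 2\tau - 1 - p]$; and since $d_i + 2\tau - 1 - p < L$, the two $p$-shifted characters agree across $i$ and $j$. Chaining these (in)equalities yields the desired mismatch, and hence $L = \min(d_i, d_j) + 2\tau - 1$, completing the case analysis.
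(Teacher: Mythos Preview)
Your proof is correct and follows essentially the same route as the paper's: both use \cref{fct:cons} to dispose of the case where one of $d_i,d_j$ is below $\tau$, and then \cref{fct:den} to control the periodic regime $i,j\in\R$. One remark: your explicit mismatch argument in the distinct-offsets case is superfluous, since you have already obtained both $L\le d_i+2\tau-1$ (from alternative~(ii)) and $L\ge d_i+2\tau-1$ (from the shared periodic prefix), which together pin down $L$ exactly---and this is fortunate, because the character $T[i+d_i+2\tau-1]$ you inspect need not exist when $s_{i'}=n-2\tau+2$.
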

\end{camera}
\begin{full}
  \begin{lemma}\label{lem:lce}
    For positions $i,j$ in $T$ such that $\LCE(i,j)\ge 3\tau-1$,
    let us define $s_{i'}=\suc_\S(i)$ and $s_{j'}=\suc_\S(j)$.
    Then
      \[\LCE(i,j) = \begin{cases}
        \min(s_{i'}-i, s_{j'}-j)+2\tau-1 &\text{ if }s_{i'}-i \ne s_{j'}-j,\\
        s_{i'}-i + \LCE(s_{i'}, s_{j'}) & \text{ if }s_{i'}-i = s_{j'}-j.
      \end{cases}
        \]
    \end{lemma}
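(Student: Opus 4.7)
The plan is to case-split on whether the offsets $d_i:=s_{i'}-i$ and $d_j:=s_{j'}-j$ are equal, using \cref{fct:cons} to control $\LCE(i,j)$ from above and \cref{fct:den} to propagate periodicity for the lower bounds. Throughout, the hypothesis $\LCE(i,j)\ge 3\tau-1$ yields $T[i\dd i+3\tau-2]=T[j\dd j+3\tau-2]$, so by \cref{def:sss} the positions $i$ and $j$ belong jointly to $\R$ or to its complement, with a common shortest period $p:=\per(T[i\dd i+3\tau-2])$ when $i,j\in\R$. A quick range check shows that whenever $d_i>\tau$ the inequality $s_{i'}\le n-2\tau+2$ forces $i\le n-3\tau+1$, so $i$ lies in the range on which the density condition of \cref{def:sss} and \cref{fct:den} apply; the analogous statement holds at $j$.

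Consider first the unequal case, WLOG $d_i<d_j$, and write $d=d_i$. For the upper bound $\LCE(i,j)\le d+2\tau-1$, I would suppose $\LCE(i,j)\ge d+2\tau$ and take $X:=T[i\dd i+d+2\tau)$ (which then occurs at both $i$ and $j$) with $|X|-2\tau=d$: by \cref{fct:cons}, alternative \ref{alt:one} would force $d_i=d_j$, while \ref{alt:two} would force $d_i>d$, both contradictions. For the lower bound $\LCE(i,j)\ge d+2\tau-1$, the subcase $d\le\tau$ is absorbed by the hypothesis $\LCE(i,j)\ge 3\tau-1$. Otherwise $d>\tau$ forces $i\in\R$ via the density condition, and \cref{fct:den} gives that $T[i\dd i+d+2\tau-2]$ has period $p$; applying \cref{fct:den} at $j$ (which is also in $\R$ with the same period $p$ and has $d_j>d$) yields that $T[j\dd j+d+2\tau-2]$ has period $p$ too. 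Since these two length-$(d+2\tau-1)$ strings share a period $p\le\tfrac13\tau$ and agree on their first $3\tau-1\ge p$ characters, they are identical, giving $\LCE(i,j)\ge d+2\tau-1$.

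In the equal case $d_i=d_j=d$, the identity $\LCE(i,j)=d+\LCE(s_{i'},s_{j'})$ follows directly from the definition of LCE as soon as $T[i\dd i+d)=T[j\dd j+d)$ is established. This equality is immediate from the hypothesis if $d\le 3\tau-1$; if instead $d>3\tau-1$ then $d>\tau$, and the periodicity argument from the unequal case applies verbatim to yield the stronger identity $T[i\dd i+d+2\tau-1)=T[j\dd j+d+2\tau-1)$. The main obstacle I expect is precisely this lower bound in the unequal case: \cref{fct:cons} by itself only delivers an upper bound on $\LCE(i,j)$, and obtaining the matching lower bound requires bootstrapping through \cref{fct:den} to transport the common short period from the guaranteed overlap $T[i\dd i+3\tau-2]=T[j\dd j+3\tau-2]$ over the full length of the periodic run rooted at both $i$ and $j$.
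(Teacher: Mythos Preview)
Your proof is correct. The paper's argument uses the same two ingredients (\cref{fct:cons} and \cref{fct:den}) but organizes the case split differently: it first splits on whether $\min(d_i,d_j)<\tau$, using \cref{fct:cons} to force $d_i=d_j$ in that case, and otherwise both positions lie in $\R$, whereupon \cref{fct:den} identifies $T[i\dd s_{i'}+2\tau-2]$ and $T[j\dd s_{j'}+2\tau-2]$ as the \emph{maximal} prefixes with period $p$, and the shorter one is automatically the longest common prefix---giving the upper and lower bounds at once. Your version splits primarily on $d_i=d_j$ versus $d_i\ne d_j$ and, in the unequal case, obtains the upper bound from \cref{fct:cons} by contradiction rather than from the maximality clause of \cref{fct:den}; this is a legitimate and arguably more modular alternative, at the cost of a slightly longer argument.
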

\begin{proof}
  If $\min(s_{i'}-i,s_{j'}-j)< \tau$, then $\min({s_{i'}-i},s_{j'}-j) \le \LCE(i,j) - 2\tau$,
  so  \cref{fct:cons} yields $s_{i'}-i=s_{j'}-j < \tau$.
  Moreover, $T[i\dd s_{i'}+2\tau)=T[j\dd s_{j'}+2\tau)$ and, in particular,
  $T[i\dd s_{i'})=T[j\dd s_{j'})$. Hence, $\LCE(i,j) = s_{i'}-i +\LCE(s_{i'}, s_{j'})$ holds as claimed.

  We now assume that $\min(s_{i'}-i, s_{j'}-j) \ge \tau$.
  Then $i,j\in \R$,
  and $T[i\dd i+3\tau-2] = T[j\dd j+3\tau-2]$ have the same shortest period $p\le \frac13\tau$.
  By \cref{fct:den}, the longest prefix of $T[i\dd n]$ with period $p$ is $T[i\dd s_{i'}+2\tau-2]$
  and the longest prefix of $T[j\dd n]$ with period $p$ is $T[j\dd s_{j'}+2\tau-2]$.
  In particular, one of these prefixes is a prefix of the other.
  If $s_{i'}-i \ne s_{j'}-j$, then the longest common prefix of $T[i\dd n]$ and $T[j\dd n]$
  is the shorter of the two prefixes with period $p$. Hence, $\LCE(i,j)=\min(s_{i'}-i, s_{j'}-j)+2\tau-1$ holds as claimed.
  Otherwise, $T[i \dd s_{i'}+2\tau-2]=T[j \dd s_{j'}+2\tau-2]$ yields $T[i\dd s_{i'})=T[j\dd s_{j'})$ and thus also the claim.
\end{proof}
\end{full}

We are left with determining the values $\LCE(s_i,s_j)$ for $i,j\in [1\dd n'+1]$,
i.e., handling LCE queries for positions in $\S\cup\{n-2\tau+2\}$.
The next result reduces this task to answering LCE queries in~$T'$.

\begin{camera}
\begin{lemma}[\fullonly]\label{lem:lce2}
  If $\ell = LCE_{T'}(i,j)$ for positions $i,j\in [1\dd {n'+1}]$,
  then $\LCE(s_{i},s_{j}) = s_{i+\ell}-s_{i}+\LCE(s_{i+\ell},s_{j+\ell})$.
  Moreover, $\LCE(s_{i},s_{j})< 3\tau$ or $\LCE(s_{i},s_{j}) = \min(s_{i+1}-s_{i}, s_{j+1}-s_{j})+2\tau-1$ holds if $\ell=0$.
\end{lemma}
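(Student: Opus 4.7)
\textbf{Proof plan for \cref{lem:lce2}.}
The plan is to induct on $\ell$, with the main equality $\LCE(s_i,s_j) = s_{i+\ell}-s_i + \LCE(s_{i+\ell},s_{j+\ell})$ being trivial at $\ell=0$ and the ``Moreover'' clause being exactly the new content in the base case. For the boundary case $i=n'+1$ or $j=n'+1$, one of the suffixes has length at most $2\tau-1$, so $\LCE(s_i,s_j)<3\tau$ holds immediately, taking care of the ``Moreover'' part. Henceforth I may assume $i,j\in[1\dd n']$.

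For the base case ($\ell=0$), there are two subcases. If the string parts of $T'[i]$ and $T'[j]$ differ, then $T[s_i\dd s_i+3\tau)\ne T[s_j\dd s_j+3\tau)$, so $\LCE(s_i,s_j)<3\tau$ and we are done. If the string parts agree but $d_i\ne d_j$, then in particular $T[s_i+1\dd s_i+3\tau-1]=T[s_j+1\dd s_j+3\tau-1]$, so $s_i+1\in\R$ iff $s_j+1\in\R$ (as $\R$-membership is a function of this window). The case $d_i=d_j=0$ is excluded here, so both $s_i+1$ and $s_j+1$ lie in $\R$, with a common period $p_i=p_j=:p$. By \cref{fct:den}, the maximal prefixes of $T[s_i+1\dd n]$ and $T[s_j+1\dd n]$ with period $p$ are $T[s_i+1\dd s_{i+1}+2\tau-2]$ and $T[s_j+1\dd s_{j+1}+2\tau-2]$. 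Using the common period together with the initial $3\tau$-agreement (which covers the first $p\le \tau/3$ positions), I would show by straightforward bootstrapping that $T[s_i\dd s_i+k]=T[s_j\dd s_j+k]$ for every $k\le \min(s_{i+1}-s_i,s_{j+1}-s_j)+2\tau-2$, i.e., within the common portion of both periodic regions. Then at $k=\min(s_{i+1}-s_i,s_{j+1}-s_j)+2\tau-1$, the shorter periodic run ends while the longer one still continues; a case split on whether $|d_i|\ne|d_j|$ or ($|d_i|=|d_j|$ with $d_i=-d_j$) and on the signs $\pm$ yields, in each case, that the two characters at this position differ. This gives $\LCE(s_i,s_j)=\min(s_{i+1}-s_i,s_{j+1}-s_j)+2\tau-1$, exactly the ``Moreover'' claim.

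For the inductive step ($\ell\ge 1$), $T'[i]=T'[j]$, which forces $T[s_i\dd s_i+3\tau)=T[s_j\dd s_j+3\tau)$ and $d_i=d_j$. I would first prove $s_{i+1}-s_i=s_{j+1}-s_j$: if $d_i=d_j=0$, apply \cref{fct:cons} to the common substring $T[s_i+1\dd s_i+3\tau)$ (of length $3\tau-1\ge 2\tau$) and combine with the density condition; if $d_i=d_j\ne 0$, equality follows directly from $|d_i|=|d_j|=n-(s_{i+1}-s_i)$. Next I would show $T[s_i\dd s_{i+1})=T[s_j\dd s_{j+1})$: if $s_{i+1}-s_i\le 3\tau$ this is immediate from the string part of $T'$, and otherwise (necessarily the periodic case with $d_i\ne 0$), the same bootstrapping argument as in the base case propagates equality through the periodic prefix. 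Consequently $\LCE(s_i,s_j)=(s_{i+1}-s_i)+\LCE(s_{i+1},s_{j+1})$, and since $\LCE_{T'}(i+1,j+1)=\ell-1$, the inductive hypothesis rewrites the second term as $(s_{i+\ell}-s_{i+1})+\LCE(s_{i+\ell},s_{j+\ell})$, yielding the desired identity after telescoping.

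The main obstacle I anticipate is the periodic bootstrapping in the base and inductive steps, namely verifying that equality of $T[s_i\dd s_i+k]$ and $T[s_j\dd s_j+k]$ propagates from the initial $3\tau$-agreement through the full common portion of the two period-$p$ runs, and then identifying the precise position where the two suffixes diverge. Everything else is case analysis or a direct consequence of \cref{fct:den} and \cref{fct:cons}.
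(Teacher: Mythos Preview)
Your plan is correct and follows the same inductive skeleton as the paper, but the execution differs in one notable way: you re-derive the periodicity ``bootstrapping'' by hand, whereas the paper delegates this work to \cref{lem:lce}. Concretely, in the inductive step the paper simply applies \cref{lem:lce} to $s_i+1$ and $s_j+1$ (having established $s_{i+1}-s_i=s_{j+1}-s_j$ exactly as you do) to get $\LCE(s_i,s_j)=s_{i+1}-s_i+\LCE(s_{i+1},s_{j+1})$ in one line; your propagation argument through the periodic region is effectively re-proving that lemma's content. Similarly, in the base case when $|d_i|\ne|d_j|$ (equivalently $s_{i+1}-s_i\ne s_{j+1}-s_j$), the paper invokes \cref{lem:lce} directly to obtain $\LCE(s_i,s_j)=\min(s_{i+1}-s_i,s_{j+1}-s_j)+2\tau-1$, and only the remaining subcase $d_i=-d_j$ requires the explicit period-breaking comparison that you describe. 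So your approach is more self-contained, while the paper's is shorter by factoring the shared periodicity reasoning into \cref{lem:lce}.

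One small point to tighten: the step ``the case $d_i=d_j=0$ is excluded here, so both $s_i+1$ and $s_j+1$ lie in $\R$'' needs a word of justification that $d_i=0$ and $d_j=0$ are in fact \emph{equivalent} (not just not-both-true). You can get this either from \cref{fct:cons} applied to $T[s_i+1\dd s_i+3\tau)$, or by your own route (if $d_i\ne 0$ then $s_i+1\in\R$, hence $s_j+1\in\R$ by equality of windows, hence $d_j\ne 0$ by the density condition); just make the chain explicit.
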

\end{camera}
\begin{full}
  \begin{lemma}\label{lem:lce2}
    If $\ell = LCE_{T'}(i,j)$ for positions $i,j\in [1\dd {n'+1}]$,
    then $\LCE(s_{i},s_{j}) = s_{i+\ell}-s_{i}+\LCE(s_{i+\ell},s_{j+\ell})$.
    Moreover, $\LCE(s_{i},s_{j})< 3\tau$ or $\LCE(s_{i},s_{j}) = \min(s_{i+1}-s_{i}, s_{j+1}-s_{j})+2\tau-1$ holds if $\ell=0$.
  \end{lemma}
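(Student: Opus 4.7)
The plan is to proceed by induction on $\ell = \LCE_{T'}(i,j)$.

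\textbf{Inductive step ($\ell \ge 1$).} The equality $T'[i]=T'[j]$ is exactly the situation analyzed in the proof of \cref{lm:sort-s-general}, where it is shown to imply $s_{i+1}-s_i=s_{j+1}-s_j$ and $T[s_i\dd s_{i+1})=T[s_j\dd s_{j+1})$: in the $d_i=d_j=0$ branch this follows from \cref{fct:cons} applied to the common length-$3\tau$ prefix; in the $d_i=d_j\ne 0$ branch the maximal $p$-periodic prefixes $P_i,P_j$ supplied by \cref{fct:den} share length $n+2\tau-2-|d_i|$ and one is a prefix of the other (both extend $T[s_i+1\dd s_i+3\tau)$), so $P_i=P_j$. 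Hence $\LCE(s_i,s_j)=(s_{i+1}-s_i)+\LCE(s_{i+1},s_{j+1})$, and the inductive hypothesis applied to $\LCE_{T'}(i+1,j+1)=\ell-1$ yields the claim.

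\textbf{Base case ($\ell = 0$).} If $s_i>n-3\tau+1$ or $s_j>n-3\tau+1$ (one of the suffixes is shorter than $3\tau$), or if the length-$3\tau$ first components of $T'[i]$ and $T'[j]$ disagree, then $\LCE(s_i,s_j)<3\tau$ is immediate. Otherwise the first components are equal of length $3\tau$, so $d_i\ne d_j$. I first rule out the asymmetric case $d_i=0\ne d_j$: $d_j\ne 0$ combined with the density condition forces $s_j+1\in\R$ with $p:=\per(T[s_j+1\dd s_j+3\tau-1])\le\tfrac13\tau$; equality of first components transfers the period, so $s_i+1\in\R$ too; but then \cref{fct:den} requires the maximal $p$-periodic prefix $P_i=T[s_i+1\dd s_{i+1}+2\tau-2]$ to have length $s_{i+1}-s_i+2\tau-2\le 3\tau-2$, contradicting the already exhibited length-$(3\tau-1)$ $p$-periodic prefix $T[s_i+1\dd s_i+3\tau-1]$.

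Thus both $d_i,d_j\ne 0$, the common period $p=p_i=p_j$ is well-defined, and \cref{fct:den} supplies $P_i,P_j$ with $|P_i|=n+2\tau-2-|d_i|$ and $|P_j|=n+2\tau-2-|d_j|$. Both extend $T[s_i+1\dd s_i+3\tau)$ of period $p$, so one is a prefix of the other. Mimicking the sign analysis of \cref{lm:sort-s-general}: when $|d_i|\ne|d_j|$, the shorter of $P_i,P_j$ is a proper prefix of the longer; at offset $\min(|P_i|,|P_j|)+1$ past $s_i+1$ and $s_j+1$, one text holds the ``break'' character (different from the periodic value $P_i[|P_i|-p+1]$ by maximality) while the other still holds the periodic value (equal to $P_i[|P_i|-p+1]=P_j[|P_j|-p+1]$), forcing a mismatch; when $|d_i|=|d_j|$ with $d_i=-d_j$, both $P_i$ and $P_j$ coincide and the two break characters flank the common periodic value from opposite sides, again mismatching at offset $|P_i|+1$. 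In every subcase $\LCE(s_i+1,s_j+1)=\min(|P_i|,|P_j|)$, so $\LCE(s_i,s_j)=1+\min(|P_i|,|P_j|)=\min(s_{i+1}-s_i,s_{j+1}-s_j)+2\tau-1$, matching the claim.

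The main obstacle is eliminating the asymmetric case $d_i=0\ne d_j$: the density condition alone does not contradict $s_{i+1}-s_i\le\tau$, and one must combine it with the length bound on the maximal $p$-periodic prefix from \cref{fct:den}. The subsequent mismatch analysis mirrors computations already carried out in the proof of \cref{lm:sort-s-general}, so the only genuinely new bookkeeping is converting those sign-based comparisons into the mismatch offsets needed for the LCE identity.
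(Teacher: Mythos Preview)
Your argument is correct in substance and follows the same inductive skeleton as the paper, but it deviates in two places and leaves one small hole.

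\textbf{Differences from the paper.} The paper does not redo the periodic-prefix analysis of \cref{lm:sort-s-general}. Instead, once $T'[i]=T'[j]$ yields $\LCE(s_i,s_j)\ge 3\tau$, it invokes \cref{lem:lce} for $s_i+1,s_j+1$ to obtain $\LCE(s_i,s_j)=s_{i+1}-s_i+\LCE(s_{i+1},s_{j+1})$ directly, needing only $s_{i+1}-s_i=s_{j+1}-s_j$ (which follows from \cref{fct:cons} when $d_i=d_j=0$ and from $|d_i|=|d_j|$ otherwise). Likewise, in the base case with $s_{i+1}-s_i\ne s_{j+1}-s_j$ the paper again cites \cref{lem:lce}, so only the subcase $d_i=-d_j$ requires the explicit mismatch computation you carry out. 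For the asymmetric case $d_i=0\ne d_j$, the paper observes in one line that \cref{fct:cons} makes $d_i=0$ and $d_j=0$ equivalent; your route through \cref{fct:den} is valid but longer. In short, both proofs are correct; the paper's is shorter because \cref{lem:lce} already packages the ``shorter periodic prefix determines the mismatch'' reasoning.

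\textbf{A small gap.} Your mismatch analysis silently assumes that the ``break character'' at offset $\min(|P_i|,|P_j|)+1$ exists in both suffixes. This fails when the shorter of $s_{i+1},s_{j+1}$ equals the sentinel $n-2\tau+2$ (equivalently $i=n'$ or $j=n'$): then the corresponding suffix $T[s_{\cdot}+1\dd n]$ \emph{is} the shorter $P_{\cdot}$ and simply ends, so there is no break character to compare. The conclusion $\LCE(s_i+1,s_j+1)=\min(|P_i|,|P_j|)$ is still correct (one suffix is a prefix of the other), but you should state this case explicitly; the paper does so with the clause ``$i=n'$ and hence $s_{i+1}+2\tau-2=n$, or \ldots''.
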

\begin{proof}
  We prove the first claim inductively. The base case of $\ell = 0$ holds trivially,
  so let us consider $\ell > 0$.
  We then have $i,j\in [1\dd n']$ and $T'[i]=T'[j]$.
  This equality yields $\LCE(s_{i},s_{j})\ge 3\tau$, so we may use \cref{lem:lce} for $s_i+1$ and $s_j+1$ to obtain $\LCE(s_{i},s_{j}) = s_{i+1}-s_{i}+\LCE(s_{i+1},s_{j+1})$
  provided that $s_{i+1}-s_{i}=s_{j+1}-s_{j}$.
  The latter equality follows from \cref{fct:cons} if $d_i = d_{j} = 0$ and from $d_i = d_j$ otherwise.
  We derive the final claim by applying the inductive hypothesis for $i+1$ and $j+1$;
  note that $\LCE_{T'}(i+1,j+1)=\ell-1$.

  Next, let us prove the second claim for $\ell=0$. It holds trivially if $\LCE(s_{i},s_{j})< 3\tau$.
  Otherwise, $i,j\in [1\dd n']$, which implies $T'[i]\ne T'[j]$ and  $d_i \ne d_j$.
  Since \cref{fct:cons} gives equivalence between $d_i=0$ and $d_j=0$, we conclude that $d_i,d_j \ne 0$.
  If $s_{i+1}-s_{i} \ne s_{j+1}-s_j$, then we may use \cref{lem:lce} to prove
   $\LCE(s_{i}+1,s_{j}+1) = \min(s_{i+1}-s_{i}-1, s_{j+1}-s_{j}-1)+2\tau-1$,
  which yields the claimed equality.
  Otherwise, we must have $d_i=-d_j$; assume $d_i<0<d_j$ without loss of generality.
  By \cref{fct:den}, $T[s_{i}\dd s_{i+1}+2\tau-2]=T[s_{j}\dd s_{j+1}+2\tau-2]$ has
  period $p=\per(T[s_i+1\dd s_i+3\tau))$.
  Furthermore, $i=n'$ and hence $s_{i+1}+2\tau-2=n$, or
  $T[s_{i+1}+2\tau-1]\prec T[s_{i+1}+2\tau-1-p]=T[s_{j+1}+2\tau-1-p]\prec T[s_{j+1}+2\tau-1]$.
  In either case, we have $\LCE(s_i,s_j) = s_{i+1}-s_{i}+2\tau-1$, which concludes the proof.
\end{proof}
\end{full}

We are now ready to describe the complete query algorithm determining
$\LCE(i,j)$ for two positions $i,j$ in $T$.
We start by using \cref{prop:packed} to compare the first $3\tau$ symbols of $T[i\dd n]$ and $T[j\dd n]$.
If we detect a mismatch, the procedure is completed.
Otherwise, we compute the indices $i',j'\in [1\dd n'+1]$ of $s_{i'}=\suc_\S(i)$ and $s_{j'}=\suc_\S(j)$
using rank queries on the bitvector $B$.
If $s_{i'}-i \ne s_{j'}-j$, then we answer the query $\LCE(i,j)=\min(s_{i'}-i, s_{j'}-j)+2\tau-1$ according to \cref{lem:lce}.
Otherwise, \cref{lem:lce} yields $\LCE(i,j)=s_{i'}-i + \LCE(s_{i'},s_{j'})$, and it remains to compute $\LCE(s_{i'},s_{j'})$.
For this, we query for $\ell = \LCE_{T'}(i',j')$ 
and note that $\LCE(s_{i'},s_{j'})=s_{i'+\ell}-s_{i'} + \LCE(s_{i'+\ell},s_{j'+\ell})$ by \cref{lem:lce2}.
Finally, we are left with determining the latter LCE value. 
We start by comparing the first $3\tau$ symbols of $T[s_{i'+\ell}\dd n]$ and $T[s_{j'+\ell}\dd n]$.
If we detect a mismatch, the procedure is finished.
Otherwise, we compute $\LCE(s_{i'+\ell},s_{j'+\ell})=\min(s_{i'+\ell+1}-s_{i'+\ell},s_{j'+\ell+1}-s_{j'+\ell})+2\tau-1$
according to \cref{lem:lce2}. This completes the algorithm.

Before we conclude, note that given a synchronizing set of size $\Oh(\frac{n}{\tau})$ for $\tau = \Oh(\log_\sigma n)$,
the data structure can be constructed in $\Oh(\frac{n}{\tau})$ time.
This follows from \cref{thm:sort-s} (building $T'$), \cref{prop:lce} (LCE queries in $T'$),
and \cref{prp:rksel} ($\rank_B$ queries).
If $\tau \le \eps\log_{\sigma }n $ for a positive constant $\eps < \frac15$, then \cref{lem:sync_fast}~also lets us compute
an appropriate $\tau$-synchronizing set in $\Oh(\frac{n}{\tau})$ time.
The overall construction time, $\Oh(\frac{n}{\tau})$, is minimized by $\tau=\Theta(\log_{\sigma}n)$.

\begin{theorem}
  \label{thm:construct-lce}
  LCE queries in a text  $T\in[0\dd\sigma)^n$ with $\sigma= n^{\bigO(1)}$
  can be answered in $\Oh(1)$ time after $\Oh(n / \log_{\sigma} n)$-time preprocessing of the packed representation of $T$.
\end{theorem}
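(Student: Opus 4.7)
The plan is to assemble the data structure laid out in the paragraph preceding the theorem and argue that each ingredient meets the promised bounds. Concretely, I would split by the magnitude of $\log_\sigma n$. If $\log_\sigma n = O(1)$, then $n/\log_\sigma n = \Theta(n)$ and \cref{prop:lce} applied directly to $T$ already yields an $O(n)$-time preprocessing supporting $O(1)$-time queries, so nothing further is needed. Otherwise, I fix a small constant $\eps < \tfrac15$ and set $\tau = \lfloor \eps \log_\sigma n\rfloor \ge 1$.

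For the nontrivial case, I would build, in order, the following components, each in $O(n/\tau) = O(n/\log_\sigma n)$ time: (a) a $\tau$-synchronizing set $\S$ of size $O(n/\tau)$, stored as a sorted array, via \cref{lem:sync_fast}; (b) the bitvector $B[1\dd n]$ with $B[i]=1$ iff $i\in\S$, augmented with constant-time rank support via \cref{prp:rksel}, which fits within $O(n/\log n) \subseteq O(n/\log_\sigma n)$ time; (c) the auxiliary string $T'$ of length $n' = |\S| = O(n/\tau)$ over an integer alphabet of size $\sigma^{O(\tau)} = n^{O(1)}$, using \cref{thm:sort-s}; and (d) a constant-time LCE structure on $T'$ via \cref{prop:lce}, which is admissible because its alphabet is polynomial in $n$ and its length is $O(n/\tau)$. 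The packed representation of $T$ itself is given. Summing, the preprocessing time is $O(n/\log_\sigma n)$, as claimed.

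For query correctness, I would argue that the query algorithm spelled out just before the theorem yields the correct answer in $O(1)$ time, invoking each component once. Given $i,j$, first compare the length-$3\tau$ prefixes of $T[i\dd n]$ and $T[j\dd n]$ via \cref{prop:packed}; since $\tau\log\sigma = O(\log n)$, this takes $O(1)$ time and either returns an answer below $3\tau$ or certifies $\LCE(i,j) \ge 3\tau-1$, which is the hypothesis of \cref{lem:lce}. Compute $i', j'$ with $s_{i'} = \suc_\S(i)$, $s_{j'} = \suc_\S(j)$ via two $\rank_B$ queries; then \cref{lem:lce} reduces the query either to the closed-form answer $\min(s_{i'}-i, s_{j'}-j) + 2\tau - 1$ or to $s_{i'}-i + \LCE(s_{i'}, s_{j'})$. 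In the latter case, one query $\ell = \LCE_{T'}(i',j')$ in the structure on $T'$ plus \cref{lem:lce2} further reduces to $\LCE(s_{i'+\ell}, s_{j'+\ell})$, which by the second clause of \cref{lem:lce2} is either smaller than $3\tau$ (and finished by one more call to \cref{prop:packed}) or equal to $\min(s_{i'+\ell+1}-s_{i'+\ell}, s_{j'+\ell+1}-s_{j'+\ell}) + 2\tau - 1$, computed in $O(1)$ time using the stored array of $\S$. Each step is $O(1)$, giving the overall $O(1)$ query time.

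There is essentially no obstacle beyond bookkeeping: \cref{lem:lce} and \cref{lem:lce2} have already performed the substantive case analysis, and the complexity budget of each building block has been established earlier in the paper. The only care required is the choice of $\tau$ so that simultaneously (i) $\tau\log\sigma = O(\log n)$, ensuring $O(1)$-time packed substring comparisons and the alphabet bound on $T'$; (ii) $\tau \le \eps \log_\sigma n$, so that \cref{lem:sync_fast} applies; and (iii) $n/\tau = O(n/\log_\sigma n)$. The choice $\tau = \lfloor \eps \log_\sigma n\rfloor$ satisfies all three whenever it is positive, and the degenerate regime is absorbed by the fallback to \cref{prop:lce}.
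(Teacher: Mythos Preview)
Your proposal is correct and follows essentially the same approach as the paper: assemble the synchronizing set via \cref{lem:sync_fast}, the rank-augmented bitvector via \cref{prp:rksel}, the string $T'$ via \cref{thm:sort-s}, and the LCE structure on $T'$ via \cref{prop:lce}, then answer queries exactly as described using \cref{lem:lce} and \cref{lem:lce2}. Your explicit treatment of the degenerate case $\log_\sigma n = O(1)$ (where $\tau$ would be zero and one falls back to \cref{prop:lce} on $T$ itself) is a sensible bit of bookkeeping that the paper leaves implicit.
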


\section{BWT Construction}
\label{sec:bwt}

Let $T\in[0\dd \sigma)^n$, for $\log \sigma\leq \sqrt{\log n}$, be a text
given in the packed representation, and let $\S$ be a
$\tau$-synchronizing set of $T$ of size $\bigO(\frac{n}{\tau})$, where
$\tau=\eps\log_{\sigma}n$ for some sufficiently small constant
$\eps>0$. We assume that $\tau$ is a positive integer and that $3\tau-1 \leq n$.

In this section, we show how to construct the BWT of $T$ in $\bigO(n \log
\sigma /\sqrt{\log n})$ time and $\bigO(n /\log_\sigma n)$ space.
For simplicity, we first restrict ourselves to a binary
alphabet.  The time and space complexities then simplify~to~$\bigO(n/\sqrt{\log n})$ and $\bigO(n/\log n)$.

\subsection{Binary Alphabet}
\label{sec:bwt-binary-alphabet}

Similarly as in previous sections, we first assume $\R = \emptyset$ (note that this implies $\S \neq \emptyset$ due to $3\tau - 1 \leq n$).
In
\cref{sec:bwt-general-case}, we consider the general case and describe the
remaining parts of our construction.

\subsubsection{The Nonperiodic Case}
\label{sec:bwt-simple-case}

To compensate for the lack of a sentinel $T[n]=\$$
(see \cref{sec:prelim}), let us choose $b_{\$}\in\{0,1\}$ such
that $\per(X)>\frac13\tau$ holds for $X=b_{\$}T[1\dotdot 2\tau)$.
Using packed string matching~\cite{DBLP:journals/tcs/Ben-KikiBBGGW14}, we add 
to $\S$ all positions where $X$ occurs in $T$.
This increases $|\S|$
by $\bigO(\frac{n}{\tau})$ and does not violate~\cref{def:sss}.
Denote by $(s_{i}')_{i\in [1\dd |\S|]}$ the set $\S$,
sorted using \cref{thm:sort-s} according to the order of the
corresponding suffixes, i.e., $T[s_{i}'\dotdot
n]\prec T[s_{j}'\dotdot n]$ if $i<j$. Define a
sequence $W$ of length $|\S|$ so that $W[i]\in[0\dd
2^{3\tau})$ is an integer whose base-2 representation is
$\revbits{T[s_i'-\tau\dotdot s_i'+2\tau)}$, where $\revbits{X}$  
denotes the string-reversal operation.\footnote{Whenever
$T[k]$ is out of bounds, we let $T[k]=b_{\$}$ if
$k=0$ and $T[k]=0$ otherwise.}  In the
word RAM model with word size $w=\Omega(\log n)$, reversing any
$\Oh(\log n)$-bit string takes $\bigO(1)$ time after
$\bigO(n^\delta)$-time ($\delta < 1)$ preprocessing.  Thus,
$W[1\dotdot |\S|]$ can be constructed in
$\bigO(|\S|+n^{\delta})=\bigO(n/\log n)$ time.

Recall that the density condition simplifies to
$\S \cap {[i\dd i+\tau)} \neq \emptyset$ for $i \in [1\dd
n-3\tau+2]$ if $\R=\emptyset$.  Thus, except for $\bigO(\tau)$ rightmost
symbols, every symbol of $T$ is included in at least one character
of~$W$.  In principle, it suffices to rearrange these bits to obtain
the BWT.  For this, we utilize as a black box the wavelet tree of $W$ and prove that its construction performs the necessary
permuting. We are then left with a task of copying the bits from the wavelet tree.

More precisely, we show how to extract (almost) all bits of the BWT of $T$ from
the bitvectors $B_X$ in the wavelet tree of $W$ in $2^{\Theta(\tau)}+\bigO(n/\log n)$
time. Intuitively, we partition the BWT into $2^{\Theta(\tau)}$
blocks that appear as bitvectors $B_X$.

A similar string $W$ was constructed in \cite{DBLP:journals/algorithmica/ChienHSTV15}
for an evenly distributed set of positions. In that case, however, the bitvectors in the wavelet tree
form non-contiguous subsequences of the BWT.
\paragraph{Distinguishing Prefixes}
To devise the announced partitioning of the BWT, for $j\in [1\dd \max \S]$
let  $D_j=T[j\dotdot \suc_\S(j)+2\tau)$ be the \emph{distinguishing prefix} of $T[j\dd n]$.  
By the density condition for $\R=\emptyset$, $\suc_\S(j)-j < \tau$ and thus $|D_j|\leq 3\tau-1$.
Let
$\mathcal{D}=\{D_j : j \leq \max \S\}$.

Recall that $B_X$ is the bitvector associated with the node $v_X$ whose
root-to-node label in the wavelet tree of $W$ is $X$. By
definition of the wavelet tree (applied to $W$), for any $X \in
\{0,1\}^{\leq3\tau-1}$, $B_{\revbits{X}}$ contains the bit preceding $X$ from each string $T[s_i'-\tau\dotdot
s_i'+ 2\tau)$ that has $X$ as a suffix. (The order of these bits matches the sequence $(s'_i)_{i\in [1\dd |\S|]}$.)

\begin{lemma}\label{lm:dp-local-consistency}
  \begin{enumerate}[label=\rm(\arabic*)]
  \item\label{it:dpc1} If $T[j\dotdot j+|X|)=X$ for $X\in \mathcal{D}$, then $D_j=X$.
  \item\label{it:dpc2} If $\SA[b\dotdot e]$ includes all suffixes of $T$ having $X\in \mathcal{D}$
  as a prefix, then $\BWT[b\dotdot e]=B_{\revbits{X}}$.
  \end{enumerate}
\end{lemma}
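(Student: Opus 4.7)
The plan for part~(1) is to apply the consistency condition via \cref{fct:cons} to two occurrences of $X$ in $T$. Since $X\in\D$, there is some $j'\le \max\S$ with $X=D_{j'}=T[j'\dd \suc_\S(j')+2\tau)$; in particular $|X|\in[2\tau,3\tau)$. The hypothesis says $X$ also occurs at position $j$. \Cref{fct:cons} then offers two alternatives, but its alternative~(ii) would force $\suc_\S(j')-j'>|X|-2\tau$, contradicting $|X|=\suc_\S(j')-j'+2\tau$. Hence alternative~(i) applies and gives $\suc_\S(j)-j=\suc_\S(j')-j'=|X|-2\tau$. A brief appeal to the consistency condition (applied to the $2\tau$-window starting at $\suc_\S(j')$, which coincides with the $2\tau$-window starting at $\suc_\S(j)$) shows that $\suc_\S(j)\in\S$ rather than the sentinel, so $D_j=T[j\dd \suc_\S(j)+2\tau)=T[j\dd j+|X|)=X$ follows immediately.

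For part~(2), I would set up a bijection between positions $j\in\SA[b\dd e]$ and indices $i\in[1\dd |\S|]$ for which the top $|X|$ bits of $W[i]$ spell $\revbits{X}$, and then check that the bijection preserves both the ordering and the stored bits. Unpacking $W[i]=\revbits{T[s_i'-\tau\dd s_i'+2\tau)}$, the condition on the top bits is equivalent to $T[s_i'+2\tau-|X|\dd s_i'+2\tau)=X$; setting $j:=s_i'+2\tau-|X|$ yields $T[j\dd j+|X|)=X$, and part~(1) then identifies $\suc_\S(j)=s_i'$. Conversely, for $j\in\SA[b\dd e]$, part~(1) supplies $\suc_\S(j)=j+|X|-2\tau\in\S$, producing the matching~$i$.

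It then remains to verify the two orderings and the bit values. In $\BWT[b\dd e]$, entries are listed in the order of $T[j\dd n]$; since all such suffixes share the prefix $X$, this coincides with sorting by $T[j+|X|\dd n]=T[s_i'+2\tau\dd n]$. In $B_{\revbits{X}}$, the bits appear in the order of $i$, which by \cref{thm:sort-s} corresponds to the lexicographic order of $T[s_i'\dd n]$; since $T[s_i'\dd s_i'+2\tau)$ equals the last $2\tau$ characters of $X$ for every participating $i$, this too reduces to ordering by $T[s_i'+2\tau\dd n]$. Finally, the bit contributed to $B_{\revbits{X}}$ at index $i$ is the $(|X|+1)$-th most significant bit of $W[i]$, which by the reversal equals $T[s_i'+2\tau-|X|-1]=T[j-1]$, matching $\BWT$ at the corresponding position. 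The most delicate point I anticipate is the boundary bookkeeping: verifying $j\le \max\S$ so that $D_j$ is defined, and ensuring that $T[j-1]$ is never out of bounds or set to the wrong value. The preceding extension of $\S$ by all occurrences of $b_\$ T[1\dd 2\tau)$, combined with the nonperiodic assumption $\R=\emptyset$ (which forces $\suc_\S(j)-j<\tau$ throughout $[1\dd n-3\tau+2]$), should rule out exactly these edge cases.
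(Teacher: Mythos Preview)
Your proof is correct and follows essentially the same route as the paper's. For part~(1), both arguments invoke \cref{fct:cons} on the two occurrences of $X$; your extra appeal to the consistency condition to rule out the sentinel is valid but redundant, since $\suc_\S(j)=j+|X|-2\tau\le n-2\tau+1$ already follows from $j+|X|-1\le n$. For part~(2), both proofs set up the same bijection $j\leftrightarrow i$ via $j=s_i'+2\tau-|X|$, reduce the two orderings to a common key (you use $T[s_i'+2\tau\dd n]$, the paper uses $T[s_i'\dd n]$; both work), and identify the relevant bit of $W[i]$ as $T[j-1]$. Your remarks on the boundary cases (out-of-bounds $T[j-1]$ and the role of the $b_\$ T[1\dd 2\tau)$ extension) match the paper's brief parenthetical; neither writes this out in full detail.
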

\begin{proof}
  \ref{it:dpc1} Let $X=D_i$, i.e., $X = T[i\dd \suc_\S(i)+2\tau)$.
  Since $X$ also occurs at position $j$ and $\suc_\S(i)-i\le |X|-2\tau$, we have $\suc_\S(j)-j = \suc_\S(i)-i$
  by \cref{fct:cons}. Consequently, $D_j = T[j\dd \suc_\S(j)+2\tau)=X$.

  \ref{it:dpc2} By the above discussion, $B_{\revbits{X}}$ contains the bits
  preceding $X$ as suffixes in $\smash{(T[s_i'-\tau\dotdot
  s_i'+2\tau))_{i\in [1\dd |\S|]}}$. From \ref{it:dpc1}, there is a bijection
  between the occurrences of $X$ in $T$ and such suffixes
  (importantly, $b_{\$}T[1\dotdot s_i'+2\tau)$ for $s_i'\in \S$ is not a suffix of $X$
  due to the modification of $\S$, so $X$ is never compared against out-of-bounds symbols of $T$ in $W$).
  By definition of $\BWT[b\dotdot e]$ and $|X|\leq
  3\tau - 1$, $B_{\revbits{X}}$ and $\BWT[b\dotdot e]$ indeed contain
  the same (multisets of) bits.

  To show that the bits of $\BWT[b\dotdot e]$ occur in $B_{\revbits{X}}$
  in the same order, observe that $T[s_{i}'+2\tau-|X|\dotdot n] \prec
  T[s_j'+2\tau-|X|\dotdot n]$ holds if $T[s_{i}'-\tau\dotdot
  s_{i}'+2\tau)$ and $T[s_{j}'-\tau\dotdot      
  s_{j}'+2\tau)$ have $X$ as a suffix for $i<j$.
  This is because
  $T[s_{i}'+2\tau-|X|\dd s_{i}')=T[s_{j}'+2\tau-|X|\dd s_{j}')$
  is a prefix of $X$, and 
  we have $T[s_{i}'\dotdot n] \prec T[s_{j}'\dotdot
  n]$ by $i < j$.
\end{proof}

\paragraph{The Algorithm}
We start by building the string $W$ and its wavelet tree.  By
\cref{thm:wavelet_tree}, this takes $\bigO((n/\tau) \allowbreak
\log(2^{3\tau}) / \sqrt{\log(n/\tau)}) = \bigO(n/\sqrt{\log n})$ time
and $\bigO(n/\log n)$ space.

Next, we create a lookup table that, for any
$X\in\{0,1\}^{2\tau}$, tells whether $X$ occurs at a position $j\in \S$
(by the consistency condition, $j\in \S$ for every position $j$ where $X$ occurs in $T$). 
It needs $\bigO(n^{2\eps})$ space and is easily
filled: set ``yes'' for each $T[j\dotdot j+2\tau)$ with $j\in \S$.

Initialize the output to an empty string. Consider the preorder
traversal of a complete binary tree of depth $3\tau-1$ with each edge
to a left child labeled ``0'' and to a right child---``1''.
Whenever we visit a node with root-to-node path $X$ such that $|X|\ge 2\tau$, we check if the length-$2\tau$ suffix of $X$ is a
``yes'' substring.  If so, we report $X$ and skip the traversal of the current
subtree. Otherwise, we descend into the subtree. 
This procedure enumerates $\mathcal{D}$ in the lexicographic
order in $\bigO(2^{3\tau}) = \bigO(n^{3\eps})$ time. For each reported
substring $X$, we append $B_{\revbits{X}}$ to the output
string. Locating $v_{\revbits{X}}$ takes $\bigO(|X|)=\bigO(\log n)$
time; hence, we spend $\bigO(n/\log n + n^{3\eps}\log
n)=\bigO(n/\log n)$ time in total.

The above traversal outputs a BWT subsequence containing the symbols preceding
positions in $[1\dotdot \max \S]$. To include the missing symbols, we
make the following adjustment: while visiting a node with label $X\notin \mathcal{D}$,
we check if $X$ occurs as a suffix of $T$. If so, then before descending into the subtree, we append the preceding character $T[n-|X|]$ 
to the output string.

The algorithm runs in $\bigO(n/\sqrt{\log n})$ time and uses
$\bigO(n/\log n)$ space. For correctness, observe that the set $\mathcal{D}$ is
prefix-free by \cref{lm:dp-local-consistency}.  Thus, no symbol is
output twice. 

To complete the construction, we need $\SA^{-1}[1]$ (see
\cref{sec:sa}). 
Let $D_1$ be the distinguishing prefix of $T[1\dd n]$ and let $i_1$ be the index of $\min \S$
in $(s_i')_{i\in[1\dotdot |\S|]}$. Observe that the symbol $T[0]=b_\$$ occurs in $B_{\revbits{D_1}}$
at position $|\{i\le i_{1} : \revbits{D_1} \text{ is a prefix of }W[i]\}|$, which can be determined in $\Oh(|\S|)$ time.
Appending $B_{\revbits{D_1}}$ to the constructed BWT, 
we map this position in $B_{\revbits{D_1}}$ to the corresponding one in the BWT.
Finally, we overwrite $b_{\$}$ by setting $\BWT[\SA^{-1}[1]]=T[n]$.

\subsubsection{The General Case}
\label{sec:bwt-general-case}
Let use define
\[\F = \{X\in \{0,1\}^{3\tau-1} : X'\notin \D \text{ for every prefix }X'\text{ of }X\}.\]
Observe that if $T[j\dotdot j+3\tau-1)\in \F$, then $j\in \R$.  Conversely, whenever
$j\in \R$, then $T[j\dotdot j+3\tau-1)\in \F$.  Thus,    
$\R$ contains precisely the starting positions of all strings in
$\F$. Hence, in the general case with $\R$ possibly non-empty,
 the algorithm of \cref{sec:bwt-simple-case} outputs
the BWT subsequence missing exactly the symbols $T[j-1]$ for~$j\in\R$.

The crucial property of $\R$ that allows handling the general case is
that $\R$ cannot have many ``gaps''. Moreover, whenever
$X\in\F$ occurs at a position
$j\in\R$ with $j-1\in\R$, then $T[j-1]$ depends~on~$X$.

\begin{lemma}
  \label{lm:properties-of-Rprim}
  Let $\R'=\{j\in\R : j-1\notin\R\}$ be a subset of $\R$. Then:
  \begin{enumerate}[label=\rm(\arabic*)]
  \item\label{it:pr1} $|\R'| \leq |\S|+1$.
  \item\label{it:pr2} If $X\,{=}\,T[j\dotdot j+|X|)\,{\in}\,\F$ and
    $j\,{\notin}\,\R'$, then $T[j-1]\,{=}\,X[\per(X)]$.
  \end{enumerate}
\end{lemma}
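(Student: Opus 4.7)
For part (1), I would exhibit an injection $\R' \setminus \{1\} \hookrightarrow \S$ via $j \mapsto j-1$. If $j \in \R'$ with $j \ge 2$, then the density condition applied to $j \in \R$ gives $\S \cap [j\dd j+\tau) = \emptyset$, while $j-1 \notin \R$ gives $\S \cap [j-1\dd j-1+\tau) \ne \emptyset$; since the second interval extends the first only by the position $j-1$, any witness must be $j-1$ itself, so $j-1 \in \S$. Injectivity of $j \mapsto j-1$ is immediate, yielding $|\R' \setminus \{1\}| \le |\S|$, and hence $|\R'| \le |\S| + 1$.

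For part (2), I first observe that $X \in \F$ has length $3\tau-1$ and its occurrence at $j$ forces $j \in \R$, while $j \notin \R'$ together with $j \in \R$ forces $j \ge 2$ and $j-1 \in \R$. Set $p = \per(X)$ and $q = \per(T[j-1\dd j-1+3\tau-1))$; both are at most $\tfrac13\tau$. The density condition for $j-1, j \in \R$ gives $\S \cap [j-1\dd j+\tau) = \emptyset$, so $\suc_\S(j-1) = \suc_\S(j) =: s$ with $s \ge j + \tau$. Setting $e = s + 2\tau - 2$, \cref{fct:den} shows that $T[j-1\dd e]$ is the longest prefix of $T[j-1\dd n]$ with period $q$ and $T[j\dd e]$ is the longest prefix of $T[j\dd n]$ with period $p$, both ending at the common position $e$. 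Since $T[j\dd e]$ inherits period $q$ as a suffix of $T[j-1\dd e]$, the prefix $X = T[j\dd j+3\tau-1)$ has period $q$, and hence $p \le q$.

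The main obstacle is ruling out $p < q$. Assume for contradiction that $p < q$. The Periodicity Lemma applied to $X$ (valid since $p + q \le \tfrac23\tau < 3\tau - 1 = |X|$) shows $\gcd(p,q)$ is a period of $X$; minimality of $p$ then forces $p \mid q$, so write $q = kp$ with $k \ge 2$. Iterating the period $p$ of $T[j\dd e]$ backwards from position $j-1+q$ in $k-1$ steps of size $p$ (all intermediate indices remain in $[j\dd e]$ because $q \le \tfrac13\tau$ is much smaller than $|T[j\dd e]| \ge 3\tau - 1$), I obtain $T[j-1+q] = T[j-1+p]$. Combining this with $T[j-1] = T[j-1+q]$, which follows from the period $q$ of $T[j-1\dd e]$, yields $T[j-1] = T[j-1+p]$. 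Together with the period $p$ of $T[j\dd e]$, this promotes $p$ to a period of the whole $T[j-1\dd e]$ and in particular of its length-$(3\tau-1)$ prefix, contradicting the minimality of $q > p$ there. Hence $p = q$, and finally $T[j-1] = T[j-1+q] = T[j-1+p] = T[j+p-1] = X[\per(X)]$, completing the proof.
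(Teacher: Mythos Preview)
Your proof is correct and follows essentially the same approach as the paper. For part~(1), you give exactly the paper's argument (the injection $j\mapsto j-1$ via the density condition), just spelled out more explicitly. For part~(2), the paper asserts in one line that $\per(X)=\per(T[j-1\dd \suc_\S(j-1)+2\tau-2])$ ``due to \cref{fct:den}'', whereas you carefully justify this equality $p=q$ by applying \cref{fct:den} to both $j-1$ and $j$ and ruling out $p<q$ via a walking-down argument and the Periodicity Lemma---so your version is a more detailed unpacking of the same idea rather than a different route.
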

\begin{proof}
\ref{it:pr1} By density condition, $j\in \R'$ implies $j-1\in \S$ if $j>1$.

\ref{it:pr2} Note that $\per(X)=\per(T[j-1\dd \suc_\S(j-1)+2\tau-2])\le\frac13\tau$
due to \cref{fct:den} and because $\suc_\S(j-1)=\suc_\S(j)\ge j+\tau$.
Hence, $T[j-1]=T[j-1+\per(X)]=X[\per(X)]$.
\end{proof}

Consider thus the following modification: whenever we reach
$X\in\F$ during the enumeration of $\mathcal{D}$, we append to the output a
unary string of $f_X$ symbols $X[\per(X)]$, where $f_X$ is the number
of occurrences of $X$ in $T$. By \cref{lm:properties-of-Rprim},
the number of mistakes in the resulting BWT, over all
$X\in\F$, is only $|\R'|=\bigO(\frac{n}{\tau})$.

To implement the above modification (excluding BWT correction), we
need to compute $\per(X)$ and $f_X$ for every
$X\in\{0,1\}^{3\tau-1}$. The period is determined using a lookup table.

\paragraph{Computing Frequencies of Length-$\ell$ Substrings}
Consider $\lfloor|T|/\ell\rfloor$ blocks of length $2\ell-1$ starting in $T$ at
positions of the form $1+k\ell$ (the last block might be shorter). 
Sort all blocks in $\Oh(\frac{n}{\ell})$ time into a list $L$. Then, scan the list and for each \emph{distinct}
block $B$ in $L$, consider the multiset of all its length-$\ell$ substrings $X$. For each such $X$, increase its frequency by the
frequency of $B$ in $L$.
The correctness follows by noting that $T[i\dd i+\ell)$ 
is contained in the $\ceil{i/\ell}$th block only.

There are at most $1+ 2^{2\ell-1}$ distinct blocks and we spend $\bigO(\ell)$ time
for each. The total running time is therefore 
$\Oh(\frac{n}{\ell}+\ell 2^{2\ell-1})$. 

In our application, $\ell=3\tau-1 < 3\eps\log  n$, so it suffices to choose
 $\eps < \frac16$ so that $\bigO(\frac{n}{\ell}+\ell 2^{2\ell-1}) =
\bigO(n/\log n + n^{6\eps}\log n) = \bigO(n/\log n)$.

\paragraph{Correcting BWT}
We will now show how to compute the rank (i.e., the position in the suffix array) of every suffix of $T$ starting in $\R'$. This will
let us correct the mistakes in the $\BWT$ produced within the
previous step. If $r_j$ is the rank of $T[j\dotdot n]$, where
$j\in\R'$, we set $\BWT[r_j] = T[j-1]$.  To compute $r_j$, we only
need to know $r_j'$: the \emph{local rank} of $T[j\dotdot n]$ among the
suffixes of $T$ starting with $T[j\dotdot j + 3\tau - 1)$
(note that any such suffix starts at $j'\in \R$) since the rank among other suffixes is known
during the enumeration of $\mathcal{D}$.  Formally, for $j\in\R'$, define
$\pos(j) = \{j'\in\R : \LCE_T(j,j')\geq3\tau-1 \text{ and }
T[j'\dotdot n]\preceq T[j\dotdot n]\}$ so that~$r_j'=|\pos(j)|$.

Motivated by \cref{lm:properties-of-Rprim}, we focus on the
properties of \emph{runs} of consecutive positions in $\R$.  We start by
partitioning such runs into classes, where the
computation of local ranks is easier and can be done independently.
For $X\in\F$, we define the \emph{Lyndon root}
$\Lroot(X)=\min\{X[t\dotdot t+p):t\in[1\dotdot p]\}\in\{0,1\}^{\leq
\tau/3} $, where $p=\per(X)$. We further set
$\Lroot(j)=\Lroot(T[j\dotdot j+3\tau-1)) $ for every $j\in\R$.  It is
easy to see that if $j\in\R\setminus\R'$, then
$\Lroot(j-1)=\Lroot(j)$.  Thus, to compute $r_j'$ for some
$j\in\R'$, it suffices to look at the runs starting at $j'\in\R'$
such that $\Lroot(j)=\Lroot(j')$.

Further, for $j\in\R$, let us define $e_{j}=\min\{j' \geq j : j'\not\in\R\}+3\tau-2$.
 We define $\type(j)=+1$ if
$T[e_j]\succ T[e_j-p]$ and $\type(j)=-1$ otherwise, where
$p=\per(T[j\dotdot e_j))$.  Similarly as for the L-root, if
$j\in\R\setminus\R'$, then $\type(j-1)=\type(j)$.  Furthermore, if
$\type(j)=-1$ holds for $j\in\R'$, then $\type(j')=-1$ holds for all
$j'\in\pos(j)$.  Let
$\R^{-}=\{j\in\R : \type(j)=-1\}$, $\R^{+}=\R\setminus \R^{-}$,
$\R'^{-}=\R'\cap \R^{-}$, and $\R'^{+}=\R'\cap\R^{+}$.
In the rest of this section, we
focus on computing $r_j'$ for $j\in\R'^{-}$. The set $\R'^{+}$ is
processed~symmetrically.

To efficiently determine local ranks for a group of runs with the
same L-root, we refine the classification further into individual
elements of $\R$.  Let $U=\Lroot(j)$ for $j\in\R$.  It is easy to see
that the following \emph{L-decomposition} $T[j\dotdot e_j)=U'U^kU''$
(where $k\geq 1$,  $U'$ is a proper suffix  
of $U$, and $U''$ is a proper prefix of $U$) is unique.  We call the triple $(|U'|, k,
|U''|)$ the \emph{L-signature} of $j$ and the value $\Lexp(j)=k$ its
\emph{L-exponent}. By the uniqueness of L-decompositions, given
$j\in\R'^{-}$, we have $\Lexp(j')\leq \Lexp(j)$ for all
$j'\in\pos(j)$.

Note that, letting $\S=(s_i)_{i\in[1\dotdot |\S|]}$ where
$s_i<s_{i'}$ if $i<i'$ and $s_0=0$,
$s_{|\S|+1}=n-2\tau+2$, we have, by the density condition: $\R'=\{s_i+1 :
i\in[0\dotdot |\S|]\text{ and }s_{i+1}\,{-}\,s_{i}\,{>}\,\tau\}$.
Furthermore, whenever $j-1=s_i$ for $j\in\R'$, then
$e_j=s_{i+1}+2\tau-1$. Thus, computing $\R'$ and $\{e_j\}_{j\in\R'}$ (and
also the type of each $j\in\R'$) takes $\bigO(|\S|)$ time.

For $j\in\R'^{-}$, let %
\begin{full}%
  \[ r_j^{=}=|\{j'\in\pos(j) : \Lexp(j') = \Lexp(j)\}| \qquad\text{and}\qquad r_j^{<}=|\{j'\in\pos(j) : \Lexp(j') < \Lexp(j)\}|.\]
\end{full}
\begin{camera}
$r_j^{=}=|\{j'\in\pos(j) : \Lexp(j') = \Lexp(j)\}|$ and
$r_j^{<}=|\{j'\in\pos(j) : \Lexp(j') < \Lexp(j)\}|$.
\end{camera}

To compute $r_j^{=}$ for each $j\in\R'^{-}$, consider sorting the list of
all $j\in\R'^{-}$ first according to $\Lroot(j)$, second
according to $|U''|$ in its L-signature, and third according to
$T[e_j\dotdot n]$. Such ordering can be obtained in $\bigO(|\R'^{-}|)$
time by utilizing the sequence $(s_i')_{i\in[1\dotdot |\S|]}$ and the
fact that $e_j-2\tau+1\in \S$. The L-root and $|U''|$ are computed using lookup
tables. Then, to determine $r_j^{=}$, we count $j'\in\R'^{-}$ with the
same L-root that are not later than $j$ in the list and the factor
$U'U^k$ in their L-decomposition is at least as long as for $j$. To this end,
we issue a 3-sided orthogonal 2D range counting query on an input
instance containing, for every $j\in\R'^{-}$, a point with $|U'U^{k}|$
from its L-decomposition as the first coordinate and its position in the
above list as the second coordinate.  Answering a batch of $m$
orthogonal 2D range counting queries takes $\bigO(m\sqrt{\log m})$
time and $\Oh(m)$ space~\cite{ChanP10}. Since $m=|\R'^{-}|$, this step takes
$\bigO(n/\sqrt{\log n})$ time.

To compute $r_j^{<}$ for each $j\in\R'^{-}$, we sort $j\in\R'^{-}$
first by $\Lroot(j)$, and then by $\Lexp(j)$.  For a fixed
$\smash{U\in\{0,1\}^{\leq \tau/3}}$, let us define
\[\R^{-}_U=\{j\in\R^{-} : \Lroot(j)=U\} \subseteq \R^{-}.\]
Let also $\R'^{-}_U=\R'\cap\R^{-}_U$. The key observation is that there are only $|U|$ different prefixes of
length $3\tau-1$ in $\{T[j\dotdot n] : j\in\R^{-}_U\}$.  We will incrementally compute the frequency of each of these
prefixes $X\in\{0,1\}^{3\tau-1}$ and keep the count in an array
$C[0\dotdot |U|)$, indexed by $t=|U'|$ in the L-decomposition of every
$j\in\R^{-}_U$ with $T[j\dotdot j+3\tau-1)=X$ (denote this set    
as $\R^{-}_{U,t}$).  A single round of the algorithm handles
$\mathsf{H}_k=\{j\in\R'^{-}_U : \Lexp(j)=k\}$.  We execute the rounds
for increasing $k$ and maintain the invariant that $C[t]$ contains $|\{j\in\R^{-}_{U,t} : \Lexp(j) \leq
k\}|$ at the end of round $k$.  At the beginning of round $k$, we use the values of $C$ to
first compute $r_{j}^{<}$ for each $j\in \mathsf{H}_k$ and then update
$C$ to maintain the invariant.  The update consists of increasing some
entries in $C$ for each $j\in \mathsf{H}_k$, and then increasing all
entries in $C$ by the total number $q$ of yet unprocessed positions
(having higher L-exponent), i.e.,
$q=|\R'^{-}_U|-\sum_{i=1}^{k}|\mathsf{H}_i|$.  It is easy see that for
each $j\in\mathsf{H}_k$, the update can be expressed as a constant
number of increments in contiguous ranges of $C$.  Additionally, if
$\mathsf{H}_{k-1}=\emptyset$ and $k_{\rm prev}=\max\{k'<k :
\mathsf{H}_{k'}\neq\emptyset\}$, right at the beginning of round $k$
(before computing $r_j^{<}$ for $j\in\mathsf{H}_k$), we
increment all of $C$ by $q\cdot (k-k_{\rm prev}-1)$, to account for
the skipped L-exponents.  Each update of $C$ takes $\bigO(\log
|U|)=\bigO(\log \log n)$ time if we implement $C$ as a balanced
BST. Thus, the algorithm takes $\bigO(|\R'^{-}| \log \log n) =
\bigO(n/\sqrt{\log n})$ time. Note, that there are
$\bigO(2^{\tau/3})=\bigO(n^{\eps/3})$ different L-roots; hence, we can afford
to initialize $C$ in $\bigO(\log n)$ time for each $U$.

\begin{theorem}
  \label{thm:construct-bwt}
  Given the packed representation of a text ${T\,{\in}\,\{0,1\}^n}$ and its
  $\tau$-synchronizing set $\S$ of size $\bigO(n/\tau)$ for $\tau =
  \eps\log n$, where $\eps > 0$ is a sufficiently small constant, 
  the Burrows--Wheeler transform of $T$ can be constructed in
  $\bigO(n/\sqrt{\log n})$ time and $\bigO(n/\log n)$ space.
\end{theorem}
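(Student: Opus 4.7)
The plan is to express the BWT as a concatenation of contiguous blocks, each of which already appears as an internal bitvector of a wavelet tree built over a short auxiliary string, and then to patch up a provably tiny set of mistakes caused by periodic regions.

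First I will augment $\S$ with the (at most $\Oh(n/\tau)$) occurrences of a length-$2\tau$ string $b_{\$}T[1\dd 2\tau)$ chosen to have period $>\tfrac13\tau$; this preserves \cref{def:sss} and allows every relevant length-$(3\tau)$ context to be present in the subsequent construction. Using \cref{thm:sort-s}, I sort $\S$ so that $T[s_i'\dd n]\prec T[s_{i+1}'\dd n]$, and form $W[i]\in[0\dd 2^{3\tau})$ as the reversal of $T[s_i'-\tau\dd s_i'+2\tau)$. Applying \cref{thm:wavelet_tree} to $W$ yields its wavelet tree in $\Oh(\tfrac{n}{\tau}\cdot \tau/\sqrt{\log(n/\tau)})=\Oh(n/\sqrt{\log n})$ time and $\Oh(n/\log n)$ space. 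The conceptual core, \cref{lm:dp-local-consistency}, is that for any distinguishing prefix $X\in\D$ the bitvector $B_{\revbits{X}}$ equals the contiguous range $\BWT[b\dd e]$ covering all suffixes starting with $X$. I will therefore enumerate $\D$ in lexicographic order by DFS-ing the depth-$(3\tau-1)$ binary tree, using an $\Oh(n^{2\eps})$-sized lookup table over length-$2\tau$ windows (populated from the consistency condition) to decide at each node whether to emit $B_{\revbits{X}}$ and truncate, or to descend. Characters preceding length-$|X|$ suffixes of $T$ are emitted when descending into labels that are suffixes of $T$; finally, $\SA^{-1}[1]$ is recovered by locating the single occurrence of $b_{\$}$ in $B_{\revbits{D_1}}$ and overwriting it with $T[n]$.

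In the general case, the above procedure misses exactly the symbols $T[j-1]$ for $j\in\R$, since positions in $\R$ are absorbed into strings in $\F$. I extend the DFS so that upon reaching a node labeled $X\in\F$, I append $f_X$ copies of $X[\per(X)]$; periods are obtained from a lookup table and the frequencies $f_X$ from the block-bucketing trick (partition $T$ into $\lfloor n/\ell\rfloor$ length-$(2\ell-1)$ blocks, sort them, then expand each distinct block into its at most $\ell$ length-$\ell$ windows), which runs in $\Oh(n/\log n)$ provided $\eps<\tfrac16$. By \cref{lm:properties-of-Rprim}\ref{it:pr2}, this default fill is already correct except at positions $j\in\R'=\{j\in\R:j-1\notin\R\}$, and \ref{it:pr1} bounds $|\R'|\le|\S|+1=\Oh(n/\tau)$.

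It remains to overwrite $\BWT[r_j]:=T[j-1]$ for every $j\in\R'$, which reduces to computing the local rank $r_j'=|\pos(j)|$ within suffixes sharing the length-$(3\tau-1)$ prefix. I classify each $j\in\R'$ by $\type(j)$ and $\Lroot(j)$, split $r_j'=r_j^{=}+r_j^{<}$ according to whether the L-exponent of the competing suffix equals or is strictly smaller than $\Lexp(j)$, and handle $\R'^{-}$ (the case $\R'^{+}$ is symmetric). The values $r_j^{=}$ reduce, after a radix-sort by $(\Lroot(j),|U''|,T[e_j\dd n])$ using the sorted order of $\S$, to a batch of three-sided 2D range-counting queries on $|\R'^{-}|=\Oh(n/\tau)$ points, answered in $\Oh(m\sqrt{\log m})=\Oh(n/\sqrt{\log n})$ time by Chan--Pătraşcu~\cite{ChanP10}. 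The values $r_j^{<}$ are obtained by grouping positions by $\Lroot$ and sweeping L-exponents in increasing order while maintaining a balanced-BST count array $C[0\dd|U|)$ under constant-many range-increments per processed position, at cost $\Oh(\log\log n)$ each and $\Oh(n/\sqrt{\log n})$ overall; the $\Oh(2^{\tau/3})=\Oh(n^{\eps/3})$ possible L-roots comfortably afford the $\Oh(\log n)$ reinitialisation per group.

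The main obstacle is this last correction step: the bound $|\R'|=\Oh(n/\tau)$ leaves no room for any $\Omega(\log n)$-time work per element, so the classification by Lyndon root, type, and L-exponent must be set up precisely enough that comparing two suffixes in $\R'^{-}$ becomes a lexicographic comparison of short L-decomposition fragments, and so that the resulting batched orthogonal range counting fits the Chan--Pătraşcu envelope that matches our target $\Oh(n/\sqrt{\log n})$ time.
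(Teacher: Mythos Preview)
Your proposal is correct and follows essentially the same approach as the paper: the augmentation of $\S$, the reversed-context string $W$ and its wavelet tree, the DFS enumeration of $\D$ with the length-$2\tau$ lookup table, the unary fill for $X\in\F$ via block-bucketed frequencies, and the correction for $j\in\R'$ via the $r_j^{=}$/$r_j^{<}$ split handled by Chan--P\v{a}tra\c{s}cu range counting and the balanced-BST sweep over L-exponents are all exactly what the paper does. The time and space accounting also matches.
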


\subsection{Large Alphabets}
\label{sec:large-alphabet}

Note that our BWT construction does not immediately generalize to
larger alphabets since for binary strings it already relies on wavelet
tree construction for sequences over alphabets of polynomial size.
More precisely, in the binary case, we combined the bitvectors $B_X$ in
the wavelet tree of a large-alphabet sequence $W$ to retrieve
fragments of the Burrows--Wheeler transform of the text $T$.  For
an alphabet $\Sigma=[0\dd \sigma)$, the BWT consists of $(\log\sigma)$-bit characters.  
Thus, instead of standard
binary wavelet trees, we use wavelet trees of degree $\sigma$. For
simplicity, we assume that $\sigma$ is a power of two.

\subsubsection{High-Degree Wavelet Trees}
\label{sec:high-degree-wt}

To construct a degree-$\sigma$ wavelet tree,
we consider a string $W$ of length $n$ over an alphabet
$[0\dd\sigma^b)$ and think of every symbol $W[i]$ as of a
number in base $\sigma$ with exactly $b$ digits (including leading
zeros).  First, we create the root node $v_{\eps}$ and construct its string
$D_\eps$ of length $n$ setting as $D_\eps[i]$ the most significant digit of
$W[i]$.  We then partition $W$ into $\sigma$ subsequences
$W_{0},W_{1},\ldots,W_{\sigma-1}$ by scanning through $W$ and
appending $W[i]$ with the most significant digit removed to $W_{c}$,
where $c$ is the removed digit of $W[i]$. We recursively repeat the
construction for every $W_{c}$ and attach the resulting tree as the
$c$th child of the root.  The nodes of the resulting degree-$\sigma$
wavelet tree are labeled with strings $Y\in [0\dd\sigma)^{\le
  b}$.  For $|Y| < b$, the string $D_Y$ at node $v_Y$ labeled with $Y$
contains the next digit following the prefix $Y$ in the $\sigma$-ary
representation of $W[i]$ for each element $W[i]$ of $W$ whose
$\sigma$-ary representation contains $Y$ as a prefix. (The digits in $D_Y$ occur in the order as the corresponding entries $W[i]$.)
The total size of a wavelet tree is $\Oh\big(\sigma^b + \frac{n b\log\sigma}{\log n}\big)$
words, which is $\Oh\big(\frac{n b\log\sigma}{\log n}\big)$ if $b \le \log_\sigma n$.

As observed in \cite{WaveletSuffixTree}, the binary wavelet tree
construction procedure can be used as a black box to build degree-$\sigma$
wavelet trees. Here, we present a more general version of this
reduction.

\begin{camera}
  \begin{lemma}[\fullonly; see {\cite[Lemma 2.2]{WaveletSuffixTree}}]
    \label{lm:rebuilding}
    Given the packed representation of a string $\smash{W \in \big[0\dd \sigma^b\big)^n}$
    with $b \le \log_\sigma n$, we can
    construct its degree-$\sigma$ wavelet tree in $\Oh(n b\log\sigma /
    \sqrt{\log n}+ nb \log^2 \sigma / \log n)$ time using $\Oh(n b / \log_\sigma n)$ space.
  \end{lemma}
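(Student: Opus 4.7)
The plan is to reduce the construction of the degree-$\sigma$ wavelet tree to the construction of a binary wavelet tree, for which we invoke \cref{thm:wavelet_tree}, and then regroup every $\log\sigma$ consecutive binary levels into a single $\sigma$-ary level. Since $b\le\log_\sigma n$, each symbol of $W$ fits in $b\log\sigma\le\log n$ bits, so viewing $W$ as a string over $[0\dd 2^{b\log\sigma})$ allows us to apply \cref{thm:wavelet_tree} to build its binary wavelet tree in $\Oh(nb\log\sigma/\sqrt{\log n})$ time and $\Oh(nb\log\sigma/\log n)=\Oh(nb/\log_\sigma n)$ space.

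The $\sigma$-ary node $v_Y$ at depth $k<b$ corresponds to the binary node whose label $\widetilde Y\in\{0,1\}^{k\log\sigma}$ is the bit-encoding of $Y$. The digit string $D_Y$ has the same length as the binary bitvector $B_{\widetilde Y}$, and its $j$-th entry packs the $\log\sigma$ bits that the $j$-th element of $B_{\widetilde Y}$ contributes to the binary bitvectors on its way down to depth $(k+1)\log\sigma$. To assemble all such digit strings at $\sigma$-ary level $k$ simultaneously, we sweep through the binary levels $k\log\sigma,\ldots,(k+1)\log\sigma-1$, reading one additional bit per element per level and splicing it into the partially built digit that will ultimately land in the appropriate position of some $D_Y$. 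Using bit-level parallelism, each of the $b\log\sigma$ binary levels can be processed in $\Oh(n/\log n)$ machine-word operations, each accompanied by an $\Oh(\log\sigma)$-factor of shift-and-mask work to place bits into the correct digit slot; this contributes the extra $\Oh(nb\log^2\sigma/\log n)$ term.

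The main obstacle is bookkeeping: the $\log\sigma$ bits contributed by a single element of $W$ are scattered across $\log\sigma$ distinct binary bitvectors (one per binary level within the current block), and they must be assembled in the correct order at the correct position within the correct $D_Y$. This is handled by maintaining, at every binary depth within the block, the invariant that the bitvectors of descendants of each $\sigma$-ary parent are visited in the order induced by that parent; the invariant is preserved via standard rank computations on the binary bitvectors (supported by \cref{prp:rksel}) and does not increase the asymptotic cost. The total space is dominated by the binary wavelet tree itself, which already matches the claimed $\Oh(nb/\log_\sigma n)$ bound.
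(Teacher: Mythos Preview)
Your high-level reduction matches the paper: build the \emph{binary} wavelet tree of $W$ over $[0\dd 2^{b\log\sigma})$ via \cref{thm:wavelet_tree}, then regroup every $\log\sigma$ binary levels into one $\sigma$-ary level. The first step indeed accounts for the $\Oh(nb\log\sigma/\sqrt{\log n})$ term and the space bound. The difficulty is the second step, and here your argument has a gap.

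You claim that each of the $b\log\sigma$ binary levels can be processed in $\Oh(n/\log n)$ machine-word operations times an $\Oh(\log\sigma)$ factor, but you do not explain the mechanism that achieves this. The bits that must be spliced into a given $D_Y$ are, at each intermediate binary level, scattered across many bitvectors $B_X$, and their target positions inside $D_Y$ are determined by the full path from $v_{\widetilde Y}$ down to $v_X$. Your bookkeeping paragraph delegates this to ``standard rank computations'' via \cref{prp:rksel}, but a rank query locates a \emph{single} element; using it to route bits costs $\Theta(1)$ per element per level, i.e., $\Theta(n)$ per level and $\Theta(nb\log\sigma)$ overall, which is worse than the binary wavelet-tree construction itself and does not yield the claimed $\Oh(nb\log^2\sigma/\log n)$ term. ``Shift-and-mask'' alone cannot perform a data-dependent scatter of $\Theta(\log n/\log\sigma)$ packed digits in $\Oh(1)$ time.

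The paper's proof supplies exactly the missing mechanism. It defines $D_X$ at \emph{every} binary node (not only at depths divisible by $\log\sigma$) and computes all $D_X$ \emph{bottom-up}: $D_X$ is obtained from $D_{X0}$, $D_{X1}$, and $B_X$ by prepending the appropriate bit to each character of $D_{X0}$ and $D_{X1}$ and then \emph{interleaving} them according to $B_X$. This interleaving is done word-parallel via a precomputed lookup table indexed by three buffers of $\Theta(\log n)$ bits total, so that producing $D_X$ costs $\Oh(1+|D_X|\log\sigma/\log n)$ time. Summed over all $X$, this gives $\Oh(\sigma^b+nb\log^2\sigma/\log n)$. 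A top-down split-based variant could be made to work symmetrically, but it, too, needs the lookup-table trick; the rank-based routing you sketch does not.
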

\end{camera}

\begin{full}
\begin{lemma}[see {\cite[Lemma 2.2]{WaveletSuffixTree}}]
  \label{lm:rebuilding}
  Given the packed representation of a string $\smash{W \in \big[0\dd \sigma^b\big)^n}$
  with $b \le \log_\sigma n$, we can
  construct its degree-$\sigma$ wavelet tree in $\Oh(n b\log\sigma /
  \sqrt{\log n}+ nb \log^2 \sigma / \log n)$ time using $\Oh(n b / \log_\sigma n)$ space.
\end{lemma}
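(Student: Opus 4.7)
The plan is to construct the binary wavelet tree of $W$ via \cref{thm:wavelet_tree} and then extract the packed $\sigma$-ary string $D_Y$ at each internal node of the degree-$\sigma$ wavelet tree. Interpreting $W$ as a string over $[0,2^{b\log\sigma})$ is valid because $b\log\sigma \le \log n$, so \cref{thm:wavelet_tree} yields the binary wavelet tree in $\Oh(nb\log\sigma/\sqrt{\log n})$ time and $\Oh(nb\log\sigma/\log n) = \Oh(nb/\log_\sigma n)$ space---already attaining the first time term as well as the full space bound.

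The correspondence between the two trees is as follows: every node $v_Y$ of the degree-$\sigma$ wavelet tree with $Y \in [0,\sigma)^{<b}$ corresponds to the binary wavelet tree node at depth $|Y|\log\sigma$ labeled by the binary encoding of $Y$, and the $\log\sigma$-deep binary subtree rooted there is structurally identical to the binary wavelet tree of the target string $D_Y$. Hence, building the degree-$\sigma$ wavelet tree reduces to extracting, for each such $v_Y$, the packed sequence $D_Y$ from its already-available binary wavelet tree. I would do this iteratively for $k=1,\ldots,\log\sigma$, maintaining a packed sequence $D_Y^{(k)}$ that stores the top $k$ bits of each entry of $D_Y$; the transition from $D_Y^{(k-1)}$ to $D_Y^{(k)}$ appends, to each position $i$, the bit stored in the binary wavelet tree bitvector $B_{\mathrm{bin}(Y)\cdot\mathrm{bin}(D_Y^{(k-1)}[i])}$ at the offset equal to the rank of value $D_Y^{(k-1)}[i]$ within $D_Y^{(k-1)}[1\dd i]$. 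With packed word-RAM operations and precomputed lookup tables of size $n^{\Oh(1)}$, each transition processes $\Theta(\log n/\log\sigma)$ positions per word, taking $\Oh(|D_Y|\log\sigma/\log n)$ time; summed across $\log\sigma$ transitions, all nodes per level (whose $|D_Y|$ total is at most $n$), and $b$ levels, the extraction work totals $\Oh(nb\log^2\sigma/\log n)$, matching the second term.

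The main obstacle is implementing each bit-interleaving transition in the required amortized $\Oh(\log\sigma/\log n)$ time per position, since a naive per-position rank-query approach would cost $\Omega(1)$ per position and give total $\Oh(nb\log\sigma)$, which is too slow. The key idea is to batch $\Theta(\log n/\log\sigma)$ positions of $D_Y^{(k-1)}$ into a single word and use a precomputed table to simultaneously (i) fetch the corresponding bits from the relevant child bitvectors and (ii) update the per-value position counters that index into those bitvectors. Demonstrating that the cumulative effect of these batched operations attains the $\Oh(|D_Y|\log\sigma/\log n)$ bound per transition while maintaining the counters consistently across word boundaries constitutes the bulk of the technical argument.
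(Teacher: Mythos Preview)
Your high-level plan—build the binary wavelet tree of $W$ via \cref{thm:wavelet_tree} and then assemble the strings $D_Y$ for the degree-$\sigma$ nodes—matches the paper's. The divergence, and the gap, is in how you extract the $D_Y$.

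The paper does \emph{not} compute $D_Y^{(k)}$ top-down for each $\sigma$-ary node. Instead it defines $D_X$ for \emph{every} binary node $v_X$ (the $\log\sigma$ bits following $X$ in each $W[i]$ prefixed by $X$) and computes these bottom-up: $D_X$ is obtained from $D_{X0}$, $D_{X1}$, and $B_X$ by prepending $0$/$1$ to the characters of $D_{X0}$/$D_{X1}$ and interleaving the two streams in the order dictated by $B_X$. Crucially, each such step is a \emph{two-way} merge controlled by one bitvector, so three small buffers (two for $D_{X0},D_{X1}$ of $\frac14\log_\sigma n$ characters each, one for $B_X$ of $\frac14\log n$ bits) and a lookup table over $\le \frac34\log n$ bits suffice to process $\Theta(\log_\sigma n)$ characters per buffer refill. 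Summing $|D_X|$ over all binary nodes gives $n(b\log\sigma-1)$, which yields the $\Oh(nb\log^2\sigma/\log n)$ term.

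Your transition $D_Y^{(k-1)}\to D_Y^{(k)}$ is instead a \emph{multi-way gather}: within one batch of $\Theta(\log_\sigma n)$ positions, the bits you need live in up to $2^{k-1}$ different bitvectors $B_{\mathrm{bin}(Y)\cdot v}$, each at its own running offset. A lookup table cannot ``fetch'' bits from memory; its input must already contain the relevant buffered bits. Buffering enough bits from all $2^{k-1}$ sources so that one batch is covered requires $\Omega(2^{k-1})$ bits of buffer state, and for $k$ near $\log\sigma$ this is $\Omega(\sigma)$ bits—far beyond the $\Oh(\log n)$-bit inputs a polynomial-size table can handle. Likewise, maintaining and updating $2^{k-1}$ per-value counters in $\Oh(1)$ time per batch runs into the same barrier. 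Thus the step you flag as ``the main obstacle'' is not merely a technicality: the specific batching scheme you sketch does not achieve the claimed $\Oh(|D_Y|\log\sigma/\log n)$ bound per transition when $2^{k-1}$ exceeds the batch size. The paper's reduction to two-way merges at every binary level is precisely what circumvents this.
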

\begin{proof}
  Consider a binary wavelet tree for $W$, where each symbol $W[i]$ is
  interpreted as an integer with $b\log\sigma$ binary digits.  For
  every node $v_X$, where $X\in\{0,1\}^{< b\log\sigma}$, we define
  $D_X$ as the bitvector containing the $\log\sigma$ bits following
  $X$ in every symbol $W[i]$ prefixed with $X$ (in the order these symbols
  appear in $W$). If $|X|>(b-1)\log \sigma$, we pad $W[i]$ with trailing
  zeros. Bitvectors $D_X$ can be interpreted as strings
  over $[0\dd \sigma)$.  In the rest of the proof, we adopt the
  latter convention. This way, we have generalized the strings
  $D_Y$, i.e., for every node $v_Y$ from the degree-$\sigma$ wavelet
  tree of $W$, where $Y\in[0\dd \sigma)^{<b}$, we have $D_Y=D_X$,
  where $X$ is the binary encoding of $Y$.

  The algorithm starts by constructing the binary wavelet tree for
  $W$. Next, we compute the strings $D_X$ for all its nodes.  To obtain the
  degree-$\sigma$ wavelet tree of $W$, it then suffices to remove all
  nodes whose depth is not a multiple of $\log \sigma$. For each
  surviving node, we set its nearest preserved ancestor as the
  parent. Each inner node has $\sigma$ children, and we order them
  consistently with the left-to-right order in the original binary wavelet tree.

  The key observation is that for any $X\in\{0,1\}^{<b\log\sigma-1}$,
  the string $D_X$ can be computed from $D_{X0}$, $D_{X1}$, and $B_X$. First, we
  shift the binary representations of the characters from $D_{X0}$ and
  $D_{X1}$ by a single bit to the right, prepending a $0$ to the
  characters in $D_{X0}$ and a $1$ to the characters in
  $D_{X1}$. Then, we construct $D_{X}$ by interleaving $D_{X0}$ and
  $D_{X1}$ according to the order defined by $B_{X}$: if the $i$th bit
  is $0$, we append to the constructed string
  $D_{X}$ the next character of $D_{X0}$, and otherwise we append the next character of $D_{X1}$.

  We pack every $\frac{1}{4}\frac{\log n}{\log \sigma}$ consecutive
  characters of strings
  $D_X$ with $X\in\{0,1\}^{<b\log\sigma}$ into a single machine
  word.  During interleaving, instead of accessing $D_{X0}$ and
  $D_{X1}$ directly, we keep two buffers, each of at most
  $\frac{1}{4}\frac{\log n}{\log \sigma}$ yet-unmerged characters from
  the corresponding string, as well as a buffer of at most
  $\frac{1}{4}\log n$ yet-unused bits of $B_X$. We continue the
  computation of $D_X$ until one of the buffers becomes empty.  To
  implement this efficiently, we preprocess (and store in a lookup
  table) all possible scenarios between two buffer refills for every
  possible initial content of the input buffers. We store the
  generated data (of at most $\frac{1}{2}\log n$ bits) and the final
  content of all buffers.
  
  The preprocessing takes $\tilde\Oh(2^{\frac{3}{4}\log n})=o(n/\log n)$ time
  and space. The number of operations required to generate $D_{X}$ is
  proportional to the number of times we reload the buffers;
  hence, it takes $\Oh(1+|D_X|/\frac{\log n}{\log\sigma})$ time.  Due to
  $\smash{\sum_{X\in\{0,1\}^{<b\log\sigma}}|D_X|=n(b\log\sigma-1)}$,
  the total complexity is $\smash{\Oh(\sigma^b+nb \log^2{\sigma} /
  \log n)=\bigO(nb\log^2{\sigma}/\log n)}$.  Adding the time to
  construct the binary wavelet tree of $W$ (see \cref{thm:wavelet_tree}),
  this yields the final complexity.
\end{proof}
\end{full}

\subsubsection{BWT Construction Algorithm}
\label{sec:large-alphabet-bwt-constr}

Our construction algorithm for $T\in[0\dotdot \sigma)^n$
uses a $\tau$-synchronizing set
$(s_i')_{i\in[1\dotdot|\S|]}$ (where $T[s_{i}'\dotdot n] \prec
T[s_{j}'\dotdot n]$ if $i<j$) for $\tau = \eps \log_{\sigma} n$.
We then build a sequence $W\in [0\dd\sigma^{3\tau})^{|\S|}$ with $W[i] = \revbits{T[s_i'-\tau\dotdot
s_i'+2\tau)}$, i.e., we reverse 
$T[s_i'-\tau\dotdot s_i'+2\tau)$ and then interpret it as a
$3\tau$-digit integer in base $\sigma$.  Next, we construct a      
degree-$\sigma$ wavelet tree of $W$ and combine the strings $D_Y$,
where $Y\in [0\dd \sigma)^{\leq 3\tau-1}$, to obtain the BWT of $T$. The
procedure is analogous to the binary case.

\begin{theorem}
  \label{thm:construct-bwt-large}
  Given the packed representation of a text $T\in[0\dotdot \sigma)^n$
  with $\log \sigma\le \sqrt{\log n}$ and its $\tau$-synchronizing set
  $\S$ of size $\bigO(\frac{n}{\tau})$ with $\tau = \eps\log_\sigma n$, where
  $\eps > 0$ is a sufficiently small constant, the
  Burrows--Wheeler transform of $T$ can be constructed in $\bigO(n\log \sigma/\sqrt{\log
  n})$ time and $\bigO(n/\log_\sigma n)$ space.
\end{theorem}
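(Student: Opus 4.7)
The plan is to adapt the binary-alphabet construction from Section~\ref{sec:bwt-binary-alphabet} step by step, replacing the binary wavelet tree with a degree-$\sigma$ wavelet tree built via \cref{lm:rebuilding}, and verifying that each time and space bound scales correctly. First I would build the sequence $W\in[0\dd \sigma^{3\tau})^{|\S|}$ defined by $W[i] = \revbits{T[s_i'-\tau\dd s_i'+2\tau)}$ in the packed representation. Each symbol of $W$ occupies $3\tau\log\sigma = \Oh(\log n)$ bits, so the bit-reversal trick from the binary case still gives a construction time of $\Oh(|\S|+n^\delta)=\Oh(n/\log_\sigma n)$ after polynomial-size preprocessing. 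Then I would apply \cref{lm:rebuilding} with parameter $b=3\tau$ on the length-$|\S|$ string $W$ to obtain its degree-$\sigma$ wavelet tree in $\Oh(|\S|\cdot 3\tau\log\sigma/\sqrt{\log n} + |\S|\cdot 3\tau\log^2\sigma/\log n) = \Oh(n\log\sigma/\sqrt{\log n} + n\log^2\sigma/\log n)$ time; the assumption $\log\sigma \le \sqrt{\log n}$ makes the second term $\Oh(n\log\sigma/\sqrt{\log n})$, matching the claimed bound.

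Next, I would port the distinguishing-prefix machinery of \cref{lm:dp-local-consistency} verbatim, with $\F$ and $\D$ now living over $[0\dd \sigma)$ instead of $\{0,1\}$. The traversal enumerates $\D$ by walking a $\sigma$-ary tree of depth $3\tau-1$, so it visits at most $\Oh(\sigma^{3\tau})=\Oh(n^{3\eps})$ nodes, and the lookup table over $[0\dd \sigma)^{2\tau}$ occupies $\Oh(\sigma^{2\tau})=\Oh(n^{2\eps})$ space; choosing $\eps$ small enough keeps both terms well below $\Oh(n/\log_\sigma n)$. For each visited node labeled $X\in\D$ we append the string $D_{\revbits X}$, whose symbols are in $[0\dd \sigma)$ and therefore already the correct BWT characters; the argument that $\D$ is prefix-free and that the ordering inside each $D_{\revbits X}$ matches the corresponding BWT interval carries over without change, since \cref{fct:cons} and \cref{fct:den} are alphabet-agnostic.

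For the periodic corrections (the set $\R'$), I would reuse exactly the Lyndon-root/L-signature framework of \cref{sec:bwt-general-case}. L-roots now live in $[0\dd \sigma)^{\le \tau/3}$, giving $\Oh(\sigma^{\tau/3})=\Oh(n^{\eps/3})$ distinct values, and the balanced-BST array $C$ can still be reinitialized per L-root within budget. The $r_j^=$ step uses the same $\Oh(m\sqrt{\log m})$ 2D range counting of Chan--P\v{a}tra\c{s}cu on $m=|\R'^-|=\Oh(n/\tau)$ points, still $\Oh(n/\sqrt{\log n})$ time. The substring-frequency subroutine (used to obtain $f_X$ for $X\in\F$) still runs in $\Oh(n/\ell + \ell\sigma^{2\ell-1})$ with $\ell = 3\tau-1$, which is $\Oh(n/\log n + \sigma^{6\tau}\log n) = \Oh(n/\log n + n^{6\eps}\log n)$; this is $\Oh(n/\log n)$ for a sufficiently small constant $\eps<\tfrac16$. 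Finally, I would recover $\SA^{-1}[1]$ exactly as in the binary case and overwrite the sentinel symbol.

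The main obstacle is keeping every $\sigma^{\Oh(\tau)}$ overhead (lookup tables, $\sigma$-ary tree traversal, substring frequencies, enumeration of L-roots) strictly dominated by $\Oh(n/\log_\sigma n)$ space and $\Oh(n\log\sigma/\sqrt{\log n})$ time. This is controlled by choosing $\eps$ in $\tau=\eps\log_\sigma n$ small enough that $\sigma^{c\tau}=n^{c\eps}$ is $n^{o(1)}$ for every constant $c$ appearing in the analysis (in particular $c\le 6$); combined with $\log\sigma \le \sqrt{\log n}$, which makes the wavelet-tree construction the genuine bottleneck, every other subroutine is easily absorbed, yielding the stated bounds.
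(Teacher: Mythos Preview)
Your proposal is correct and follows essentially the same approach as the paper: the paper's proof is little more than the remark that ``the procedure is analogous to the binary case'' using the degree-$\sigma$ wavelet tree of \cref{lm:rebuilding}, and you have faithfully carried out that analogy, checking in particular that the assumption $\log\sigma\le\sqrt{\log n}$ makes the $\Oh(nb\log^2\sigma/\log n)$ term of \cref{lm:rebuilding} absorbable and that every $\sigma^{\Oh(\tau)}=n^{\Oh(\eps)}$ overhead stays within budget for small enough~$\eps$.
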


\section{Conditional Optimality of the Binary BWT Construction}
\label{sec:conditional-optimality}

Given an array $A[1\dotdot m]$ of integers, the task of counting
inversions is to compute the number of pairs $(i,j)$ such that $i < j$ and $A[i] > A[j]$.  The
currently fastest solution for the above problem, due to Chan and
P\v{a}tra\c{s}cu~\cite{ChanP10}, runs in $\bigO(m\sqrt{\log m})$ time
and $\bigO(m)$ space.

Without loss of generality~\cite{Han04}, we assume
$A[i]\in[0\dotdot m)$.  In this section, we show that improving the BWT
construction from~\cref{sec:bwt-binary-alphabet} also yields an
improvement over~\cite{ChanP10}. More precisely, we show that
computing the BWT of a packed text $T\in\{0,1\}^n$ in time $\bigO(f(n))$ implies an
$\bigO(m+f(m \log m))$-time construction of the wavelet tree for $A$; hence,
improving over~\cref{thm:construct-bwt} implies an $o(m\sqrt{\log
  m})$-time wavelet tree construction.  The main result then follows
from the next observation since it is easy to count inversions
in a length-$n$ bitvector in $\bigO(n/\log n)$ time.

\begin{observation}
  The number of inversions in any integer sequence $A$ is equal to the
  total number of inversions in the bitvectors of the wavelet tree of $A$.
\end{observation}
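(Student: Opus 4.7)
The plan is to exhibit a bijection between inversion pairs in $A$ and inversion pairs across all bitvectors of the wavelet tree of $A$. Concretely, I would argue that each inversion $(i,j)$ in $A$ (with $i<j$ and $A[i]>A[j]$) contributes to exactly one node of the wavelet tree, and conversely that every inversion inside any bitvector arises from exactly one such pair $(i,j)$.

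The key observation to formalize is the following. Interpret $A[i]$ and $A[j]$ as binary strings of length $\lceil\log m\rceil$, and let $X$ be their longest common prefix. Then both values are represented in the bitvector $B_X$ at node $v_X$: each descends through the tree along the path labeled $X$ and contributes the next bit of its binary expansion to $B_X$. Moreover, the wavelet tree construction preserves the relative order of elements within each subsequence $W_0, W_1$, so the contributions of $A[i]$ and $A[j]$ appear in $B_X$ in the same left-to-right order as in $A$. Since $A[i]>A[j]$ and they share prefix $X$, the bit $A[i]$ contributes to $B_X$ is $1$ while the bit from $A[j]$ is $0$; hence this yields an inversion (a $1$ occurring before a $0$) inside $B_X$. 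Crucially, at every strict descendant of $v_X$ the two elements go into different subtrees, so the pair contributes to no other bitvector. This establishes the forward direction of the bijection.

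For the reverse direction, suppose a bitvector $B_Y$ at some node $v_Y$ contains an inversion, i.e.\ positions $p<q$ with $B_Y[p]=1$ and $B_Y[q]=0$. These two bits were contributed by a unique pair of elements $A[i]$ and $A[j]$ whose binary representations both start with $Y$; since the order of contributions within $B_Y$ follows the order in $A$, we have $i<j$. From the contributed bits we read off that $A[i]$ has a $1$ directly after prefix $Y$ and $A[j]$ has a $0$ there, so $A[i]>A[j]$, giving an inversion in $A$. Uniqueness follows because $Y$ must equal the longest common prefix of $A[i]$ and $A[j]$ (otherwise they would not both contribute to $B_Y$).

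Summing over all nodes and invoking the ``exactly one node'' property of the forward direction, the total number of bitvector inversions equals the number of inversions in $A$. There is no real obstacle here beyond keeping the order-preservation during the recursive partitioning straight; this is immediate from the wavelet tree definition given in \cref{sec:wavelet-trees}.
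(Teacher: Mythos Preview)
The paper states this as an \emph{Observation} without proof, so there is nothing to compare against; your argument is the standard bijection and is correct in spirit.

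One small point worth tightening: you show that the pair $(i,j)$ does not contribute an inversion at any strict descendant of $v_X$ (because the elements separate into different subtrees), but you should also note explicitly that at every proper \emph{ancestor} $v_Y$ of $v_X$ both elements are present in $B_Y$ and contribute the \emph{same} bit (namely the next bit of the common prefix $X$), so no inversion arises there either. With that sentence added, the ``exactly one node'' claim is fully justified and the bijection is airtight.
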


We first consider the case when $A[i] < m^\eps$ for all $i\in[1\dotdot m]$ and for some sufficiently small constant
$\eps<1$; it is almost a direct reversal of the reduction from
\cref{sec:bwt-simple-case}.

\subsection{The Case \texorpdfstring{$A[i] < m^{\eps}$}{A[i]<m-to-eps}}
\label{sec:lower-bound-simple-case}

Let $\text{\rm bin}_{k}(x)\in\{0,1\}^{k}$ be the base-2 representation
of $x\in[0\dotdot 2^k)$ and let $\text{\rm pad}_{k} : \{0,1\}^{k}
\rightarrow \{0,1\}^{2k}$ be a \emph{padding} function that, given
$X\in\{0,1\}^k$, inserts a 0-bit before each bit of $X$.

Assume $\eps \log m$ and $\log m$ are integers.  Given $A[1\dotdot
  m]$, let
\[
    T_A=\prod_{i=1}^{m}\revbits{\text{\rm bin}_{\eps\log m}(A[i])}
    \cdot 01\cdot 1^{\eps\log m} \cdot \text{\rm pad}_{\log
      m}(\text{\rm}\text{\rm bin}_{\log m}(i-1)) \cdot 0.
\]
The text $T_A$ is of length $m(3+2(1+\eps)\log m)$ and constructing it 
takes $\bigO(m)$ time.  Recall that $B_X$ denotes the bitvector corresponding to the node of the wavelet tree of
$A$ with root-to-node label $X$.

\begin{lemma}
  For $X\in\{0,1\}^{<\eps\log m}$, let $\SA[b\dotdot e]$ be the range
  containing all suffixes of $T_A$ having $\revbits{X}\cdot 01\cdot 1^{\eps\log
  m}$ as a prefix. Then, $B_X=\BWT[b\dotdot e]$.
\end{lemma}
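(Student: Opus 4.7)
The plan is to establish a bijection between the suffixes of $T_A$ in $\SA[b\dd e]$ and the entries $A[i]$ whose $\eps\log m$-bit binary representation has $X$ as a prefix---precisely the entries contributing to $B_X$---and to verify that this bijection preserves both the preceding character (giving the BWT value) and the ordering.

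First, I would pinpoint where the marker $M := 01\cdot 1^{\eps\log m}$ occurs in $T_A$. The key structural feature is that $M$ contains a $0$ followed by a run of $\eps\log m + 1$ ones. I would argue that the only positions yielding such a run are the explicit copies of $M$ that sit immediately after $\revbits{\text{bin}_{\eps\log m}(A[i])}$ inside each of the $m$ blocks. Indeed, a reversed binary part has length $\eps\log m$, hosting runs of $1$'s of length at most $\eps\log m$; the padded part $\text{pad}_{\log m}(\text{bin}_{\log m}(i-1))$ places a $0$ before every bit, limiting its runs of $1$'s to length~$1$; and the explicit trailing $0$ of a block is followed by at most $\eps\log m$ further ones before the next block's $01$ forces another $0$. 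Consequently, a suffix begins with $\revbits{X}\cdot M$ iff it starts $|X|$ positions before the embedded $M$ in some block $i$ and the last $|X|$ bits of $\revbits{\text{bin}_{\eps\log m}(A[i])}$ equal $\revbits{X}$---equivalently, iff $X$ is a prefix of $\text{bin}_{\eps\log m}(A[i])$.

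Next, I would compute the preceding BWT character. For such a suffix starting at position $p_i$ inside block $i$, the character $T_A[p_i-1]$ is the bit of $\revbits{\text{bin}_{\eps\log m}(A[i])}$ that lies immediately before its last $|X|$ bits, namely the $(|X|{+}1)$-th bit of $\text{bin}_{\eps\log m}(A[i])$. This is exactly the bit that the wavelet tree stores in $B_X$ for the entry $A[i]$.

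Finally, I would check that the lex order on $\SA[b\dd e]$ coincides with the index order used by $B_X$. Past the common prefix $\revbits{X}\cdot M$, the next $2\log m$ characters of the suffix selected in block $i$ are $\text{pad}_{\log m}(\text{bin}_{\log m}(i-1))$; these blocks are fixed-length and pairwise distinct, so the comparison is always resolved within them and never leaks into what follows. The $0$-bits inserted by $\text{pad}_{\log m}$ at the odd positions always match, so the first differing position sits at an even position, where the lex comparison matches the numeric comparison of $i-1$ versus $j-1$. The main obstacle is the very first step---ruling out spurious occurrences of $M$ at block boundaries and inside the padded region---since everything else is a direct unpacking of the construction; the $0$-padding is precisely what makes the long-run argument airtight at every boundary.
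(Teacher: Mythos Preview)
Your proposal is correct and follows the same skeleton as the paper: set up a bijection between indices $i$ with $X$ a prefix of $\text{bin}_{\eps\log m}(A[i])$ and the suffixes in $\SA[b\dd e]$, then verify that the preceding character is bit $|X|{+}1$ of $A[i]$ and that the lexicographic order matches index order via the padded counter. The paper's proof is far terser---it simply writes down the map $g(i)=(i-1)(3+2(1+\eps)\log m)+\eps\log m-|X|+1$ restricted to $A_X$ and asserts it is a bijection onto $\SA[b\dd e]$, leaving the uniqueness-of-marker argument implicit---whereas you actually carry out the run-length analysis that justifies this bijection; your argument that the only run of $\eps\log m+1$ ones is the explicit one (bounded on the left by the $0$ of $01$ and on the right by the leading $0$ of the padded counter) is exactly what the paper's terse statement is hiding.
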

\begin{proof}
  Let $g : i \mapsto (i-1)(3+2(1+\eps)\log
  m)+\eps\log m-|X|+1$ restricted to
  $A_X:=\{i\in[1\dotdot m] : X\text{ is a prefix of }\text{\rm
  bin}_{\eps\log m}(A[i])\}$ be a one-to-one map between $A_X$ and
  the set $\SA[b\dotdot e]$. It remains to observe that $T_A[g(i)\dotdot
  |T_A|]\prec T_A[g(j)\dotdot |T_A|]$ holds if $i<j$ for $i,j\in
  A_X$, and $\text{\rm bin}_{\eps\log m}(A[i])[|X|+1]=T_A[g(i)-1]$ for
  $i\in A_X$.
\end{proof}

To compute the BWT ranges corresponding to the bitvectors $B_X$ for
$\smash{X\in\{0,1\}^{<\eps\log m}}$, we proceed as in \cref{sec:bwt}:
perform a preorder traversal of the complete binary tree of height
$2\eps \log m+1$ corresponding to lexicographic enumeration of
$\smash{X'\in\{0,1\}^{\leq 2\eps\log m+1}}$. For $\smash{X'=X011^{\eps\log m}}$, where
$\smash{X\in\{0,1\}^{<\eps \log m}}$, copy the bits from BWT into $B_{\revbits{X}}$. The
number of bits to copy is given by the number $f_{X'}$ of occurrences of
$X'$ in $T_A$.  After copying the bits (or
when we reach $\smash{X'\in\{0,1\}^{2\eps\log m+1}}$), advance the
position in BWT by $f_{X'}$.

To determine the frequencies of all $\{0,1\}^{\leq 2\eps\log m+1}$,
compute $f_{X'}$ for all $X'\in\{0,1\}^{2\eps\log m+1}$ in $\bigO(m)$
time using the algorithm in~\cref{sec:bwt-general-case}.  The
remaining frequencies are then easily derived.

The algorithm runs in $\bigO(m+f(m \log m))$ time, where $f(n)$ is the
runtime of the BWT construction for a packed text from $\{0,1\}^{n}$.

\subsection{The Case \texorpdfstring{$A[i] < m$}{A[i]<m}}
\label{sec:lower-bound-general-case}

Given an array $A[1\dotdot m]$ with values $A[i]\in [0\dd m)$, let
\[
    T_A=\prod_{i=1}^{m}\revbits{\text{\rm bin}_{\log m}(A[i])} \cdot
    01 \cdot 1^{\log m}\cdot 0 \cdot\text{\rm}\text{\rm bin}_{\log m}(i-1) \cdot
    0\cdot 1^{\log m}\cdot 0.
\]

\begin{lemma}
  For $X\in\{0,1\}^{<\log m}$, let $\SA[b\dotdot e]$ be the range
  containing all suffixes of $T_A$ having $\revbits{X}\cdot 01\cdot 1^{\log m}$
  as a prefix. Then, $B_X=\BWT[b\dotdot e]$.
\end{lemma}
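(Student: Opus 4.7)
I would adapt the proof from \cref{sec:lower-bound-simple-case}. Define the candidate map $g : A_X \to [1 \dd |T_A|]$ by
\[ g(i) = (i-1)(4\log m + 5) + \log m - |X| + 1, \]
where $A_X = \{i \in [1\dd m] : X \text{ is a prefix of } \text{\rm bin}_{\log m}(A[i])\}$ and $4\log m + 5$ is the length of each block of $T_A$. This offset places the starting position of the pattern exactly $|X|$ characters before the $0$ separating $\revbits{\text{\rm bin}_{\log m}(A[i])}$ from the $01 \cdot 1^{\log m}$ segment. Since the last $|X|$ characters of $\revbits{\text{\rm bin}_{\log m}(A[i])}$ equal $\revbits{X}$ exactly when $X$ is a prefix of $\text{\rm bin}_{\log m}(A[i])$, the pattern $\revbits{X} \cdot 01 \cdot 1^{\log m}$ occurs at position $g(i)$ precisely when $i \in A_X$.

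The crux is verifying that no spurious occurrences of the pattern appear elsewhere in $T_A$. I would argue this via the substring $0 \cdot 1^{\log m + 1}$ contained in the pattern: any occurrence must align this substring with a $0$ followed by $\log m + 1$ consecutive $1$'s in $T_A$. A case analysis of the block structure shows that this is possible only at the $0$ immediately following $\revbits{\text{\rm bin}_{\log m}(A[i])}$: within $\revbits{\text{\rm bin}_{\log m}(A[i])}$ or within $\text{\rm bin}_{\log m}(i-1)$ a run of 1's has length at most $\log m$ (since these substrings have length $\log m$), and the trailing $1^{\log m}$ of each block is bracketed by explicit $0$'s on both sides, so it yields a run of only $\log m$ ones; only the central $01 \cdot 1^{\log m}$ realizes a maximal run of $\log m + 1$ ones preceded by a $0$. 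It follows that $g$ is a bijection between $A_X$ and the set $\{\SA[r] : r \in [b\dd e]\}$.

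Order preservation and the BWT identity then follow by direct inspection. For $i, j \in A_X$ with $i < j$, the suffixes $T_A[g(i) \dd |T_A|]$ and $T_A[g(j) \dd |T_A|]$ agree through the common prefix $\revbits{X} \cdot 01 \cdot 1^{\log m} \cdot 0$ and subsequently differ within $\text{\rm bin}_{\log m}(i-1)$ versus $\text{\rm bin}_{\log m}(j-1)$; equal-length binary encodings are lexicographically ordered by numerical value, so $T_A[g(i) \dd |T_A|] \prec T_A[g(j) \dd |T_A|]$. Finally, position $g(i) - 1$ falls at the $(|X|+1)$-th position from the end of $\revbits{\text{\rm bin}_{\log m}(A[i])}$, which means $T_A[g(i) - 1] = \text{\rm bin}_{\log m}(A[i])[|X|+1]$, exactly the bit contributed by $A[i]$ to $B_X$ in the wavelet tree of $A$. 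Hence $B_X = \BWT[b \dd e]$.
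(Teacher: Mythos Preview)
Your proof is correct and is precisely the natural adaptation the paper implicitly expects: the paper states this lemma without proof, relying on the analogy with the lemma in \cref{sec:lower-bound-simple-case}, and your argument follows that template with the map $g$ shifted to the new block length $4\log m+5$. The one genuinely new ingredient you supply---the run-of-ones analysis showing that $0\cdot 1^{\log m+1}$ occurs in $T_A$ only at the intended position within each block---is exactly what is needed here, since the padding function from the simple case has been dropped and spurious occurrences must be ruled out by the block structure itself.
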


The main challenge lies in obtaining the $\BWT$ ranges, as the
approach of~\cref{sec:lower-bound-simple-case} no longer works.
Consider instead partitioning the suffixes in $\SA$ according to the
length-$\log m$ prefix (separately handling shorter suffixes). Observe
that there is a unique bijection $\mathrm{ext} : \{0,1\}^{\log m}\setminus\{1^{\log m}\}
\to \{0,1\}^{<\log m}\cdot 01\cdot 1^{\log m}$
such that $X$ is a prefix of $\mathrm{ext}(X)$. Furthermore, if the range
$\SA[b\dotdot e]$ contains suffixes of $T_A$ prefixed by
$X\in\{0,1\}^{\log m}\setminus\{1^{\log m}\}$, then the analogous range
$\SA[b'\dotdot e']$ for $\mathrm{ext}(X)$ satisfies $e'=e$.  Thus,
it suffices to precompute frequencies $F:=\{(X,f_X) :
X\in\{0,1\}^{\log m}\setminus\{1^{\log m}\}\}$ and $F':=\{(X,f_X) :
X\in\{0,1\}^{<\log m}\cdot 01\cdot 1^{\log m}\}$.

To this end, construct $F_{\rm pref}\,{:}{=}\,\{(X,f_X') :
X\in\{0,1\}^{\leq \log m}\}$, with $f_X'$ defined as the number occurrences
of $X$ as prefixes of strings in $\mathcal{A}=(\revbits{\text{\rm
bin}_{\log m}(A[i])})_{i\in[1\dotdot m]}$. To compute $F_{\rm
pref}$, we first in $\bigO(m)$ time build $\{(X,f_X') :
X\in\{0,1\}^{\log m}\}\subseteq F_{\rm pref}$ by scanning
$\mathcal{A}$.~The remaining elements of $F_{\rm pref}$ are derived by
the equality $f_{X}' = f_{X0}' + f_{X1}'$.  Analogously prepare suffix
frequencies $F_{\rm suf}$.  Given $F_{\rm pref}$ and $F_{\rm suf}$, we
can compute $F$ in $\bigO(m)$ time: the number of occurrences of
$X\in\{0,1\}^{\log m}$ overlapping factors $\mathcal{A}$ is obtained
from $F_{\rm pref}$ and $F_{\rm suf}$; the number of other occurrences
is easily determined as it only depends on $m$.  Finally, $F'$ is
computed directly from $F_{\rm suf}$.

\begin{theorem}
  If there exists an algorithm that, given the packed representation of
  a text $T\in\{0,1\}^n$, constructs its $\BWT$ in $\bigO(f(n))$ time,
  then we can compute the number of inversions in an array of $m$
  integers in $\bigO(m+f(m \log m))$ time. In particular, if
  $f(n)=o(n/\sqrt{\log n})$, then the algorithm runs in $o(m\sqrt{\log m})$
  time.
\end{theorem}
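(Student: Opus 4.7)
The plan is to chain together the three pieces already developed in this section and convert them into a running-time bound.

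First, I would invoke the observation stated just before the theorem: the number of inversions in $A$ equals the sum of the inversions in all bitvectors $B_X$ of the binary wavelet tree of $A$. Since the wavelet tree of $A\in[0\dd m)^m$ has depth $\log m$, its bitvectors have total length $m\log m$, i.e., only $O(m)$ packed words. Counting inversions in a single binary string $B[1\dd k]$ reduces to computing $\sum_{j:B[j]=0}\rank_1(B,j)$, which can be done in $O(\lceil k/\log m\rceil)$ time by sweeping over $\tfrac12\log m$-bit blocks with a precomputed lookup table (of size $o(m)$) that, given a block and the running count of preceding $1$s, returns the inversion contribution. Summed over all nodes, this last stage takes $O(m)$ time once the bitvectors are in hand.

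Second, I would apply the reduction of Section~6.2 verbatim: given $A$, construct $T_A$ in its packed representation. Since $|T_A|=\Theta(m\log m)$ bits, i.e., $O(m)$ machine words, the construction runs in $O(m)$ time (writing characters into a rolling word buffer). Applying the hypothesized BWT algorithm then yields $\BWT(T_A)$ in $O(f(m\log m))$ time.

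Third, I would extract each $B_X$ for $X\in\{0,1\}^{<\log m}$ from $\BWT(T_A)$. By the lemma of Section~6.2, $B_X=\BWT[b\dd e]$ where $\SA[b\dd e]$ is the range of suffixes prefixed by $\revbits{X}\cdot 01\cdot 1^{\log m}$. I would locate these $O(m)$ ranges by computing the frequency set $F'$ described there in $O(m)$ time: first build the leaf frequencies $F_{\mathrm{pref}}$ and $F_{\mathrm{suf}}$ at length $\log m$ by scanning $\mathcal A$, propagate them upward via $f_X=f_{X0}+f_{X1}$, combine them with the deterministic scaffolding of $T_A$ to obtain $F$, and finally read $F'$ off $F_{\mathrm{suf}}$. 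A lexicographic preorder traversal analogous to the one in Section~6.1 then converts these frequencies into cumulative BWT offsets, and the corresponding ranges are copied into the $B_X$'s with standard bit-parallel shift-and-mask operations in total time proportional to the overall length $O(m\log m)$ bits $=O(m)$ words.

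Adding the three stages gives $O(m)+O(f(m\log m))+O(m)=O(m+f(m\log m))$, proving the main claim. The second clause is then immediate: if $f(n)=o(n/\sqrt{\log n})$, then $f(m\log m)=o(m\log m/\sqrt{\log(m\log m)})=o(m\sqrt{\log m})$, which would beat the bound of Chan and P\v{a}tra\c{s}cu. The main obstacle is the bit-aligned bookkeeping in the third stage, namely ensuring that the $O(m)$ BWT ranges, whose starting positions and lengths are generally not word-aligned, can be assembled into the packed bitvectors $B_X$ in total linear time. This is handled by maintaining cumulative offsets during the preorder traversal together with constant-time word-RAM shift-and-mask primitives, analogously to how Section~6.1 copies ranges of the BWT into the bitvectors of the wavelet tree.
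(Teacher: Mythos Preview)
Your proposal is correct and follows essentially the same approach as the paper: the theorem in the paper is stated as a summary of the preceding two subsections, and you have assembled exactly those pieces---the observation reducing inversions to bitvector inversions, the $\Theta(m\log m)$-bit encoding $T_A$, the BWT call, and the $O(m)$-time extraction of the wavelet-tree bitvectors via the frequency tables $F$, $F'$, $F_{\mathrm{pref}}$, $F_{\mathrm{suf}}$---into the claimed bound. Your added detail on counting inversions in a packed bitvector via a block-by-block sweep with a lookup table is precisely what the paper leaves implicit in the remark that this can be done in $O(n/\log n)$ time.
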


\section{Synchronizing Sets Construction}
\label{sec:construction-of-set-S}

Throughout this section, we fix a text $T$ of length $n$ and a positive integer $\tau \le \frac12 n$.
We also introduce a partition $\PP$ of the set $[1 \dd {n-\tau+1}]$ so that positions $i,j$ belong to the same class
if and only if $T[i\dd i+\tau)=T[j\dd j+\tau)$. In other words, each class contains the starting positions
of a certain length-$\tau$ substring of $T$.
We represent $\PP$ using an \emph{identifier function} $\id :  [1\dd n-\tau+1] \to [0\dd |\PP|)$ such that $\id(i)=\id(j)$ if and only if $T[i\dd i+\tau) = T[j\dd  {j+\tau})$.

The local consistency of a string synchronizing set $\S$ means that the decision on whether $i\in \S$
should be made solely based on $T[i\dd i+2\tau)$ or, equivalently, on $\id(i),\ldots, \id(i+\tau)$. 
The density condition, on the other hand, is formulated in terms of a set
$\R = \{i \in [1\dd n-3\tau+2] : \per(T[i\dd i+3\tau-2])\le \frac13 \tau\}$.
Here, we introduce its superset $\Q = \{i \in [1\dd n-\tau+1] : \per(T[i\dd i+\tau) \le \frac13\tau)\}$.
The periodicity lemma (\cref{lem:per}) lets us relate these two sets:
\begin{equation}\label{obs:qr}
\R = \{i \in [1\dd n-3\tau+2] : [i\dd i+2\tau)\sub \Q\}.\end{equation}

Based on an identifier function $\id$ and the set $\Q$,
we define a synchronizing set $\S$ as follows.
Consider a window of size $\tau+1$ sliding over the identifiers $\id(j)$. For
any position $i$ of the window, we compute the smallest identifier $\id(j)$ for $j\in [i\dd i+\tau]\sm \Q$.
We insert $i$ to $\S$ if the minimum is attained for $\id(i)$ or $\id(i+\tau)$.

\begin{camera}
\begin{construction}\label{cons:sync}
  For an identifier function $\id$, we define
  \begin{multline*}\S = \{i\in [1\dd n-2\tau+1] : \\ \min\{\id(j) : j\in [i\dd i+\tau]\sm \Q\} \in \{\id(i),\id(i+\tau)\}\}.
  \end{multline*}
\end{construction}
\end{camera}
\begin{full}
  \begin{construction}\label{cons:sync}
    For an identifier function $\id$, we define
    \[\S = \{i\in [1\dd n-2\tau+1] : \min\{\id(j) : j\in [i\dd i+\tau]\sm \Q\} \in \{\id(i),\id(i+\tau)\}\}.\]
  \end{construction}
\end{full}
Let us formally argue that this results in a synchronizing set.
\begin{lemma}  \label{lm:s-from-sync}
  \cref{cons:sync} always yields a $\tau$-synchronizing set.
\end{lemma}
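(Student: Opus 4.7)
The plan is to verify the two conditions of \cref{def:sss} in turn. For the \emph{consistency condition}, I would observe that both $\id(i)$ and the predicate $i\in\Q$ depend only on $T[i\dd i+\tau)$; consequently, the sequence $(\id(i+k), [i+k\in\Q])_{k\in[0\dd \tau]}$ that \cref{cons:sync} consults at position $i$ is determined by $T[i\dd i+2\tau)$. Thus $T[i\dd i+2\tau) = T[j\dd j+2\tau)$ forces $i\in\S$ iff $j\in\S$.

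For the \emph{density condition} I rely on the identity \eqref{obs:qr}, which equates $i\in\R$ with $[i\dd i+2\tau)\sub\Q$. Assume first that $i\in\R$. Then for every $j\in[i\dd i+\tau)$ we have $[j\dd j+\tau]\sub[i\dd i+2\tau)\sub\Q$, so the set $[j\dd j+\tau]\sm\Q$ over which the minimum in \cref{cons:sync} is taken is empty, and the defining condition of $\S$ cannot hold; hence $\S\cap[i\dd i+\tau) = \emptyset$. Conversely, suppose $i\in[1\dd n-3\tau+2]\sm\R$. By \eqref{obs:qr} the set $[i\dd i+2\tau)\sm\Q$ is non-empty; let $k^*$ minimize $\id$ over it. I then pick a window $j\in[i\dd i+\tau)$ so that $k^*\in\{j, j+\tau\}$: set $j=k^*$ if $k^*<i+\tau$, $j=i$ if $k^*=i+\tau$, and $j=k^*-\tau$ if $k^*>i+\tau$. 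In each case $[j\dd j+\tau]\sub[i\dd i+2\tau)$, so the minimum of $\id$ over the non-empty set $[j\dd j+\tau]\sm\Q$ is still attained at $k^*$, which equals $\id(j)$ or $\id(j+\tau)$; hence $j\in\S$, showing $\S\cap[i\dd i+\tau)\ne\emptyset$.

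Two small sanity checks need care: in the reverse direction, $j$ must lie in the domain $[1\dd n-2\tau+1]$ underlying $\S$, which follows from $j\le i+\tau-1 \le n-2\tau+1$ (using $i\le n-3\tau+2$). The only real conceptual step is the boundary case-split above, chosen so that the globally minimal position $k^*$ falls exactly at one of the two endpoints of the sliding window being tested; this alignment is precisely what makes \cref{cons:sync} witness membership in $\S$ rather than merely recording a position with small $\id$ somewhere in the window. No individual step is delicate, so the proof should go through with only the bookkeeping noted.
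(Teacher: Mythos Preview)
Your proof is correct and follows essentially the same approach as the paper's: both derive consistency from the fact that $\id(\cdot)$ and $\Q$-membership are determined by $T[\,\cdot\dd\cdot+\tau)$, and both establish density by taking the minimizer $k^*$ of $\id$ over $[i\dd i+2\tau)\sm\Q$ and selecting a window $j\in[i\dd i+\tau)$ with $k^*\in\{j,j+\tau\}$. The only differences are cosmetic---the paper collapses your three cases into two ($k^*<i+\tau$ versus $k^*\ge i+\tau$), and you add the explicit domain check $j\le n-2\tau+1$ that the paper leaves implicit.
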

\begin{proof}
  As for the local consistency of $\S$, observe that if positions $i,i'$ satisfy
  $T[i\dd i+2\tau)=T[i'\dd i'+2\tau)$, then $\id(i+\delta)=\id(i'+\delta)$ for $\delta\in[0\dd \tau]$.
  Moreover, $i+\delta \in \Q$ if and only if $i'+\delta \in \Q$.
  Consequently, $\{\id(i),\id(i+\tau)\}=\{\id(i'),\id(i'+\tau)\}$ and
   \[\min\{\id(j) : j\in [i\dd i+\tau]\sm \Q\}=\min\{\id(j) : j\in[i'\dd i'+\tau]\sm \Q\}.\]
  Thus, $i\in \S$ if and only if $i'\in \S$.

  To prove the density condition, first assume that $i\notin \R$.
  As \eqref{obs:qr} yields $[i\dd i+2\tau)\sm \Q \ne \emptyset$, we can choose a position $j$ in the latter set with minimum identifier.
  If $j < i+\tau$, then $j\in \S$ due to $\id(j) = \min\{\id(j') : j'\in [j\dd j+\tau]\sm \Q\}$.
  Otherwise, $j-\tau \in \S$ due to $\id(j)=\min\{\id(j') : j'\in [j-\tau\dd j]\sm \Q\}$.
  In either case, $\S\cap [i\dd i+\tau) \ne \emptyset$.

  The converse implication is easy: if $i\in \R$, then $[i\dd i+2\tau)\sub \Q$ by \eqref{obs:qr},
  so $[j \dd j+\tau]\sm \Q=\emptyset$ and therefore $j\notin \S$ for $j\in [i\dd i+\tau)$.
  Consequently, $\S\cap[i\dd i+\tau)=\emptyset$.
\end{proof}

The main challenge in building a $\tau$-synchronizing set with \cref{cons:sync} is to choose an appropriate identifier function $\id$ 
so that resulting synchronizing set $\S$ is small. 
We first consider only texts satisfying $\Q=\emptyset$. Note that this is a stronger assumption that $\R=\emptyset$,
which we made in the nonperiodic case of \cref{sec:sort-s,sec:lce,sec:bwt}.

\subsection{The Nonperiodic Case}
\label{sec:construction-of-set-S-simple-case}

The key feature of strings without small periods is that their occurrences cannot overlap
too much.  We say that a set $A \subseteq
\integ$ is \emph{$d$-sparse} if every two distinct elements $i,i'\in A$ satisfy $|i-i'|>d$.
\begin{fact}\label{fct:sparse}
  An equivalence class $\P \in \PP$ is $\frac13\tau$-sparse if
  $\P\cap \Q=\emptyset$.
\end{fact}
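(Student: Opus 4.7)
The plan is a short, direct argument by contradiction. I would suppose that $\P$ is \emph{not} $\frac13\tau$-sparse, so there exist distinct $i,i' \in \P$ with $i < i'$ and $p := i' - i \le \frac13\tau$. The goal is then to show $i \in \Q$, contradicting the hypothesis $\P \cap \Q = \emptyset$.

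First I would unfold the definition of $\PP$: the fact that $i,i'$ lie in the same class $\P$ means precisely that $T[i\dd i+\tau) = T[i'\dd i'+\tau)$. I would then translate this equality into a periodicity statement. Namely, for any $k \in [0\dd \tau - p)$, the equality of the two length-$\tau$ substrings gives $T[i+k] = T[i'+k] = T[i+k+p]$, where the second equality uses that $k + p < \tau$, so $i+k+p$ still falls inside $[i\dd i+\tau)$. This is exactly the statement that $p$ is a period of $T[i\dd i+\tau)$.

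From this I would conclude $\per(T[i\dd i+\tau)) \le p \le \frac13\tau$, and since $i \in \P \subseteq [1\dd n-\tau+1]$ (the domain of the partition), the definition of $\Q$ yields $i \in \Q$. This contradicts $\P \cap \Q = \emptyset$ and completes the proof.

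I do not expect any real obstacle here: the only point to be careful about is the index bookkeeping when turning the substring equality $T[i\dd i+\tau) = T[i'\dd i'+\tau)$ into the statement that $p = i'-i$ is a period, and in particular checking that $i+k+p$ stays within the length-$\tau$ window when $k \in [0\dd \tau - p)$. Once that small verification is in place, the rest is immediate from the definitions of $\PP$ and $\Q$; there is no need to invoke \cref{lem:per} or any structural lemma.
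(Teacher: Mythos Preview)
Your proposal is correct and follows essentially the same approach as the paper: both argue by contradiction, use $T[i\dd i+\tau)=T[i'\dd i'+\tau)$ to deduce that $i'-i\le\frac13\tau$ is a period, and conclude $i\in\Q$. The only cosmetic difference is that the paper first observes $\per(T[i\dd i'+\tau))\le i'-i$ on the longer overlapping fragment and then restricts to the prefix $T[i\dd i+\tau)$, whereas you verify the period condition on $T[i\dd i+\tau)$ directly via index bookkeeping; the content is the same.
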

\begin{proof}
  Suppose that positions $i,i'\in \P$ satisfy $i<i'\le i+\frac13\tau$.  
    We have $T[i\dd i+\tau)= T[i'\dd i'+\tau)$, so $\per(T[i\dd i'+\tau))\le i'-i\le\frac13\tau$.  
    In particular, $\per(T[i\dd i+\tau))\le \frac13\tau$, so $i\in \Q$ and $\P\cap \Q \ne \emptyset$.
\end{proof}

\subsubsection{Randomized Construction}
\label{sec:construction-of-set-S-rand}

It turns out that choosing $\id$ uniformly at random leads to
satisfactory results if $\Q = \emptyset$.

\begin{fact}\label{fct:rand0}
  Let $\pi:\PP\to[0\dd |\PP|)$ be a uniformly random bijection,
  and let $\id$ be an identifier function such that
  $\id(j)=\pi(\P)$ for each $j\in \P$ and $\P\in \PP$.
  If $\Q=\emptyset$, then a $\tau$-synchronizing set $\S$ defined with \cref{cons:sync}
  based on such a function $\id$ satisfies $\Exp[|\S|]\le \frac{6n}{\tau}$.
\end{fact}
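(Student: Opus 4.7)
The plan is to use linearity of expectation, bounding the probability that each individual position belongs to $\S$. By \cref{cons:sync}, with $\Q=\emptyset$ we have
\[
i \in \S \iff \min\{\id(j) : j\in [i\dd i+\tau]\} \in \{\id(i),\id(i+\tau)\}.
\]
Let $C_j\in \PP$ denote the class containing $j$, and let $k_i$ be the number of distinct classes in $\{C_j : j \in [i\dd i+\tau]\}$. Since $\pi$ is a uniformly random bijection on $\PP$, the class achieving the minimum value of $\pi$ among these $k_i$ classes is uniformly distributed, so the probability that this class equals $C_i$ is $1/k_i$, and likewise for $C_{i+\tau}$. A union bound (tight when $C_i \ne C_{i+\tau}$, and otherwise the two events coincide) gives
\[
\Pr[i\in \S] \le \tfrac{2}{k_i}.
\]

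Next I would lower bound $k_i$ using the sparsity of classes. Since $\Q=\emptyset$, every class $\P\in \PP$ satisfies $\P\cap \Q=\emptyset$, so \cref{fct:sparse} yields that $\P$ is $\tfrac{1}{3}\tau$-sparse. Consequently, if $k$ elements of a single class lie in the window $[i\dd i+\tau]$ of length $\tau$, then their maximum minus their minimum is at most $\tau$ but also strictly greater than $(k-1)\cdot \tfrac{\tau}{3}$, forcing $k\le 3$. Hence each class contributes at most $3$ positions among the $\tau+1$ positions of the window, which gives $k_i \ge (\tau+1)/3$.

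Combining these bounds,
\[
\Pr[i \in \S] \le \tfrac{2}{k_i} \le \tfrac{6}{\tau+1},
\]
and summing over all $i\in [1\dd n-2\tau+1]$ by linearity of expectation,
\[
\Exp[|\S|] \;=\; \sum_{i=1}^{n-2\tau+1}\Pr[i \in \S] \;\le\; \tfrac{6(n-2\tau+1)}{\tau+1} \;\le\; \tfrac{6n}{\tau}.
\]
The only subtle point is the uniformity argument in the first step: one must recognize that the identifiers of distinct classes present in the window are a uniformly random $k_i$-subset of $[0\dd |\PP|)$ (in random order), so the class attaining the window's minimum identifier is uniform over those $k_i$ classes---this is what allows $\Pr[i \in \S]$ to depend only on $k_i$ rather than on the unknown multiplicities.
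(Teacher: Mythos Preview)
Your proof is correct and follows essentially the same approach as the paper's: bound $\Pr[i\in\S]$ via the uniformity of the minimizing class, use \cref{fct:sparse} to show at most three positions per class in the window (hence at least $(\tau+1)/3$ distinct classes), and sum via linearity of expectation. The paper states the slightly coarser bound $k_i\ge \tau/3$ and $\Pr[i\in\S]\le 6/\tau$, but the argument is otherwise identical.
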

\begin{proof}
  Observe that for every position $i\in [1\dd n-2\tau+1]$,
  we have $|[i\dd i+\tau]\sm \Q| = |[i\dd i+\tau]| = \tau+1$.
  Moreover, \cref{fct:sparse} guarantees that
  $|[i\dd i+\tau]\cap \P|\le 3$ for each class $\P\in \PP$.  Thus,
  positions in $[i\dd i+\tau]$ belong to at least $\frac{\tau}{3}$
  distinct classes.  Each of them has the same probability of
  having the smallest identifier, so 
  \[\Pr[i\in \S] = \Pr[\min\{\id(j) : j \in [i\dd i+\tau]\}\in \{\id(i),\id(i+\tau)\}] \le 2\cdot \tfrac{3}{\tau}\]
  holds for every $i\in [1\dd n-2\tau+1]$.
  By linearity of expectation, we conclude that $\Exp[|\S|] \le \tfrac{6n}{\tau}$.
  \end{proof}

\subsubsection{Deterministic Construction}
\label{sec:construction-of-set-S-det}

Our next goal is to provide an $\Oh(n)$-time deterministic construction of a synchronizing set $\S$
of size $|\S|=\Oh(\frac{n}{\tau})$.
The idea is to gradually build an identifier function $\id$ assigning consecutive identifiers
to classes $\P\in \PP$ one at a time. Our choice of the subsequent classes is guided by a carefully
designed scoring function inspired by \cref{fct:rand0}.

\begin{proposition}\label{lem:sync0}
  If $\Q=\emptyset$ for a text $T$ and a positive integer $\tau\le\frac12 n$,
  then in $\Oh(n)$ time one can construct a $\tau$-synchronizing set of size at most $\frac{18n}{\tau}$.
\end{proposition}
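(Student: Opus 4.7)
The plan is to derandomize \cref{fct:rand0} via the method of conditional expectations. Identifiers $0,1,2,\ldots$ will be assigned to the equivalence classes of $\PP$ one at a time, with each next class chosen greedily so that a pessimistic estimator for $\Exp[|\S|]$ does not increase; once $\id$ is complete, the set $\S$ is read off via \cref{cons:sync}, which by \cref{lm:s-from-sync} always produces a $\tau$-synchronizing set.

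Setup: let $\P_1,\ldots,\P_{m-1}$ denote the classes processed so far, let $\S_m$ denote the committed portion of $\S$, and call a position $i\in[1\dd n-2\tau+1]$ \emph{undetermined} if $[i\dd i+\tau]\cap(\P_1\cup\cdots\cup\P_{m-1})=\emptyset$; write $A_m$ for the set of undetermined positions. The pessimistic estimator I will use is
\[\hat E(m) := |\S_m| + (6/\tau)\,|A_m|.\]
By \cref{fct:sparse}, for each $i\in A_m$ the window $[i\dd i+\tau]$ is spread over at least $(\tau+1)/3$ distinct classes, all still unprocessed; so the argument underlying \cref{fct:rand0} gives that a uniformly random extension of $\id$ would place $i$ in $\S$ with probability at most $6/\tau$. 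Hence $\hat E(m)$ upper-bounds the conditional expectation of $|\S|$, with $\hat E(0)\le 6n/\tau$ and $\hat E(\text{final})=|\S|$; it therefore suffices to ensure that $\hat E$ never increases.

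For an unprocessed $\P$, set $\text{dmg}(\P):=|\{i\in A_m : i\in\P\text{ or }i+\tau\in\P\}|$ and $\text{cov}(\P):=|\{i\in A_m : \P\cap[i\dd i+\tau]\ne\emptyset\}|$; processing $\P$ next changes $\hat E$ by $\text{dmg}(\P)-(6/\tau)\text{cov}(\P)$. The core step is to verify, by averaging over all unprocessed classes, that some $\P$ achieves $\text{dmg}(\P)\le (6/\tau)\text{cov}(\P)$. Indeed, $\sum_\P \text{dmg}(\P)\le 2|A_m|$ (each $i\in A_m$ contributes only to the classes of $i$ and of $i+\tau$) and $\sum_\P \text{cov}(\P)\ge (\tau+1)|A_m|/3$ (by \cref{fct:sparse}), so the overall ratio is at most $6/\tau$ and some class attains it. Iterating over all $m$ yields $|\S|\le 6n/\tau\le 18n/\tau$.

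The main obstacle will be the $\Oh(n)$-time implementation. Producing the partition $\PP$ and an identifier function is standard $\Oh(n)$-time work using suffix-array or $\Oh(1)$-LCE machinery (\cref{prop:lce}). For the greedy step, however, a naive approach could spend $\Theta(\tau)$ time per resolved position updating the $\text{dmg}$ and $\text{cov}$ counters of all classes intersecting its window, giving $\Oh(n\tau)$ in total. The plan is to relax the acceptance threshold from ``ratio $\le 6/\tau$'' to ``ratio $\le 18/\tau$''---which is the source of the constant $18$ in the statement and leaves enough slack to locate an acceptable class by scanning a bucket structure indexed by rough $\text{dmg}$ and $\text{cov}$ counts rather than by maintaining an exact priority queue---and to perform the $\text{dmg}$/$\text{cov}$ updates lazily, charging each unit of work to the position being resolved so that the total amortises to $\Oh(n)$. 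Making this bookkeeping actually fit in $\Oh(n)$ time, without incurring a hidden $\tau$ factor anywhere, will be the key technical hurdle.
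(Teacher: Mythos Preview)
Your derandomization via conditional expectations is sound and in fact yields the stronger bound $|\S|\le 6n/\tau$; the gap is entirely in the $\Oh(n)$ running time, which you flag as the key hurdle but do not resolve. The obstacle is concrete: when a position $i$ leaves $A_m$, the counter $\text{cov}(\P')$ must drop for every unprocessed class $\P'$ meeting $[i\dd i+\tau]$, and since $i$ was undetermined, every position of $[i\dd i+\tau]$ lies in an unprocessed class---by \cref{fct:sparse} that is at least $(\tau+1)/3$ distinct classes. Summed over the $\Theta(n)$ positions that eventually leave $A_m$, this is $\Theta(n\tau)$ decrements, so charging ``one unit per resolved position'' amortizes to $\Oh(n\tau)$, not $\Oh(n)$. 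Relaxing the acceptance threshold to $18/\tau$ and bucketing by ``rough'' counts does not by itself remove this $\tau$ factor, and you give no concrete mechanism that does; even computing all initial $\text{cov}$ values and then recomputing them lazily runs into the same issue unless you control how often each class is revisited.

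The paper avoids $\text{cov}$ altogether by replacing it with a locally computable proxy. Each position $j$ with $\id[j]$ still undefined is called \emph{active} if its maximal undefined block has length at least $\tau+1$, and is assigned score $-1$ if it is among the $\lfloor\tau/3\rfloor$ leftmost or rightmost positions of that block and $+2$ otherwise. Every active block carries non-negative total score, so some unprocessed class has non-negative aggregate score; that class is processed next. Crucially, setting a single entry $\id[j]$ affects the status or score only of positions within distance $\tau$ of $j$, so processing class $\P_k$ costs $\Oh(|\P_k|+\tau|\A_k|)$, where $\A_k\subseteq\P_k$ are its positions that were active. A sparsity argument on the score-$(+2)$ positions then gives $\sum_k|\A_k|\le 9n/\tau$, hence $\Oh(n)$ total time, and $|\S|\le 2\sum_k|\A_k|\le 18n/\tau$. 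This scoring function---a surrogate for the $\text{cov}/\text{dmg}$ tradeoff that is cheap to maintain because it depends only on distances to the nearest defined entry, not on which classes appear in the window---is precisely the ingredient your plan is missing.
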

  \begin{proof}
    First, we build the partition $\PP$. A simple $\Oh(n)$-time implementation is based on the suffix array $\SA$
    and the $\LCP$ table of $T$ (see \cref{sec:sa}).
    We cut the suffix array before every position $i$ such that $\LCP[i]<\tau$, and we remove positions $i$ with $\SA[i]>n-\tau+1$.
    The values $\SA[\ell \dd r]$ in each remaining maximal region form a single class $\P \in \PP$.
    We store pointers to $\P$ for all positions $j\in \P$.

    Next, we iteratively construct the function $\id$ represented in a table $\id[1\dd n-\tau+1]$.
    Initially, each value $\id[j]$ is undefined ($\bot$). In the $k$th iteration, we choose a class $\P_k\in \PP$ and
    set $\id[j]=k$ for $j\in \P_k$. Finally, we build a $\tau$-synchronizing set as in \cref{cons:sync}.
    
    Our choice of subsequent classes $\P_k$ is guided by a scoring function.
    To define it, we distinguish \emph{active blocks} which are maximal blocks $\id[\ell\dd r]$ 
    consisting of undefined values only and satisfying $r-\ell \ge \tau$.
    Each position $j\in [\ell\dd r]$ lying in an active block $\id[\ell\dd r]$ is called an \emph{active position}
    and assigned a score: $-1$ if it is among the leftmost or the rightmost $\floor{\frac13 \tau}$ positions
    within an active block
    and $+2$ otherwise.
    Note that the condition $r-\ell\ge \tau$ implies that the total score of positions within an active block
    is non-negative. Hence, the global score of all active positions is also non-negative.

    Our algorithm explicitly remembers whether each position is active and what its score is.
    Moreover, we store the aggregate score of active positions in each class $\P \in \PP$
    and maintain a collection $\PP^+\sub \PP$ of unprocessed classes with non-negative scores.
    Note that the aggregate score is 0 for the already processed classes, so $\PP^+$ is non-empty
    until there are no unprocessed classes.
    Hence, we can always choose the subsequent class $\P_k$ from $\PP^+$.

    Having chosen the class $\P_k$, we need to set $\id[j]=k$ for $j\in \P_k$.
    If the position $j\in \P_k$ was active, then some nearby positions $j'$ may cease to be active or change their scores.
    This further modifies the aggregate scores of other classes $\P\in \PP$, some of which may enter or leave $\PP^+$.
    Nevertheless, the affected positions $j'$ satisfy $|j-j'|\le \tau$, so we can implement the changes
    caused by setting $\id[j]=k$ in $\Oh(\tau)$ time. The total cost of processing $\P_k$
    is $\Oh(|\P_k|+\tau|\A_k|)$, where $\A_k\sub \P_k$ consists of positions which were active initially.

    Finally, we build the synchronizing set $\S$ according to \cref{cons:sync}.
    Recall that a position $i\in [1\dd n-2\tau+1]$ is inserted to $\S$ if and only if $\min \id[i\dd i+\tau] \in \{\id[i],\id[i+\tau]\}$.
    Hence, it suffices to slide a window of fixed width $\tau+1$ over the table $\id[1\dd n-\tau+1]$ computing the sliding-window minima.
    This takes $\Oh(n)$ time.

    To bound the size $|\S|$, consider a position $i$ inserted to $\S$, let $k=\min \id[i\dd i+\tau]$, and note that $k=\id[i]$ or $k=\id[i+\tau]$.
    Observe that prior to processing $\P_k$, we had $\id[j]=\bot$ for $j\in [i\dd i+\tau]$.
    Consequently, $\id[i\dd i+\tau]$ was contained in an active block and $i,i+\tau$ were active positions.
    At least one of them belongs to $\P_k$, so it also belongs to $\A_k$.
    Hence, if $i\in \S$, then $i\in \bigcup_k \A_k$ or $i+\tau \in \bigcup_k \A_k$,
    which yields $|\S| \le 2|\bigcup_k \A_k| = 2\sum_k |\A_k|$.
    In order to bound this quantity, let us introduce sets $\A^+_k\sub \A_k$ formed by active positions that had score $+2$ prior to processing $\P_k$.
    The choice of $\P_k$ as a class with non-negative aggregate score guarantees that $|\A_k|\le 3|\A^+_k|$.

    Finally, we shall prove that the set $\A^+ := \bigcup_k \A^+_k$ is $\frac13\tau$-sparse.  
    Consider two distinct positions $j,j'\in \A^+$ such that $j \in \A^+_k$, $j'\in \A^+_{k'}$.
    If $k=k'$, then $|j-j'|>\frac13\tau$  because $\A^+_k\sub \P_k$ is $\frac13\tau$-sparse by \cref{fct:sparse}.
    Hence, we may assume without loss of generality that $k'<k$.  
    Prior to processing $\P_k$, the position $j'$ was not active (the value $\id[j']=k'$ was already set) whereas $j$ was active and had score
    $+2$.  Hence, there must have been at least $\floor{\frac13\tau}$
    active positions with score $-1$ in between, so $|j-j'|\ge \floor{\frac13\tau} + 1 > \frac13\tau$.  

  Consequently, $|\A^+|\le \big\lceil{\frac{3(n-\tau+1)}{\tau}\big\rceil}\le \frac{3n}{\tau}$, which implies $|\S|\le 2\sum_k |\A_k| \le 6\sum_k |\A^+_k| =  6|\A^+| \le \frac{18n}{\tau}$.
  Moreover, the overall running time is $\Oh(n+\tau\sum_k|\A_k|)=\Oh(n+\tau\frac{n}{\tau}) =\Oh(n)$.
   \end{proof}

\subsubsection{Efficient Implementation for Small \texorpdfstring{$\tau$}{Tau}}
\label{sec:nonperiodic-small-tau}
Next, we shall implement our construction in
$\Oh(\frac{n}{\tau})$ time if $\tau \le \eps \log_{\sigma} n$ and $\eps < \frac15$.

Our approach relies on local consistency of the procedure in \cref{lem:sync0}. 
More specifically, we note that the way
this procedure handles a position $i$ depends only on the
classes of the nearby positions $j$ with $|j-i|\le \tau$.  In
particular, these classes determine how the score of $i$ changes and
 whether $i$ is inserted to $\S$.

Motivated by this observation, we partition $[1\dd n-\tau+1]$ into $\Oh(\frac{n}{\tau})$ \emph{blocks} so that the $b$th block contains
positions $i$ with $\lceil{\frac{i}{\tau}\rceil}=b$.  We
define $T[1+(b-2)\tau\dotdot (b+2)\tau]$ to be the \emph{context} of the $b$th block
 (assuming that $T[i]=\# \notin \Sigma$ for out-of-bound positions $i\notin[1\dd n]$), and we say that blocks are
\emph{equivalent} if they share the same context.

Based on the initial observation, we note that if two blocks $b,b'$ are
equivalent, then the corresponding positions $i=b\tau-\delta$ and $i'=b'\tau-\delta$ (for $\delta\in [0\dd \tau)$)
are processed in the same way by the procedure of \cref{lem:sync0}.  This
means that we need to process just one
\emph{representative block} in each equivalence class.

We can retrieve each context in $\Oh(1)$ time using \cref{prop:packed}.
Consequently, it takes $\Oh(\frac{n}{\tau})$ time to partition the
blocks into equivalence classes and construct a family $\BB$ of
representative blocks. Our choice of $\tau$ guarantees
that $|\BB|=\Oh(1+\sigma^{4\tau})=\Oh(n^{4\eps})$.  Similarly, the
class $\P\in \PP$ of a position $j$ is determined by
$T[j\dd j+\tau)$, so $|\PP|=\Oh(n^\eps)$ and the substring can also be retrieved in $\Oh(1)$
time.  Hence, the construction procedure has $\Oh(n^\eps)$
iterations. If we spent $\Oh\big(\tau^{\Oh(1)}\big)$ time for each
representative block at each iteration, the overall
running time would be $\Oh\big(n^\eps n^{4\eps}\tau^{\Oh(1)}\big)=\Oh\big(n^{5\eps + o(1)}\big)=\Oh\big(n^{1-\Omega(1)}\big)=\Oh(\frac{n}{\tau})$.  
This allows for a
quite simple approach.

We maintain classes $\P\in \PP$ indexed by the underlying
substrings. For each class, we store the identifier $\id(j)$ assigned to
the positions $j\in \P$ and a list of positions $j\in \P$ contained
in the representative blocks.  To initialize these components (with
the identifier $\id(j)$ set to $\bot$ at first), we scan all
representative blocks in $\Oh\big(\tau^{\Oh(1)}\big)$ time per block,
which yields $\Oh\big(|\BB|\tau^{\Oh(1)}\big)=o(\frac{n}{\tau})$ time in
total.

Choosing a class to be processed in every iteration involves computing
scores.  To determine the score of a particular class $\P\in
\PP$, we iterate over all positions $j\in \P$
contained in the representative blocks. We retrieve the class of each
fragment $j'$ with $|j-j'|\le \tau$ in order to compute the
score of $j$. We add this score, multiplied by the number of
equivalent blocks, to the aggregate score of $\P$.  Having computed the
score of each class, we take an arbitrary class $\P_k$ with a
non-negative score (and no value assigned yet), and we assign the
subsequent identifier $k$ to this class.  As announced above, the running
time of a single iteration is $\Oh\big(|\BB|\tau^{\Oh(1)}\big)$ as we spend
$\Oh\big(\tau^{\Oh(1)}\big)$ time for each position contained
in a representative block.

In the post-processing, we compute $\S$ restricted to positions
in representative blocks:  For every
position $i$ contained in a representative block, we retrieve the
classes of the nearby positions $j\in [i\dd i+\tau]$ to
check whether $i$ should be inserted to $\S$.  This~takes $\Oh\big(\tau^{\Oh(1)}\big)$ time per
position $i$, which is $\Oh\big(|\BB|\tau^{\Oh(1)}\big)=o(\frac{n}{\tau})$
in total.

Finally, we build the whole set $\S$: For each block, we copy the
positions of the corresponding representative block inserted to $\S$ (shifting
the indices accordingly).  The running time of this final phase is
$\Oh(|\S|+\frac{n}{\tau})$, which is $\Oh(\frac{n}{\tau})$
due to $|\S|\le \frac{18n}{\tau}$.

\begin{proposition}\label{lem:sync_fast0}
    For every constant $\eps<\frac{1}{5}$, given the packed representation of a
    text $T\in[0\dd\sigma)^n$ and a positive integer $\tau \le \eps \log_\sigma n$
    such that $\Q=\emptyset$, one
    can construct in $\bigO(\frac{n}{\tau})$ time a
    $\tau$-synchronizing set of size $\bigO(\frac{n}{\tau})$.
\end{proposition}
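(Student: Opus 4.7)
The plan is to exploit the local consistency of the algorithm from Proposition~\ref{lem:sync0}: the treatment of any position $i$ (its score updates and whether $i$ is inserted into $\S$) depends only on the classes $\id(j)$ for $j$ with $|j-i|\le \tau$, and these classes are themselves determined by the substrings $T[j\dd j+\tau)$. Consequently, the whole behavior of the algorithm on $i$ is a function of a fixed-length window of $T$ centered at $i$.

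First, I would partition $[1\dd n-\tau+1]$ into $\Oh(n/\tau)$ \emph{blocks} of $\tau$ consecutive positions, and associate to each block a \emph{context}: the substring of $T$ of length $\Oh(\tau)$ covering every position within distance $\tau$ of the block (with $T$ padded by a sentinel outside $[1\dd n]$). By the observation above, two blocks with the same context are processed identically by the algorithm of Proposition~\ref{lem:sync0}, so it suffices to execute the algorithm on one \emph{representative} per equivalence class and then replicate the decisions. Using Proposition~\ref{prop:packed}, the context of each block is extracted in $\Oh(1)$ time, so partitioning the blocks into $|\BB|\le \sigma^{\Oh(\tau)} = \Oh(n^{\Oh(\eps)})$ equivalence classes takes $\Oh(n/\tau)$ time. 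The number of classes in $\PP$ is likewise $|\PP|=\Oh(\sigma^\tau) = \Oh(n^\eps)$.

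Next, I would simulate the deterministic procedure of Proposition~\ref{lem:sync0} maintaining only data for positions inside representative blocks, where the aggregate score of each $\P\in\PP$ is stored as a weighted sum over its occurrences in representative blocks (each weighted by the multiplicity of the corresponding equivalence class). In each of the $\Oh(|\PP|)$ iterations, choosing an unprocessed class with non-negative aggregate score and propagating the resulting score changes touches only $\Oh(|\BB|\cdot\tau^{\Oh(1)})$ positions. Afterwards I would apply \cref{cons:sync} within each representative block to determine which positions belong to $\S$, and finally replicate these decisions to all other blocks in the same equivalence class, shifting indices by the appropriate block offset.

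The main obstacle is controlling the running time. Summing over iterations gives $\Oh(|\PP|\cdot|\BB|\cdot\tau^{\Oh(1)})$; with context length a small multiple of $\tau$, this is $\Oh(n^{5\eps + o(1)})$, and the hypothesis $\eps<\tfrac{1}{5}$ is precisely what forces this to be $n^{1-\Omega(1)}$, hence $o(n/\tau)$ since $n/\tau=\Omega(n/\log n)$. The initial setup, context hashing, and final replication all cost $\Oh(n/\tau + |\S|) = \Oh(n/\tau)$ by the size bound from Proposition~\ref{lem:sync0}, so the overall construction runs within the claimed $\Oh(n/\tau)$ budget and produces a $\tau$-synchronizing set of size $\Oh(n/\tau)$.
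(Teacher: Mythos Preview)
Your proposal is correct and follows essentially the same approach as the paper: partition positions into $\Oh(n/\tau)$ blocks with $\Oh(\tau)$-length contexts, run the deterministic procedure of Proposition~\ref{lem:sync0} only on one representative per context class (with scores weighted by class multiplicity), and replicate the outcome. Your time analysis $\Oh(|\PP|\cdot|\BB|\cdot\tau^{\Oh(1)})=\Oh(n^{5\eps+o(1)})$ and the use of $\eps<\tfrac15$ to make this $o(n/\tau)$ match the paper exactly.
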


\subsection{The General Case}
\label{sec:arb}

In this section, we adapt our constructions so that they work for arbitrary strings.
For this, we first study the structure of the set $\Q$.

\subsubsection{Structure of Highly Periodic Fragments}
\label{sec:hp}

The probabilistic argument in the proof of
\cref{fct:rand0} relies on the large size of each set $[i\dd i+\tau]\sm \Q$ that we had due to $\Q=\emptyset$. 
However, in general the sets $[i\dd i+\tau]\sm \Q$ can be of arbitrary size between $0$ and $\tau+1$.
To deal with this issue, we define the following set (assuming $\tau\ge 2$).
\begin{camera}
\begin{multline*}
  \B = \big\{i \in [1\dd n-\tau+1]\sm \Q : \\ \per(T[i\dd i+\tau-1)) \le
  \tfrac13\tau\text{ or }\per(T[i+1\dd i+\tau)) \le \tfrac13\tau\big\}
\end{multline*}
\end{camera}
\begin{full}
   \[ \B = \big\{i \in [1\dd n-\tau+1]\sm \Q : \per(T[i\dd i+\tau-1)) \le
    \tfrac13\tau\text{ or }\per(T[i+1\dd i+\tau)) \le \tfrac13\tau\big\}\]
\end{full}
(In the special case of $\tau=1$, we set $\B=\emptyset$.)
Intuitively, $\B$ forms a boundary which separates $\Q$ from
its complement, as formalized in the fact
below. However, it also contains some additional fragments included to
make sure that $\B$ consists of full classes $\P\in \PP$.

\begin{camera}
  \begin{fact}[\fullonly]\label{fct:boundary}
    If $[\ell\dd r]\cap \Q\ne \emptyset$ and $[\ell\dd r]\not\subseteq\Q$ for
    two positions $\ell,r\in [1\dd n-\tau+1]$, then $[\ell\dd r]\cap \B\ne \emptyset$.
  \end{fact}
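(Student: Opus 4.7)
The plan is to reduce to finding two adjacent positions $i,i+1\in [\ell\dd r]$ with one in $\Q$ and one outside $\Q$, and then to show that the one outside $\Q$ already lies in $\B$. Before starting, I would dispose of the degenerate range $\tau\in\{1,2\}$, where $\frac13\tau<1\le \per(X)$ for every nonempty string $X$; consequently $\Q=\emptyset$ and the hypothesis $[\ell\dd r]\cap\Q\ne\emptyset$ fails, so the claim is vacuous. From now on, assume $\tau\ge 3$.

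Since $[\ell\dd r]\cap\Q\ne\emptyset$ and $[\ell\dd r]\not\sub\Q$, a walk along consecutive integers of $[\ell\dd r]$ must cross the boundary of $\Q$, yielding adjacent positions $i,i+1\in[\ell\dd r]$ with exactly one of them in $\Q$. This is where I want the argument to localize: the whole question reduces to analysing the two length-$\tau$ windows starting at $i$ and $i+1$, which share the overlap $T[i+1\dd i+\tau)$ of length $\tau-1$.

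The key observation is that $\tau-1\ge \frac13\tau$ when $\tau\ge 3$, so any period bounded by $\frac13\tau$ of either window is automatically a period of the shared overlap $T[i+1\dd i+\tau)$. Therefore, in the case $i\in\Q$, $i+1\notin\Q$, I conclude $\per(T[i+1\dd i+\tau))\le\per(T[i\dd i+\tau))\le \frac13\tau$, which together with $i+1\in[1\dd n-\tau+1]\sm \Q$ places $i+1\in\B$. Symmetrically, if $i\notin\Q$ and $i+1\in\Q$, then $\per(T[i+1\dd i+\tau))\le\per(T[i+1\dd i+1+\tau))\le\frac13\tau$, and $i\in[1\dd n-\tau+1]\sm\Q$ gives $i\in\B$. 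In either case $[\ell\dd r]\cap\B\ne\emptyset$, as desired.

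The only mildly subtle step is the arithmetic check $\frac13\tau\le\tau-1$ (i.e., verifying that the shared overlap is long enough for the small period to persist), which is exactly why the boundary definition uses the halves $T[i\dd i+\tau-1)$ and $T[i+1\dd i+\tau)$ rather than the full length-$\tau$ windows. I do not expect any genuine obstacle; the argument is essentially bookkeeping around this length comparison and the standard fact that a period of a string is a period of every sufficiently long substring.
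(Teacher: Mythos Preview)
Your proof is correct and follows essentially the same approach as the paper: both arguments reduce to a pair of adjacent positions $i,i+1\in[\ell\dd r]$ with exactly one in $\Q$, and then verify that the one outside $\Q$ lies in $\B$ because the relevant length-$(\tau-1)$ overlap inherits the small period. The only cosmetic difference is that the paper reaches the adjacent pair by induction on $r-\ell$ (with base case $r=\ell+1$), whereas you locate it directly by walking along the interval.
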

\end{camera}
\begin{full}
  \begin{fact}\label{fct:boundary}
    If $[\ell\dd r]\cap \Q\ne \emptyset$ and $[\ell\dd r]\not\subseteq\Q$ for
    two positions $\ell,r\in [1\dd n-\tau+1]$, then $[\ell\dd r]\cap \B\ne \emptyset$.
  \end{fact}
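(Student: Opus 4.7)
The plan is to reduce the claim to a one-step argument about two adjacent positions. Since $[\ell\dd r]$ is a contiguous integer interval that meets both $\Q$ and its complement, there must exist some $i\in [\ell\dd r-1]$ such that $\{i,i+1\}\sub [\ell\dd r]$ with exactly one of $i,i+1$ in $\Q$. It then suffices to show that one of these two positions lies in $\B$ (handling $\tau=1$ separately, where $\Q=\emptyset$ since $\per\ge 1 > \tfrac13$, so the hypothesis is vacuous and there is nothing to prove).

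Assume $\tau\ge 2$. The key observation is that $T[i\dd i+\tau-1)$ and $T[i+1\dd i+\tau)$ are precisely the two length-$(\tau-1)$ substrings sitting inside $T[i\dd i+\tau)$ and inside $T[i+1\dd i+\tau+1)$ simultaneously. So if one of these two length-$\tau$ windows has period $p\le\tfrac13\tau$, then, since $\tau-1\ge p$ (because $\tau\ge 2$ and $p\le \tfrac13\tau$), both its prefix and suffix of length $\tau-1$ inherit period $p$.

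Concretely, if $i\in \Q$ and $i+1\notin \Q$, then $\per(T[i\dd i+\tau))\le\tfrac13\tau$, and its length-$(\tau-1)$ suffix $T[i+1\dd i+\tau) = T[(i+1)\dd (i+1)+\tau-1)$ also has period at most $\tfrac13\tau$, so $i+1\in\B$. Symmetrically, if $i\notin \Q$ and $i+1\in \Q$, then $\per(T[i+1\dd i+1+\tau))\le\tfrac13\tau$, and its length-$(\tau-1)$ prefix $T[i+1\dd i+\tau)$ has period at most $\tfrac13\tau$, so $i$ satisfies the second alternative in the definition of $\B$, yielding $i\in\B$. In either case, $[\ell\dd r]\cap \B\ne \emptyset$.

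I do not expect a genuine obstacle here: the proof is essentially a discrete intermediate-value argument combined with the trivial monotonicity of ``has a short period'' when passing to substrings. The only minor care is (a) correctly identifying the length-$(\tau-1)$ prefix/suffix relationships in the definition of $\B$, and (b) dealing cleanly with the degenerate case $\tau=1$.
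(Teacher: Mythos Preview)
Your proof is correct and follows essentially the same approach as the paper. The paper phrases the reduction to an adjacent pair as an induction on $r-\ell$ (with base case $r=\ell+1$), whereas you invoke the discrete intermediate-value observation directly; the case analysis on which of $i,i+1$ lies in $\Q$ is then identical in content.
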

\begin{proof}
  We proceed by induction on $r-\ell$.  In the base case, $r=\ell+1$
  and the assumption yields that $\{\ell,\ell+1\}\sm \Q$ consists of a single element $i$.
  However, this means that $i-1\in \Q$ (if $i=\ell+1$) or $i+1\in \Q$ (if $i=\ell$).
  We conclude that $\per(T[i\dd i+\tau-1)) \le
  \frac13\tau$ or $\per(T[i+1\dd i+\tau))\le \frac13\tau$, respectively, so $i\in \B$ as claimed.

  For the inductive step with $r-\ell \ge 2$, it suffices to note that if $[\ell\dd r]\cap \Q\ne
  \emptyset$ and $[\ell\dd r]\not\subseteq\Q$, 
  then the same is true for $[\ell\dd r-1]$ or for $[\ell+1\dd r]$
  (because these two subsets have a non-empty intersection).
\end{proof}
\end{full}

We conclude the analysis with a linear-time construction of $\Q$ and $\B$,
which also reveals an upper bound on $|\B|$.

\begin{camera}
\begin{lemma}[\fullonly]\label{lem:hb}
  Given a text $T$ and a positive integer $\tau$, the sets $\Q$ and $\B$ 
  can be constructed in $\Oh(n)$ time. Moreover,
  $|\B|\le \frac{6n}{\tau}$.
\end{lemma}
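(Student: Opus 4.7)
My plan is to derive both claims from a classification of the maximal periodic substrings (``runs'') of $T$ with period at most $\tfrac13\tau$. First, I would invoke a standard linear-time algorithm (e.g., Kolpakov--Kucherov) to enumerate all maximal runs $(l,r,p)$ of $T$, keeping only those with $p\le \tfrac13\tau$. The set $\Q$ is exactly the union $\bigcup [l,r-\tau+1]$ taken over the runs with $r-l+1\ge\tau$; I would represent it as a bitvector via a difference-array sweep, for $\Oh(n)$ overall cost. For each retained run of length at least $\tau-1$, the only candidates for inclusion in $\B$ are $r-\tau+2$ (matching the first disjunct in the definition of $\B$) and $l-1$ (matching the second disjunct), and each candidate is added to $\B$ exactly when it lies outside $\Q$. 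Since the total number of maximal runs of $T$ is $\Oh(n)$, the construction runs in $\Oh(n)$ time.

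For the bound $|\B|\le 6n/\tau$, I split $\B = \B^{\mathrm{L}} \cup \B^{\mathrm{R}}$ according to the two disjuncts in its definition and prove that each subset is $\tfrac13\tau$-sparse (any two distinct elements are at distance strictly greater than $\tfrac13\tau$). This yields $|\B^{\mathrm L}|,|\B^{\mathrm R}|\le 3n/\tau$ by pigeonhole, and hence $|\B|\le 6n/\tau$. The heart of the argument is the sparsity of $\B^{\mathrm L}$. Suppose $i<i'$ both belong to $\B^{\mathrm L}$ and $i'-i\le\tfrac13\tau$, and let $p:=\per(T[i\dd i+\tau-1))$ and $p':=\per(T[i'\dd i'+\tau-1))$, each at most $\tfrac13\tau$. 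The overlap $T[i'\dd i+\tau-2]$ has length at least $\tfrac{2}{3}\tau-1\ge p+p'-\gcd(p,p')$, so \cref{lem:per} gives it period $d:=\gcd(p,p')$. A short induction, relying on $d\mid p$ and on the overlap being long enough to reach within $p$ of the right end of the first window, propagates $d$ to the whole window $T[i\dd i+\tau-2]$; by symmetry the same holds for the second window. By the minimality of $p,p'$ we obtain $p=p'=d$. Finally, the condition $i\notin\Q$ forces the maximal run of period $p$ containing $T[i\dd i+\tau-2]$ to end exactly at $i+\tau-2$, because extending it by a single character would give $T[i\dd i+\tau)$ period $p\le\tfrac13\tau$ and place $i$ in $\Q$. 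Symmetrically, the run through $T[i'\dd i'+\tau-2]$ ends at $i'+\tau-2$. But these two windows overlap by at least $p$ positions, so the two runs must coincide---which is impossible, as they have distinct right endpoints.

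The main obstacle is the period-propagation step: I need to verify that, starting from period $d$ on the overlap and period $p$ on the full window, the inductive extension covers every position in $[i+d,i+\tau-2]$. The key inequality is $\tau-2-p+d\ge i'-i+d-1$, which reduces to $i'-i\le\tau-1-p$ and follows from $i'-i\le\tfrac13\tau$, $p\le\tfrac13\tau$, and $\tau\ge 3$. Everything else---computing the maximal runs, maintaining the bitvector for $\Q$, and harvesting the candidates for $\B$---is routine bookkeeping once the sparsity claim is in place; the analysis of $\B^{\mathrm R}$ is entirely symmetric.
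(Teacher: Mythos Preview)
Your proposal is correct but follows a genuinely different route from the paper. The paper avoids the runs theorem entirely: it partitions $[1\dd n-\tau+1]$ into blocks $I$ of size at most $\lceil\tfrac13\tau\rceil$, computes for each block the shortest period $p$ of a central fragment $x$ of length $\tau-1-|I|$ together with the maximal $p$-periodic extension $y$ of $x$ inside a slightly larger window, and then shows by a direct case analysis that $I\cap\Q$ is an explicit subinterval and $|I\cap\B|\le 2$. Both the $\Oh(n)$ running time ($\Oh(\tau)$ work per block) and the bound $|\B|\le 6n/\tau$ (at most $3n/\tau$ blocks) fall out of the block count. Your approach instead invokes a linear-time runs algorithm and the linear-runs bound as black boxes, obtaining a clean global description of $\B$ as a subset of run endpoints, and proves the size bound by an independent $\tfrac13\tau$-sparsity argument for $\B^{\mathrm L}$ and $\B^{\mathrm R}$. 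Two small remarks: (i) the period-propagation step simplifies if you note that once the overlap has length at least $p$ and period $d\mid p$, the length-$p$ suffix of $T[i\dd i+\tau-2]$ already has period $d$, and hence so does the whole $p$-periodic window; (ii) your identification of $\Q$ with a union over runs of period at most $\tfrac13\tau$ tacitly uses that the maximal $p$-periodic extension of a length-$\tau$ factor with $p\le\tfrac13\tau$ is itself a run of smallest period exactly $p$---this is standard but does need a short Fine--Wilf argument. In summary, the paper's block approach is more elementary and self-contained (no external runs machinery), while yours is more structural and ties $\B$ directly to the well-studied run decomposition.
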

\end{camera}
\begin{full}
  \begin{lemma}\label{lem:hb}
    Given a text $T$ and a positive integer $\tau$, the sets $\Q$ and $\B$ 
    can be constructed in $\Oh(n)$ time. Moreover,
    $|\B|\le \frac{6n}{\tau}$.
  \end{lemma}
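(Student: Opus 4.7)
The plan is to route the construction through the set of all maximal repetitions (runs) of $T$, which can be built in $\Oh(n)$ time via a known linear-time runs algorithm. Call a run $(\ell,r,p)$ \emph{good} if $p\le\frac13\tau$ and $r-\ell+1\ge\tau-1$. I will show that the good runs determine both $\Q$ and $\B$, and that their number is $\Oh(\frac{n}{\tau})$.

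First, I claim $\Q=\bigcup\{[\ell\dd r-\tau+1] : (\ell,r,p)\text{ is a good run with } r-\ell+1\ge \tau\}$. If $\per(T[i\dd i+\tau))\le \frac13\tau$, then $T[i\dd i+\tau)$ lies inside the unique maximal run with that period, which is good and has length at least $\tau$; the converse direction is immediate. So $\Q$ is marked in a bitmap by iterating through such runs in $\Oh(n)$ time.

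Next, I claim every $i\in\B$ has the form $r-\tau+2$ or $\ell-1$ for some good run $(\ell,r,p)$. Consider the case $\per(T[i\dd i+\tau-1))\le \frac13\tau$: the length-$(\tau-1)$ fragment $T[i\dd i+\tau-1)$ lies inside a good run $(\ell,r,p)$, so $\ell\le i$ and $r\ge i+\tau-2$. If $r\ge i+\tau-1$, then $T[i\dd i+\tau)$ would have period at most $p\le\frac13\tau$, forcing $i\in\Q$, contradiction; hence $r=i+\tau-2$, i.e., $i=r-\tau+2$. The symmetric case yields $i=\ell-1$. Thus $\B$ is enumerated by iterating over good runs and testing both boundary candidates against the $\Q$-bitmap.

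The main obstacle is bounding the number $N_g$ of good runs by $\frac{3n}{\tau}$, which I plan to do by showing their starting positions are $\lceil\frac13\tau\rceil$-sparse. Suppose $(\ell_1,r_1,p_1)$ and $(\ell_2,r_2,p_2)$ are good runs with $\ell_1<\ell_2$ and assume for contradiction that their overlap has length at least $p_1+p_2$. By the periodicity lemma, the overlap then has period $g=\gcd(p_1,p_2)$; since the overlap has length at least $p_i$ and $g$ divides $p_i$, a standard extension argument (every $p_i$ consecutive positions of run $i$ represent all residues modulo $p_i$) propagates period $g$ to the entirety of each run. This forces $p_1=p_2=g$ by minimality, so the two runs coincide by maximality, a contradiction. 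Hence the overlap is strictly less than $p_1+p_2\le\frac23\tau$; combined with $r_1-\ell_1+1\ge\tau-1$, this gives $\ell_2-\ell_1\ge\lceil\frac13\tau\rceil$, so $N_g\le\frac{3n}{\tau}$. Since each good run contributes at most two candidates to $\B$, we conclude $|\B|\le 2N_g\le\frac{6n}{\tau}$.
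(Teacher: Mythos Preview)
Your proof is correct, but it follows a genuinely different route from the paper's argument. The paper avoids runs entirely: it partitions $[1\dd n-\tau+1]$ into $\Oh(n/\tau)$ blocks $I$ of size at most $\lceil\tfrac13\tau\rceil$, and for each block it computes the period $p$ of the central fragment $x=T[i+b\dd i+\tau-1)$ (shared by all length-$\tau$ windows starting in $I$) and its maximal extension $y$ inside the block's window. It then proves, via two applications of the periodicity lemma, that $I\cap\Q$ and $I\cap\B$ are determined by $p$ and the endpoints of $y$, with $|I\cap\B|\le 2$. This gives $|\B|\le\tfrac{6n}{\tau}$ immediately and costs $\Oh(\tau)$ per block using only a naive period computation, so the whole construction is elementary.

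Your approach instead appeals to a linear-time runs algorithm as a black box, then identifies $\Q$ with the disjoint intervals coming from good runs of length $\ge\tau$ and identifies $\B$ with their two boundary positions. The size bound comes from a separate sparsity argument on the starting positions of good runs. This is conceptually clean and gives a nice structural description of $\Q$ and $\B$, but it imports the runs machinery, which is a heavier dependency than the paper's self-contained block argument. Note also a minor point: to get exactly $\ell_2-\ell_1\ge\lceil\tfrac13\tau\rceil$ you should use that the overlap is an \emph{integer} strictly below $p_1+p_2\le 2\lfloor\tfrac13\tau\rfloor$, hence at most $2\lfloor\tfrac13\tau\rfloor-1$, which combined with $|R_1|\ge\tau-1$ yields the claim; a coarser bound would only give $\lfloor\tfrac13\tau\rfloor$.
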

\begin{proof}
  Note that if $\tau\le 2$, then $\B=\Q=\emptyset$, and the claim holds trivially.
  We assume $\tau\ge 3$ henceforth.

  Let $I=[i\dd i+b)\sub [1\dd n-\tau+1]$ be a block of $b \le \ceil{\frac13\tau}$
  subsequent positions. Consider a fragment $x = T[i+b\dd i+\tau-1)$ (non-empty due to $\tau\ge 3$), and let $p = \per(x)$.
  Let us further define $y=T[\ell\dd r]$ as the maximal fragment with period $p$ containing $x$ and contained in $T[i\dd i+b+\tau-1)$.
  We claim that $I\cap \Q = I \cap \B = \emptyset$ if $p > \frac13\tau$ or $|y|<\tau-1$,
  whereas $I\cap \Q = I \cap [\ell\dd r-\tau+1]$ and $I\cap \B = I \cap \{\ell-1, r-\tau+2\}$ otherwise.

  First, let us consider a position $j\in I\cap [\ell \dd r-\tau+1]$ provided that $p\le \frac13\tau$.
  Observe that $T[j\dd j+\tau)$ is contained in $y$, so $\per(T[j\dd j+\tau))=p \le \frac13\tau$,
  which implies $j\in I\cap \Q$.
  On the other hand, if $j\in I \cap \Q$, we define $p' = \per(T[j \dd j+\tau))$.
  Note that $x$ is contained in $T[j \dd j+\tau)$ and thus also has $p'\le \frac13\tau$ as its period.
  Moreover, $p+p'-1 \le \floor{\frac23\tau}-1 = \tau-1-\ceil{\frac13\tau} \le \tau-1-b = |x|$, so $p \mid p'$  by the periodicity lemma (\cref{lem:per}).
  Consequently, $p=p'\le \frac13\tau$ and $T[j\dd j+\tau)$ is contained in $y$, which yields $j\in I\cap [\ell \dd r-\tau+1]$.

  Next, let us consider a position $j\in I \cap \{\ell-1,r-\tau+2\}$ provided that $p\le \frac13\tau$ and $|y|\ge \tau-1$.
  Observe that $T[j+1\dd j+\tau)$ or $T[j\dd j+\tau-1)$ is contained in $y$,
  so $\per(T[j+1\dd j+\tau))=p \le \frac13\tau$ or $\per(T[j\dd j+\tau-1))=p \le \frac13\tau$.
  Furthermore, $j\notin \Q$ due to $j \notin I\cap [\ell\dd r-\tau+1]=I\cap \Q$, so $j \in I\cap \B$.
  On the other hand, if $j\in I\cap \B$, we define $p' = \per(T[j+1 \dd j+\tau-1))$.
  Note that $x$ is contained in $T[j+1 \dd j+\tau-1)$ and thus also has $p'\le \frac13\tau$ as its period.
  Moreover, $p+p'-1 \le \floor{\frac23\tau}-1 = \tau-1-\ceil{\frac13\tau} \le \tau-1-b = |x|$, so $p \mid p'$ by the periodicity lemma (\cref{lem:per}).
  Consequently,  $p=p'\le \frac13\tau$ and $T[j+1\dd j+\tau-1)$ is contained in $y$, which yields $j\in I\cap [\ell-1 \dd r-\tau+2]$ and $|y|\ge \tau-1$.
  Furthermore, $j\notin I\cap \Q =  I\cap [\ell\dd r-\tau+1]$, so $j\in I\cap \{\ell-1,r-\tau+2\}$ holds as claimed.

  Finally, we observe that $p=\per(x)$ can be computed in $\Oh(|x|)=\Oh(\tau)$ time~\cite{morris1970linear,DBLP:journals/siamcomp/KnuthMP77},
  and $y$ can be easily constructed in $\Oh(|y|)=\Oh(\tau)$ time as a greedy extension of $x$.
  Hence, it takes $\Oh(\tau)$ time to determine $I\cap \Q$ and $I\cap \B$. Moreover, the size of the latter set is at most two.

  The domain $[1\dd n-\tau+1]$ can be decomposed into $\big\lceil{\frac{n-\tau+1}{\lceil{\tau/3}\rceil}\big\rceil}\le \frac{3n}{\tau}$ blocks $I$ of size $b\le \ceil{\frac13\tau}$, so we conclude that the sets $\Q$ and $\B$ can be constructed in $\Oh(n)$ time and that the size of the latter set is at most $\frac{6n}{\tau}$.\end{proof}
\end{full}

\subsubsection{Randomized Construction}
\label{sec:arb-rand}

The set $\B$ lets us adopt
the results of \cref{sec:construction-of-set-S-simple-case} to arbitrary
texts. As indicated in a probabilistic argument, the key trick is to assign the smallest identifiers to classes~in~$\B$.  

\begin{fact}\label{lem:rand}
  There is an identifier function $\id$ such that \cref{cons:sync} yields
  a synchronizing set of size at most~$\frac{18n}{\tau}$.
\end{fact}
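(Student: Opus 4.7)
The plan is to combine a deterministic treatment of the ``boundary'' set $\B$ with the randomized argument of \cref{fct:rand0}. Concretely, I would assign identifiers in three tiers: every class $\P \in \PP$ contained in $\B$ receives one of the smallest identifiers (in arbitrary order), every class contained in $\Q$ receives one of the largest identifiers (in arbitrary order), and the remaining classes (those disjoint from $\B \cup \Q$) receive the middle identifiers according to a uniformly random bijection. This layering is well-defined because both $\B$ and $\Q$ are unions of $\PP$-classes: membership in either set depends only on $T[i \dd i+\tau)$, which is constant across each class.

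To bound $\Exp[|\S|]$ produced by \cref{cons:sync}, I would split on whether the window $W_i := [i \dd i+\tau]$ meets $\B$. In the first case $W_i \cap \B \neq \emptyset$, and since $\B \subseteq [1 \dd n-\tau+1] \sm \Q$ carries strictly smaller identifiers than every other class outside $\Q$, the minimum of $\{\id(j) : j \in W_i \sm \Q\}$ is realized at a position in $\B$. Hence $i \in \S$ forces $i \in \B$ or $i+\tau \in \B$, which by \cref{lem:hb} contributes at most $2|\B| \le \frac{12n}{\tau}$ positions.

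In the complementary case $W_i \cap \B = \emptyset$, the contrapositive of \cref{fct:boundary} yields either $W_i \subseteq \Q$ or $W_i \cap \Q = \emptyset$. The former gives $W_i \sm \Q = \emptyset$ and hence $i \notin \S$, while in the latter every position of $W_i$ lies in a class of the randomly permuted tier. By \cref{fct:sparse}, each such class is $\tfrac{1}{3}\tau$-sparse, so at most three of its members fit inside the length-$(\tau+1)$ window; therefore $W_i$ meets at least $\lceil (\tau+1)/3 \rceil$ distinct random-tier classes. Exactly as in \cref{fct:rand0}, uniformity of the random bijection gives $\Pr[i \in \S] \le 2 \cdot \frac{3}{\tau+1} \le \frac{6}{\tau}$, summing to an expected contribution of at most $\frac{6n}{\tau}$.

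Adding the two bounds yields $\Exp[|\S|] \le \frac{18n}{\tau}$, so the probabilistic method produces an identifier function meeting the claimed guarantee. The main subtlety I foresee is the bookkeeping needed to check that $\B$ is indeed a union of $\PP$-classes (so that the three-tier ordering is consistent with the definition of an identifier function) and the careful use of \cref{fct:boundary} to rule out the pathological possibility that $W_i$ straddles the boundary of $\Q$ without meeting $\B$; once these are settled, the counting and probability estimates are essentially immediate from \cref{lem:hb,fct:sparse,fct:rand0}.
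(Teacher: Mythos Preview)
Your proposal is correct and follows essentially the same approach as the paper: assign the smallest identifiers to classes in $\B$, then use \cref{fct:boundary} to split windows into those touching $\B$ (contributing at most $2|\B|\le \frac{12n}{\tau}$), those contained in $\Q$ (contributing nothing), and those disjoint from $\B\cup\Q$ (handled by the randomized bound of \cref{fct:rand0}). The only cosmetic difference is your third tier placing $\Q$-classes last, which is harmless since positions in $\Q$ are excluded from the minimum in \cref{cons:sync} anyway; the paper simply randomizes all non-$\B$ classes together.
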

\begin{proof}
  As in \cref{fct:rand0}, we take a random bijection $\pi :\PP \to
  [0\dd |\PP|)$ and set $\id(j)=\pi(\P)$ if $j\in \P$.
  However, this time we draw $\pi$ uniformly at random only among
  bijections such that if $\P\subseteq \B$ and $\P'\cap
  \B=\emptyset$ for classes $\P,\P'\in \PP$, then
  $\pi(\P)<\pi(\P')$.  (Note that each class in $\PP$ is either
  contained in $\B$ or is disjoint with this set.)

  Consider a position $i$. Observe that if $[i\dd i+\tau]\cap \B\ne \emptyset$, then $i \in \S$ holds only 
  if $i\in \B$ or $i+\tau\in \B$.
  Hence, the number of such positions $i\in \S$ is at most $2|\B|\le \frac{12n}{\tau}$.
  Otherwise, \cref{fct:boundary} implies that $[i\dd i+\tau]\subseteq\Q$ or $[i\dd i+\tau]\cap \Q=\emptyset$.
  In the former case, we are guaranteed that $i\notin \S$ by \cref{cons:sync}.
  On the other hand, $\Pr[i \in \S]\le \frac{6}{\tau}$ holds in the latter case as in the proof of \cref{fct:rand0}
  since $[i\dd i+\tau]\sm \Q = [i\dd i+\tau]$ is of size $\tau+1$.
  By linearity of expectation, the expected number of such positions $i\in \S$ is up to~$\frac{6n}{\tau}$.

  We conclude that $\Exp[|\S|]\le \frac{12n}{\tau}+\frac{6n}{\tau}=\frac{18n}{\tau}$.
  In particular, $|\S|\le
  \frac{18n}{\tau}$ holds for some identifier
  function $\id$.
\end{proof}

\subsubsection{Deterministic Construction}
\label{sec:arb-det}

Our adaptation of the deterministic construction uses
\cref{fct:boundary,lem:hb} in a similar way.  As in
the proof of \cref{lem:sync0}, we gradually construct $\id$ handling one partition class $\P\in \PP$ at a time.
We start with classes contained
in $\B$ (in an arbitrary order), then we process  classes contained in $\Q$ (still in an arbitrary order).
In the final third phase, we choose the subsequent classes disjoint with $\B\cup \Q$ according to their scores.

\begin{camera}
\begin{proposition}[\fullonly]\label{lem:sync}
  Given a text $T\in [0\dd \sigma)^n$ for $\sigma = n^{\Oh(1)}$ and a positive integer $\tau\le\frac12 n$,
  in $\Oh(n)$ time one can construct a $\tau$-synchronizing set of size at most $\frac{30n}{\tau}$.
\end{proposition}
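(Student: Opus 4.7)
The plan is to invoke \cref{cons:sync} with an identifier function $\id$ built in three consecutive phases. Preprocessing builds $\PP$ (from the suffix array and LCP table of $T$, as in \cref{lem:sync0}) and the sets $\Q,\B$ (via \cref{lem:hb}), all in $\Oh(n)$ time. Phase~1 iterates through the classes $\P\in\PP$ with $\P\sub\B$ in an arbitrary order, assigning them the identifiers $1,2,\ldots$; Phase~2 does the same with classes $\P\sub\Q\sm\B$, taking the next available identifiers; and Phase~3 processes the remaining classes (those disjoint from $\B\cup\Q$) using the scoring procedure from \cref{lem:sync0}, with an \emph{active block} now defined as a maximal block $\id[\ell\dd r]$ of positions still undefined at the start of Phase~3 that satisfies $r-\ell\ge\tau$.

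To bound $|\S|\le 30n/\tau$ we split each $i\in[1\dd n-2\tau+1]$ according to whether $[i\dd i+\tau]$ meets $\B$. If $[i\dd i+\tau]\cap\B\ne\emptyset$, fix any $j$ in this intersection: $j\in[i\dd i+\tau]\sm\Q$ and $\id(j)$ was set in Phase~1, hence strictly smaller than every identifier given in Phases~2 and~3. For $i\in\S$, \cref{cons:sync} requires the minimum of $\id$ on $[i\dd i+\tau]\sm\Q$ to equal $\id(i)$ or $\id(i+\tau)$, which forces $i\in\B$ or $i+\tau\in\B$; this contributes at most $2|\B|\le 12n/\tau$ by \cref{lem:hb}. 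Otherwise $[i\dd i+\tau]\cap\B=\emptyset$, and \cref{fct:boundary} gives either $[i\dd i+\tau]\sub\Q$ (forcing $[i\dd i+\tau]\sm\Q=\emptyset$ and $i\notin\S$) or $[i\dd i+\tau]\cap\Q=\emptyset$. In the latter sub-case every position of $[i\dd i+\tau]$ lies in a Phase~3 class, so the analysis of \cref{lem:sync0} transfers verbatim: the set $\A^+$ of Phase~3 positions that were active with score $+2$ at the moment their class was processed is $\frac13\tau$-sparse (using \cref{fct:sparse} inside a single class and the standard interior-vs-boundary argument across classes), giving $|\A^+|\le 3n/\tau$ and at most $6|\A^+|\le 18n/\tau$ positions $i\in\S$ from this sub-case, for a total of $30n/\tau$.

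The total running time is $\Oh(n)$: preprocessing is $\Oh(n)$; Phases~1 and~2 are linear scans over $\B$ and $\Q\sm\B$; Phase~3 inherits the $\Oh(n)$ bound from \cref{lem:sync0} since its scoring updates are local and the pre-assigned identifiers outside Phase~3 classes only shorten active blocks without altering the per-class update cost; finally, $\S$ is extracted by a sliding-window minimum of width $\tau+1$ over $\id$ (with positions in $\Q$ treated as excluded), also in $\Oh(n)$. The main obstacle is cleanly separating the two kinds of contributions, from windows meeting $\B$ and from ``bulk'' Phase~3 windows: the former is handled by the observation that $\B$-identifiers dominate the minimum in \cref{cons:sync} and therefore force $i$ or $i+\tau$ into $\B$, while the latter reduces to the nonperiodic setting of \cref{lem:sync0} restricted to windows disjoint from $\Q$---a reduction made possible precisely by \cref{fct:boundary}.
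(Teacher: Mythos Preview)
Your proposal is correct and follows essentially the same approach as the paper: three phases assigning identifiers first to classes in $\B$, then in $\Q$, then to the remaining classes via the scoring mechanism of \cref{lem:sync0}; the size bound is obtained by the same case split on whether $[i\dd i+\tau]$ meets $\B$, using \cref{fct:boundary} to reduce the $\B$-free windows to the nonperiodic analysis. The only cosmetic imprecision is your phrasing of ``active block'' as a static notion ``at the start of Phase~3''; as in \cref{lem:sync0}, it should be dynamic (maximal $\bot$-runs of length $\ge\tau+1$ at each moment), but your subsequent use makes clear you intend the dynamic version.
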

\end{camera}
\begin{full}
  \begin{proposition}\label{lem:sync}
    Given a text $T\in [0\dd \sigma)^n$ for $\sigma = n^{\Oh(1)}$ and a positive integer $\tau\le\frac12 n$,
    in $\Oh(n)$ time one can construct a $\tau$-synchronizing set of size at most $\frac{30n}{\tau}$.
  \end{proposition}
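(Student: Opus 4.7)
The plan is to implement \cref{cons:sync} with a carefully chosen identifier function $\id$ constructed by processing the classes of $\PP$ in three phases. First, I would build $\PP$ in $\Oh(n)$ time from the suffix array and LCP table of $T$ (as in the proof of \cref{lem:sync0}) and construct $\Q$ and $\B$ via \cref{lem:hb}. Since each class $\P\in\PP$ is determined by a common length-$\tau$ substring, every class is either contained in $\B$, contained in $\Q$, or disjoint from $\B\cup\Q$.

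In Phase~1, I would assign the smallest identifiers (in arbitrary order) to the classes $\P\sub\B$; in Phase~2, the next identifiers go to the classes $\P\sub\Q$; and in Phase~3, the remaining classes are handled via the scoring procedure of \cref{lem:sync0}, where one maintains each unprocessed class's aggregate score, repeatedly picks a class with non-negative score, and updates the scores of the nearby positions. Finally, the set $\S$ prescribed by \cref{cons:sync} is extracted in $\Oh(n)$ time by a single sliding-window minimum pass.

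To bound $|\S|$, I would split the positions $i\in\S$ according to whether $[i\dd i+\tau]\cap\B=\emptyset$. When the intersection is non-empty, the fact that $\B\cap\Q=\emptyset$ and that $\B$-classes receive the smallest identifiers forces the minimum of $\{\id(j):j\in[i\dd i+\tau]\sm\Q\}$ to be attained at a position in $\B$; hence $i\in\S$ only if $i\in\B$ or $i+\tau\in\B$, contributing at most $2|\B|\le 12n/\tau$. Otherwise, \cref{fct:boundary} gives either $[i\dd i+\tau]\sub\Q$ (in which case $[i\dd i+\tau]\sm\Q=\emptyset$ and $i\notin\S$) or $[i\dd i+\tau]\cap\Q=\emptyset$, meaning the whole window lies inside the Phase-3 domain. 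In the second sub-case, letting $\P_k$ be the class that achieves the minimum over $[i\dd i+\tau]$, right before $\P_k$ is processed all positions of $[i\dd i+\tau]$ are undefined and therefore sit in a single active block of length $\ge\tau+1$; thus $i$ or $i+\tau$ belongs to $\A_k$, and the sparsity argument of \cref{lem:sync0} (which applies because Phase-3 classes are still $\frac13\tau$-sparse by \cref{fct:sparse}) bounds this contribution by $18n/\tau$. Summing the two cases gives $|\S|\le 30n/\tau$.

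Constructing $\PP$, $\Q$, and $\B$ takes $\Oh(n)$ time; Phases~1 and~2 pay only $\Oh(|\P|)$ per class, $\Oh(n)$ in total; and Phase~3 runs in $\Oh(n)$ by the bound $\tau\sum_k|\A_k|=\Oh(n)$ from \cref{lem:sync0}. I expect the main obstacle to be the Phase-3 analysis: although the undefined positions are now fragmented by the $\B\cup\Q$-positions already fixed in the first two phases, one must verify that every window $[i\dd i+\tau]$ contributing to the second case lies inside a single active block, so that the local scoring argument from \cref{lem:sync0} carries over unchanged and the sparsity of $\A^+$ still yields $|\A^+|\le 3n/\tau$.
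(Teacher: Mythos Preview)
Your proposal is correct and matches the paper's proof essentially step for step: the three-phase assignment ($\B$ first, then $\Q$, then the scoring procedure of \cref{lem:sync0}), the case split on $[i\dd i+\tau]\cap\B$, the appeal to \cref{fct:boundary} in the disjoint case, and the bound $|\S|\le 2|\B|+2\sum_k|\A_k|\le\frac{12n}{\tau}+\frac{18n}{\tau}$ are all exactly what the paper does. The obstacle you flag is handled just as you anticipate: since in the second sub-case $[i\dd i+\tau]\cap(\B\cup\Q)=\emptyset$ and $k=\min\id[i\dd i+\tau]$, all of $\id[i\dd i+\tau]$ is still $\bot$ when $\P_k$ is chosen, so $[i\dd i+\tau]$ lies in one active block and the sparsity analysis of \cref{lem:sync0} goes through unchanged (Phase-3 classes being disjoint from $\Q$ is what makes \cref{fct:sparse} applicable).
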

  \begin{proof}
    First, we build the partition $\PP$ as in the proof of \cref{lem:sync0}.
    Next, we construct $\Q,\B\sub[1\dd n-\tau+1]$ using \cref{lem:hb}
    and identify each class $\P\in \PP$ as contained in $\B$, contained in $\Q$, or disjoint with $\B\cup \Q$.

    We start the iterative construction of an identifier function $\id$ by initializing a table
    $\id[1\dd n-\tau+1]$ with markers $\bot$ representing undefined values.
    In the first two phases, we process the classes contained in $\B$ and $\Q$, respectively, assigning them initial subsequent
    identifiers and setting $\id[j]=k$ for the $k$th class $\P_k \in \PP$ considered.

    Before moving on to classes disjoint with $\B\cup \Q$, we compute the auxiliary components required for the scoring function
    (defined exactly as in the proof of \cref{lem:sync0}).
    For this, we scan the table $\id$ to identify
    active blocks and assign scores to active positions, maintaining the aggregate score of each class $\P\in \PP$
    and the collection $\PP^+$ of unprocessed classes with non-negative score.
    The total running time up to this point is clearly $\Oh(n)$.

    The subsequent classes $\P_k$ are processed exactly as in the proof of \cref{lem:sync0}:
    we choose the class $\P_k$ from $\PP^+$ and set $\id[j]=k$ for each $j\in \P_k$, updating the active positions, scores, aggregate scores, and $\PP^+$ accordingly.
    The running time of a single iteration is still $\Oh(|\P_k|+\tau|\A_k|)$, where $\A_k\sub \P_k$ consists of positions active
    prior to processing $\P_k$. 
    The original argument from the proof of \cref{lem:sync0} still shows that $\sum_{k}|\A_k| \le \frac{9n}{\tau}$
    (this is because $\A_k \sm \Q= \emptyset$ for each $k$), so the running time of the third phase is 
     $\Oh(\sum_k |\P_k| + \tau\sum_{k}|\A_k|)=\Oh(n)$.

    Finally, we build the synchronizing set $\S$ according to \cref{cons:sync} using the sliding-window approach from the proof of \cref{lem:sync0}.
    The only difference is that we have to ignore identifiers $\id[j]$ of positions $j\in \Q$ while computing the sliding-window minima.

    It remains to bounds to size of the set $\S$ constructed this way.
    For this, we shall prove that if $i\in \S$, then $i$ or $i+\tau$ belongs to $\B\cup \bigcup_{k}\A_k$,
    from which we conclude the desired bound: $|\S| \le 2|\B|+ 2\sum_k|\A_k| \le \frac{12n}{\tau}+\frac{18n}{\tau}$.

    Hence, let us consider a position $i\in \S$ and recall that $\min \{\id[j]: j\in [i\dd i+\tau]\sm \Q\} \in\{\id[i],\id[i+\tau]\}$
    by \cref{cons:sync}.
    Since we processed the classes contained in $\B$ first,
    if $[i\dd i+\tau]\cap \B \ne \emptyset$, then the minimum on the left-hand side is attained by $\id[j]$ with $j\in \B$.
    Consequently, $i\in \B$ or $i+\tau\in \B$, consistently with the claim.

    Otherwise, \cref{fct:boundary} yields that $[i\dd i+\tau]\cap \Q = \emptyset$ or $[i\dd i+\tau]\sub \Q$,
    with the latter case infeasible due to $i\in \S$.
    Consequently, $[i\dd i+\tau]\sm \Q=[i\dd i+\tau]$ and we observe
    that $k=\min \id[i\dd i+\tau]$ satisfies $k\in \{\id[i], \id[i+\tau]\}$.
    Hence, we had $\id[j]=\bot$ for $j\in [i\dd i+\tau]$ prior to processing the class $\P_k$.
    This class is disjoint with $\B\cup \Q$ and each of these positions $j$ was active at that point.
    In particular, $i$ and $i+\tau$ were active, so $i\in \A_k$ or $i+\tau\in \A_k$, consistently with the claim.

    This completes the characterization of positions $i\in \S$ resulting in $|\S|\le \frac{30n}{\tau}$. 
   \end{proof}
\end{full}

\subsubsection{Efficient Implementation for Small \texorpdfstring{$\tau$}{Tau}}
\label{sec:arb-small-tau}

We conclude by noting that the procedure of \cref{lem:sync}
can be implemented in $\Oh(\frac{n}{\tau})$ time for $\tau < \eps \log_{\sigma} n$ 
just as we implemented the procedure of \cref{lem:sync0} to prove \cref{lem:sync_fast0}.
The only observation needed to make this seamless adaptation is that we can check
in $\Oh\big(\tau^{\Oh(1)}\big)$ time whether a given position belongs to $\Q$ or $\B$.
In particular, we have sufficient time to perform these two checks for every position contained
in a representative block or its context.

\begin{theorem}\label{lem:sync_fast}
  For every constant $\eps<\frac{1}{5}$, given the packed representation of a
  text $T\in[0\dd\sigma)^n$ and a positive integer $\tau \le \eps \log_\sigma n$, one
  can construct in $\bigO(\frac{n}{\tau})$ time a
  $\tau$-synchronizing set of size $\bigO(\frac{n}{\tau})$.
\end{theorem}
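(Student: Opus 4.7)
The plan is to lift the $\Oh(n)$-time procedure of \cref{lem:sync} to the packed setting by the same block-based approach that \cref{lem:sync_fast0} used to lift \cref{lem:sync0}. The key observation is that every decision made by the procedure of \cref{lem:sync} at a position $i$ — its partition class in $\PP$, its classification into $\Q$ or $\B$, its score during the third phase, and its inclusion in $\S$ — depends only on the substring of $T$ within distance $2\tau$ of $i$. Hence positions lying in blocks with the same length-$\Oh(\tau)$ context are handled identically, and it suffices to process one representative per equivalence class of blocks.

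Concretely, I would partition $[1\dd n-\tau+1]$ into $\Oh(\frac{n}{\tau})$ blocks of $\tau$ consecutive positions, assign to the $b$th block the context $T[1+(b-2)\tau\dd (b+2)\tau]$ of length $4\tau$, and group blocks by their context using \cref{prop:packed} (extract and compare in $\Oh(1)$ time each). Since $\tau\le\eps\log_{\sigma}n$ with $\eps<\tfrac15$, this gives a family $\BB$ of representatives with $|\BB|=\Oh(\sigma^{4\tau})=\Oh(n^{4\eps})$, while the partition $\PP$ itself contains $|\PP|=\Oh(\sigma^\tau)=\Oh(n^\eps)$ classes indexed by length-$\tau$ substrings. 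For every position $j$ inside a representative block or its context, I would precompute, in $\Oh(\tau^{\Oh(1)})$ time: (a) its class in $\PP$ (the substring $T[j\dd j+\tau)$, extracted via \cref{prop:packed}); (b) the flag $[j\in \Q]$, by computing $\per(T[j\dd j+\tau))$ on the packed fragment and comparing it to $\tfrac13\tau$; and (c) the flag $[j\in \B]$, by analogous period checks on $T[j\dd j+\tau-1)$ and $T[j+1\dd j+\tau)$ combined with the $\Q$-flag.

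Now I would emulate the three-phase construction of \cref{lem:sync}: first assign initial identifiers to all classes contained in $\B$, then to classes contained in $\Q$, then to the remaining classes one by one using the non-negative-score rule. Classes are stored indexed by their length-$\tau$ substring, so equivalent positions from different representative blocks contribute to the same class, and each contribution to an aggregate score is weighted by the multiplicity of the underlying context. A single iteration of phase three scans, for every position of every representative block, the $\Oh(\tau)$ surrounding identifiers in $\Oh(\tau^{\Oh(1)})$ time each, costing $\Oh(|\BB|\cdot\tau^{\Oh(1)})$; summed over the $\Oh(|\PP|)$ iterations this yields $\Oh(n^\eps\cdot n^{4\eps}\cdot\tau^{\Oh(1)})=\Oh(n^{5\eps+o(1)})=o(\frac{n}{\tau})$. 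After this, I determine, for each representative block, the positions it contributes to $\S$ by the sliding-window-minimum filter of \cref{cons:sync} restricted to $[i\dd i+\tau]\sm\Q$, again in the same time budget. Finally, I expand these decisions to every equivalent block with an index shift, in $\Oh(|\S|+\frac{n}{\tau})=\Oh(\frac{n}{\tau})$ time using the $|\S|\le\frac{30n}{\tau}$ bound of \cref{lem:sync}.

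The main point to verify is that the length-$4\tau$ context really captures everything relevant. For $i$ in block $b$, every position $j'$ influencing $i$'s score or its inclusion in $\S$ lies in $[i-\tau\dd i+\tau]\sub[(b-2)\tau+1\dd(b+1)\tau]$, and the identifier at such $j'$ is a function of $T[j'\dd j'+\tau)$, which is contained within the context $T[1+(b-2)\tau\dd(b+2)\tau]$. The only slightly subtle case is when $i$ sits near the boundary of an active block of length $\ge\tau$: one must observe that either the nearest defined identifier to the left and to the right of $i$ both lie within $[i-\tau\dd i+\tau]$ (in which case the endpoints of the active block are visible), or else one of them is absent — but then at least $\tau$ consecutive undefined identifiers extend to that side of $i$, so the block is certainly active and $i$ is certainly outside the outermost $\floor{\tau/3}$ positions on that side. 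Hence the score and activity of $i$ are determined by the context, and the lift works exactly as in \cref{lem:sync_fast0}.
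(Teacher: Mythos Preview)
Your proposal is correct and follows essentially the same approach as the paper: lift the three-phase procedure of \cref{lem:sync} to representative blocks with length-$4\tau$ contexts exactly as \cref{lem:sync_fast0} lifts \cref{lem:sync0}, with the single added ingredient that membership in $\Q$ and $\B$ is determinable in $\Oh(\tau^{\Oh(1)})$ time from the local fragment. Your explicit justification that the context suffices to determine activity and score (via the case split on whether a defined identifier occurs within distance~$\tau$ on each side) is a detail the paper leaves implicit, but it is the right argument.
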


\bibliographystyle{plainurl}
\bibliography{stoc2019}

\end{document}